\providecommand{\U}[1]{\protect\rule{.1in}{.1in}}
\providecommand{\U}[1]{\protect\rule{.1in}{.1in}}
\newtheorem{theorem}{Theorem}
\newtheorem{lemma}[theorem]{Lemma}
\newtheorem{remark}[theorem]{Remark}
\begin{document}
\title{Generalized Nonequilibrium Quantum Transport of Spin and Pseudospins:
Entanglements and Topological Phases}
\author{Felix A. Buot${}^{1, 2, 3}$, Karla B. Rivero${}^{2, 4}$, and Roland E. S.
Otadoy${}^{2}$}
\affiliation{${}^{1}$Center for Simulation and Modeling, College of Science, George Mason
University, Fairfax, VA 22030, U.S.A,}
\affiliation{${}^{2}$Laboratory of Computational Functional Materials, Nanoscience and
Nanotechnology, Department of Physics, University of San Carlos - Talamban
Campus, Talamban, Cebu 6000, Philippines}
\affiliation{${}^{3}$C\&LB Research Institute, Carmen, Cebu 6005, Philippines}
\affiliation{${}^{4}$Physics Department, College of Science and Mathematics, }
\affiliation{Western Mindanao State University, Zamboanga City 7000, Philippines}

\begin{abstract}
General nonequilibrium quantum transport equations are derived for a coupled
system of charge carriers, Dirac spin, isospin (or valley spin), and
pseudospin, such as either one of the band, layer, impurity, and boundary
pseudospins. Limiting cases are obtained for one, two or three different kinds
of spin ocurring in a system. We show that a characteristic integer number
$N_{s}$ determines the formal form of spin quantum transport equations,
irrespective of the type of spins or pseudospins, as well as the maximal
entanglement entropy. The results may shed a new perspective on the mechanism
leading to zero modes and chiral/helical edge states in topological
insulators, integer quantum Hall effect topological insulator (QHE-TI),
quantum spin Hall effect topological insulator (QSHE-TI) and Kondo topological
insulator (Kondo-TI). It also shed new light in the observed competing weak
localization and antilocalization in spin-dependent quantum transport
measurements. In particular, a novel mechanism of localization and
delocalization, as well as the new mechanism leading to the onset of
superconductivity in bilayer systems seems to emerge naturally from torque
entanglements in nonequilibrium quantum transport equations of spin and
pseudospins. Moreover, the general results may serve as a foundation for
engineering approximations of the quantum transport simulations of spintronic
devices based on graphene and other 2-D materials such as the transition metal
dichalcogenides (TMDs), silicene, as well as based on topological materials
exhibiting quantum spin Hall effects. The extension of the formalism to
spincaloritronics and pseudo-spincaloritronics is straightforward.

\end{abstract}

\pacs{72.10Bg, M72-25-b, 85.75-d}
\keywords{magnetization quantum transport, spin entanglements, topological insulators,
spintronics, pseudospintronics. \ }\email{felixa.buot@gmail.com}
\maketitle
\tableofcontents
\endpage{ }

\makeatletter
\let\toc@pre\relax
\let\toc@post\relax
\makeatother


\section{Introduction}

The recent history of condensed matter physics has shown that the study of
vortices, cyclotron orbits, spinors, Berry connections (in older form as
Peierls phase factor), Berry curvatures, and Chern numbers have ushered the
incarnation of topology in quantum physics \cite{kost, refa}. Indeed, the
importance of several discrete-two degrees of freedom i.e., spinors and their
entanglements has emerge as ubiquitous in the physics of new and low
dimensional materials, such as graphene, TMDs, silicene, and topological
systems \cite{refa, ref1}. For example, Pauli-Dirac spin, valley spin
(isospin), pseudospin due to low-energy electron-hole symmtery at Dirac points
(electron and hole have the same pseudospin), and bilayer pseudospin have
gained importance in consideration of the spin quantum transport of
two-dimenional materials.

The effect of atomic-layer pseudospin degrees of freedom in bilayer graphene
and TMD materials has been vigorously pursued both theoretically and
experimentally due to exotic properties, namely, exciton condensation and a
new mechanism for the onset of superconductivity \cite{ref1,ref2}. Entangled
electron-hole pairs have also been proposed for monolayer graphene
\cite{ref3}. Indeed, in bilayer materials a new mechansim for the onset of
superconductivity has been an intriguing discovery. This is related to the
onset of exciton condensate due to the entanglement of layer pseudospin with
$\nu=1$ filling of the lowest Landau levels (LLL), i.e., lowest Landau orbit
(LLO) in each of the layers,\cite{ref4,ref5} i.e.with intralayer quantum Hall
effect at $\nu=1$ of LL0. In topological insulators, the Anderson type of
localization of metallic edge states due to interaction with magnetic
impurities causing successive spin flips has also been proposed \cite{ref6}
(in view of our recent findings, this kind of edge-states localization maybe
attributed to the entanglement with impurity spin). Spin Kondo effect at the
metallic edge states has also been treated \cite{ref6a}. In other words, the
edge states have been studied and subjected to the usual treatment of $1$-D
conductors, mostly based on the usual Hamiltonian and exchange interaction context.

In this paper, we propose a rounded picture and a new perspective on the
physics of topological insulators, QHE-TI, QSHE-TI, and Kondo-TI based on
quantum nonlocality as a result of the entanglement of torques induced by the
various spin degrees of freedom. This is inferred from our results of the
generalized nonequilibrium quantum transport equations of spin and
pseudospins, and their entanglements. The present proposal implies a new
mechanism for localization and delocalization based on the series of spin and
pseudospin torque entanglements. The sort of localization being referred to
here is typefied by cyclotron-orbit current localized around orbit center due
to either external magnetic field or intrinsic Berry curvature in strongly
spin-orbit coupled materials. On the other hand, the sort of delocalization
referred to is represented by the metallic edge states in topological
insulators, \cite{ref6} as a result of the \textit{quantum nonlocality}
brought about by spin and pseudospin entanglements, the type of delocalization
responsible for the \textit{resonant quantum tunneling transport} phenomena in
resonant tunneling diodes \cite{bj,jb}.

Here, the point of view in all of these is that of torque entanglement in
materials with multi-spin degrees of freedom. Localization and delocalization
of vortices have also found some treatments in the literature that could be
interpreted as due to torque interactions \cite{ref7,ref8,ref9}. In fact,
these sort of localization and \textit{quantum-nonlocality} delocalization
effects also shed light on the experimentally observed and ubiquitous
phenomena in spin-dependent quantum transport physics. These are the so-called
\textit{weak localization} (WL)\cite{WL} and \textit{weak antilocalization,}
(WAL) \cite{WAL, WALa} often referred to in the literature. By virtue of their
competing effects in nonequilibrium quantum transport, the crossover from WL
to WAL has also been experimentally observed \cite{CWLWAL}.

Our analysis is based on the use of the phase-space formulation of quantum
mechanics.\cite{kafatos, trH} The use of position and momentum variables
should not be construed as employing a semi-classical analysis. The present
analysis is founded on a basic principle, i.e. on the expansion of any quantum
mechanical operator in terms of a complete set of orthogonal basis operators
or\textit{ projectors}.\cite{ref10} In the phase-space formalism, either
discrete \ or continuum, the orthogonal and complete basis operators are the
\textit{phase-space point projectors}, $\hat{\Delta}\left(  p,q\right)  $.
Thus, in \textit{operator Hilbert-space} method\cite{ref10} of quantum theory,
any operator $\hat{A}$ can be expanded as the sum or integral of the basis
operator, $\hat{\Delta}\left(  p,q\right)  $. As an integral expansion we
have,
\[
\hat{A}=\left(  \frac{1}{2\pi\hbar}\right)  ^{d}%
{\displaystyle\int}
d^{d}p^{\prime}d^{d}q^{\prime}A\left(  p^{\prime},q^{\prime}\right)
\hat{\Delta}\left(  p^{\prime},q^{\prime}\right)
\]
where $d$ is the dimensionality and the expansion coefficients are the
$A\left(  p,q\right)  $'s, where each coefficient is given by the trace of the
operator with the \textit{phase-space point projectors}, $\hat{\Delta}\left(
p,q\right)  $,%
\[
A\left(  p,q\right)  =Tr\left[  \hat{\Delta}\left(  p,q\right)  \hat
{A}\right]
\]
The $c$-function $A\left(  p,q\right)  $ is known as the \textit{Lattice Weyl
transform} of the operator $\hat{A}$.\cite{trH} Notice that the
\textit{Lattice Weyl transform }operates on matrices. If $\hat{A}=\hat{\rho}$
(the density matrix operator), then $\rho\left(  p,q\right)  \equiv
f_{w}\left(  p,q\right)  $, which is the Wigner distribution function.

It follows that any commutator or anticommutator of $\hat{A}$ and $\hat{B}$
can be treated similarly in the operator Hilbert-space method\cite{ref10},
\[
\left[  \hat{A},\hat{B}\right]  =\left(  \frac{1}{2\pi\hbar}\right)  ^{d}%
{\displaystyle\int}
d^{d}p^{\prime}d^{d}q^{\prime}\left[  \hat{A},\hat{B}\right]  \left(
p^{\prime},q^{\prime}\right)  \hat{\Delta}\left(  p^{\prime},q^{\prime
}\right)
\]
where the coefficient of expansion is given by the trace of the operator with
$\hat{\Delta}\left(  p,q\right)  $,%
\[
\left[  \hat{A},\hat{B}\right]  \left(  p,q\right)  =Tr\left(  \hat{\Delta
}\left(  p,q\right)  \left[  \hat{A},\hat{B}\right]  \right)
\]
and
\[
\left\{  \hat{A},\hat{B}\right\}  =\left(  \frac{1}{2\pi\hbar}\right)  ^{d}%
{\displaystyle\int}
d^{d}p^{\prime}d^{d}q^{\prime}\left\{  \hat{A},\hat{B}\right\}  \left(
p^{\prime},q^{\prime}\right)  \hat{\Delta}\left(  p^{\prime},q^{\prime
}\right)
\]
where,%
\[
\left\{  \hat{A},\hat{B}\right\}  \left(  p,q\right)  =Tr\left(  \hat{\Delta
}\left(  p,q\right)  \left\{  \hat{A},\hat{B}\right\}  \right)
\]
respectively. Note that no semi-classical approximation is involved or needed
whatsoever in the above phase-space formalism of operator Hilbert-space
method. The exact expression for $\left[  \hat{A},\hat{B}\right]  \left(
p,q\right)  $ and $\left\{  \hat{A},\hat{B}\right\}  \left(  p,q\right)  $ are
given in terms of powers of Poisson braket (equivalent to expansion to all
powers of $\hbar$) or in terms of integral kernels as shown in Sec.\ref{lww}
of the Appendix. It is only when the first power of the Poisson bracket is
used that we have a semi-classical approximation. Indeed, the exact quantum
nonlocality property of the commutator has been rigorously demonstrated in
resonant tunneling diodes, where the theoretical numerical simulation resolves
the controversial experimental quantum transport measurements on these
nanoodevices.\cite{jb} Moreover,the simulation was able to resolve the
discrete energy levels of the quantum wells,\cite{bj} and yields the THz
current oscillations responsible for the current plateau in the dc-current
measurements.\cite{jb} Even in the semi-classical approximation, the quantum
nonlocality property of the commutator is already apparent. However, no
semiclassical approximation is capable of simulating the current-voltage
characteristics of reonant tunneling diodes.\cite{jb}

Thus, in the context of operator Hilbert-space method, the von Neumann
density-matrix evolution equation of quantum statistical dynamics in H-space
becomes a super-statevector quantum dynamical equation in L-space or operator
Hilbert-space. Thus, the familiar von Neumann density-matrix operator equation
in H-space given by
\begin{equation}
i\hbar\frac{\partial}{\partial t}\rho\left(  t\right)  =\left[  \mathcal{H}%
,\rho\right]  \label{rnceq2.1}%
\end{equation}
becomes a super-Schr\"{o}dinger equation for the super-statevector in L-space
expressed as
\begin{equation}
i\hbar\frac{\partial}{\partial t}\left.  \left\vert \rho\left(  t\right)
\right\rangle \right\rangle =\mathcal{L}\left.  \left\vert \rho\left(
t\right)  \right\rangle \right\rangle \text{.} \label{rnceq2.2}%
\end{equation}
In Eq. (\ref{rnceq2.1}), $\rho\left(  t\right)  $ is the density-matrix
operator for the whole many-body system in H-space, whereas in Eq.
(\ref{rnceq2.2}) $\left.  \left\vert \rho\left(  t\right)  \right\rangle
\right\rangle $ is the corresponding super-statevector in L-space. The
superoperator $\mathcal{L}$ corresponds to the commutator $\left[
\mathcal{H},\rho\right]  $ of Eq. (\ref{rnceq2.1}), and is referred to as the
Liouvillian. The corresponding many-body quantum-field operators becomes
quantum superfield operators in L-space. Similar to the zero temperature
Green's function technique, the super-nonequilibrium Green's functions are
calculated in the L-space theory. The details in deriving the nonequilibrium
quantum transport equations including Cooper pairings are given e.g. in one of
the authors book and references therein.\cite{ref10}

On the other hand, without treating the Cooper pairing it is not clear what
information, if any, concerning gapless majorana edge states in $p_{x}+ip_{y}%
$-superconductors, the so-called spin-triplet superconductoror. The reason why
superconductors are the mining field for majoranas is that in general Cooper
pairings, i.e. the presence of anomalous Green's functions, correspond
trivially to paired majorana fermions in non-topological superconductors, but
a nontrivial pairing of $m_{i}=(\Psi_{i}^{\dagger}+\Psi_{i})/2$ and
$m_{j}=(\Psi_{j}^{\dagger}-\Psi_{j})/2i$, with $\left\langle m_{i}%
m_{j}\right\rangle \neq0$ as condensed pair of majorana fermions at
neighboring sites, i.e., $i\neq j$, describes a spinless or ferromagnetic
one-dimensional topological superconductor. This combination is nontrivial
precisely because when $\left\langle m_{i}\right\rangle =0,$ then
$\left\langle m_{j}\right\rangle \neq0$ and \textit{vice versa}, which
indicates the presence of unpaired majorana at site $j$ or $i$ as the case
maybe. This can be induced in a Kitaev wire \cite{kitaev} by proximity effect
with $s$-type superconductor. This calls for use of anomalous Green's function
and ordinary Green's function, say $\frac{1}{4i}\left[  \left\langle \Psi
_{i}^{\dagger}\Psi_{j}^{\dagger}\right\rangle -\left\langle \Psi_{i}\Psi
_{j}\right\rangle -2\left\langle \Psi_{i}^{\dagger}\Psi_{j}\right\rangle
\right]  $, to simulate $\left\langle m_{i}m_{j}\right\rangle $ in our
nonequilibrium quantum transport equations, However, Cooper pairings are all
beyond the scope of the present treatment.

The general outline of this paper is that first we give all the nonequilibrium
quantum transport equations for various number of spin degrees of freedom in a
system. This is followed by a discussion and concluding remarks in Sec.
\ref{discon}. Some of the detailed calculations are relegated to the Appendix.

\section{Quantum Transport Equations}

Our starting point is the general quantum transport expressions for fermions,
where nonequilibrium quantum superfields and their correlations are the basic
variables. These are obtained from the real-time nonequilibrium quantum
superfield theoretical transport formulation of Buot\cite{ref10,
ref11,ref12}:
\begin{align}
i\hbar\left(  \frac{\partial}{\partial t_{1}}+\frac{\partial}{\partial t_{2}%
}\right)  G^{\gtrless}  &  =\left[  \mathcal{H}G^{\gtrless}-G^{\gtrless
}\mathcal{H}\right] \nonumber\\
&  +\left[  \Sigma^{r}G^{\gtrless}-G^{\gtrless}\Sigma^{a}\right]  +\left[
\Sigma^{\gtrless}G^{a}-G^{r}\Sigma^{\gtrless}\right] \nonumber\\
&  +\left[  \Delta_{hh}^{r}g_{ee}^{\gtrless}-g_{hh}^{\gtrless}\Delta_{ee}%
^{a}\right] \nonumber\\
&  +\left[  \Delta_{hh}^{\gtrless}g_{ee}^{a}-g_{hh}^{r}\Delta_{ee}^{\gtrless
}\right]  . \label{grnlesseq}%
\end{align}
The last two brackets account for the Cooper pairings between fermions of the
same specie. These do not concern us in the present paper (their corresponding
transport equations\cite{ref10,ref11} are important in nonequilibrium
superconductivity, where we also need to solve for the nonequilibrium
anomalous Green functions). In what follows, we will drop these last two
square brackets of the RHS of Eq. (\ref{grnlesseq}).

\section{'Cube' Matrix Quantum Transport Equations}

In the absence of Cooper pairing between fermions of the same-specie Eq.
(\ref{grnlesseq}) becomes, by separately writing the discrete spin
quantum-label arguments, essentially as $8\times8$ matrix equations on the
spin and pseudospin indices,%
\begin{align}
&  i\hbar\left(  \frac{\partial}{\partial t_{1}}+\frac{\partial}{\partial
t_{2}}\right)  G_{kk^{\prime}ll^{\prime}mm^{\prime}}^{\lessgtr}\left(
12\right)  =\nonumber\\
&  \left[  \mathcal{H}_{kjl\gamma m\alpha}G_{jk^{\prime}\gamma l^{\prime
}\alpha m^{\prime}}^{\lessgtr}\left(  \bar{2}2\right)  -G_{kjl\gamma m\alpha
}^{\lessgtr}\left(  1\bar{2}\right)  \mathcal{H}_{jk^{\prime}\gamma l^{\prime
}\alpha m^{\prime}}\delta_{\bar{2}2}\right] \nonumber\\
&  +\left[  \Sigma_{kjl\gamma m\alpha}^{r}\left(  1\bar{2}\right)
G_{jk^{\prime}\gamma l^{\prime}\alpha m^{\prime}}^{\lessgtr}\left(  \bar
{2}2\right)  -G_{kjl\gamma m\alpha}^{\lessgtr}\left(  1\bar{2}\right)
\Sigma_{jk^{\prime}\gamma l^{\prime}\alpha m^{\prime}}^{a}\left(  \bar
{2}2\right)  \right] \nonumber\\
&  +\left[  \Sigma_{kjl\gamma m\alpha}^{\lessgtr}\left(  1\bar{2}\right)
G_{jk^{\prime}\gamma l^{\prime}\alpha m^{\prime}}^{a}\left(  \bar{2}2\right)
-G_{kjl\gamma m\alpha}^{r}\left(  1\bar{2}\right)  \Sigma_{jk^{\prime}\gamma
l^{\prime}\alpha m^{\prime}}^{\lessgtr}\left(  \bar{2}2\right)  \right]  ,
\label{multiband1}%
\end{align}
where the Greek subscript indices correspond to discrete degrees of freedom,
namely, Pauli-Dirac spin $\left(  k,k^{\prime}=\left\{  \downarrow
,\uparrow\right\}  \right)  $, valley indices $\left(  l,l^{\prime}=\left\{
K,K^{\prime}\right\}  \right)  $, and layer or band indices as the case maybe
$\left(  e.g.,m,m^{\prime}=\left\{  t,b\right\}  \right)  $. The numeral
indices correspond to the two-point space-time arguments. In what follows, we
will treat either the two-layer model of chiral Dirac fermions $\left(
m,m^{\prime}\right)  $, e.g. the bottom and top layer of graphene and TMDs in
strong magnetic field or other $2$-D materials with strong spin-orbit coupling
but without magnetic fields\cite{ref14b}.

In the absence of Cooper pairing of superconductivity, we may also write the
general transport equation for '$\gtrless$'-quantities as [here $\left\{
A,B\right\}  $ and $\left[  A,B\right]  $ means anticommutator and commutator,
respectively, of operators $A$ and $B$], by dropping all arguments and
discrete indices, as%
\begin{align}
&  i\hbar\left(  \frac{\partial}{\partial t_{1}}+\frac{\partial}{\partial
t_{2}}\right)  G^{\lessgtr}=\nonumber\\
&  \left[  \mathcal{\tilde{H}},G^{\lessgtr}\right]  +\left[  \Sigma^{\lessgtr
},\mathrm{Re}G^{r}\right]  -\frac{i}{2}\left\{  \Gamma,G^{\lessgtr}\right\}
+\frac{i}{2}\left\{  \Sigma^{\lessgtr},A\right\}  , \label{QTE}%
\end{align}
where $\Gamma=-\frac{1}{2}\mathrm{Im}\Sigma^{r}$, is the scattering-out
matrix, $A=-\frac{1}{2}\mathrm{Im}G^{r}$ is the spectral function, and the
single-particle Hamiltonian is given by,
\begin{equation}
\mathcal{\tilde{H}=H}_{o}+\Sigma^{\delta}+\mathrm{Re}\Sigma^{r},
\label{absorbed_in_H}%
\end{equation}
where $\Sigma^{\delta}$ and $\mathrm{Re}\Sigma^{r}$ correspond to the
renormalization of the bands or mass terms induced by the
self-energy\cite{ref10,ref11}. The appearance of the imaginary $i$ in Eq.
(\ref{QTE}) in the anticommutator is commensurate when viewed in their
gradient expansions in terms of Poisson bracket differential operator wherein
a commutator of Eq. (\ref{QTE}) has a factor $i$ whereas the anticommutator
does not have. As is well-known, in the classical limit, the gradient
expansion leads to the Boltzman kinetic transport equation\cite{ref12}.

\section{\label{canonical}Canonical Spinor Form of a 'Cube' Matrix}

Consider $G_{k,k^{\prime}l,l^{\prime}m,m^{\prime}}^{<}$, where the indices
denote the elements of the $8\times8$ matrix. Here $k,k^{\prime}$ denote the
spin indices, $l,l^{\prime}$ denote the isospin or valley indices, and
$m,m^{\prime}$ denote the pseudospin of either band or bilayer indices. We
will now demonstrate how to reduce the $64$ variables of the $8\times8$ matrix
to just $8$ tensor variables, which include pure scalars or total charge. This
is reminiscent to a decomposition, in the absence of dissipation, of the
$SU\left(  8\right)  $ into a direct product of $SU\left(  2\right)  \otimes
SU\left(  2\right)  \otimes SU\left(  2\right)  $, i.e., in order to keep
track of the spin-like degrees of freedom. In what follows, we make constant
use of the theorem of Sec. \ref{product} in the Appendix.

\begin{description}
\item[A] \textit{First Stage:} reduction of Pauli-Dirac spin indices to scalar
and vector or Pauli-Dirac spin components, we have, where $x,y,z$ denote
components of vector quantities,%
\[
G_{k,k^{\prime}l,l^{\prime}m,m^{\prime}}^{<}=\frac{1}{2}\left(
\begin{array}
[c]{cc}%
G_{o,l,l^{\prime}m,m^{\prime}}^{<}+G_{z,l,l^{\prime}m,m^{\prime}}^{<} &
G_{x,l,l^{\prime}m,m^{\prime}}^{<}-iG_{y,l,l^{\prime}m,m^{\prime}}^{<}\\
G_{x,l,l^{\prime}m,m^{\prime}}^{<}+iG_{y,l,l^{\prime}m,m^{\prime}}^{<} &
G_{o,l,l^{\prime}m,m^{\prime}}^{<}-G_{z,l,l^{\prime}m,m^{\prime}}^{<}%
\end{array}
\right)
\]

\item[B] \textit{Second stage}: reduction of the remaining valley indices to
scalar and vector or isospin components.%
\[
\frac{1}{2}G_{o,l,l^{\prime}m,m^{\prime}}^{<}=\frac{1}{4}\left(
\begin{array}
[c]{cc}%
G_{o,o,m,m^{\prime}}^{<}+G_{o,z,m,m^{\prime}}^{<} & G_{o,x,m,m^{\prime}}%
^{<}-iG_{o,y,m,m^{\prime}}^{<}\\
G_{o,x,m,m^{\prime}}^{<}+iG_{o,y,m,m^{\prime}}^{<} & G_{o,o,m,m^{\prime}}%
^{<}-G_{o,z,m,m^{\prime}}^{<}%
\end{array}
\right)
\]%
\[
\frac{1}{2}G_{k,l.l^{\prime},m,m^{\prime}}^{<}=\frac{1}{4}\left(
\begin{array}
[c]{cc}%
G_{k,o,m,m^{\prime}}^{<}+G_{k,z,m,m^{\prime}}^{<} & G_{k,x,m,m^{\prime}}%
^{<}-iG_{k,y,m,m^{\prime}}^{<}\\
G_{k,x,m,m^{\prime}}^{<}+iG_{o,y,m,m^{\prime}}^{<} & G_{k,o,m,m^{\prime}}%
^{<}-G_{o,z,m,m^{\prime}}^{<}%
\end{array}
\right)
\]

\item[C] \textit{Third and final stage:} reduction of the last band or layer
indices to scalar and vector or pseudospin components.
\end{description}

\[
\frac{1}{4}G_{o,o,m,m^{\prime}}^{<}=\left(  \frac{1}{2}\right)  ^{3}\left(
\begin{array}
[c]{cc}%
G_{o,o,o}^{<}+G_{o,o,z}^{<} & G_{o,o,x}^{<}-iG_{o,o,y}^{<}\\
G_{o,o,x}^{<}+iG_{o,o,y}^{<} & G_{o,o,o}^{<}-G_{o,o,z}^{<}%
\end{array}
\right)
\]%
\[
\frac{1}{4}G_{o,l,m,m^{\prime}}^{<}=\left(  \frac{1}{2}\right)  ^{3}\left(
\begin{array}
[c]{cc}%
G_{o,l,o}^{<}+G_{o,l,z}^{<} & G_{o,l,x}^{<}-iG_{o,l,y}^{<}\\
G_{o,l,x}^{<}+iG_{o,l,y}^{<} & G_{o,l,o}^{<}-G_{o,l,z}^{<}%
\end{array}
\right)
\]%
\[
\frac{1}{4}G_{k,o,m,m^{\prime}}^{<}=\left(  \frac{1}{2}\right)  ^{3}\left(
\begin{array}
[c]{cc}%
G_{k,o,o}^{<}+G_{k,o,z}^{<} & G_{k,o,x}^{<}-iG_{k,o,y}^{<}\\
G_{k,o,x}^{<}+iG_{k,o,y}^{<} & G_{k,o,o}^{<}-G_{k,o,z}^{<}%
\end{array}
\right)
\]%
\[
\frac{1}{4}G_{k,l^{\prime},m,m^{\prime}}^{<}=\left(  \frac{1}{2}\right)
^{3}\left(
\begin{array}
[c]{cc}%
G_{k,l^{\prime},o}^{<}+G_{k,l^{\prime},z}^{<} & G_{k,l^{\prime},x}%
^{<}-iG_{k,l^{\prime},y}^{<}\\
G_{k,l^{\prime},x}^{<}+iG_{k,l^{\prime},y}^{<} & G_{k,l^{\prime},o}%
^{<}-G_{k,l^{\prime},z}^{<}%
\end{array}
\right)
\]
Collecting results of the third and final reduction stage, we finally
transformed $G_{k,k^{\prime}l,l^{\prime}m,m^{\prime}}^{<}$ in to its spinor
form as%
\begin{align}
&  \left(  2\right)  ^{3}G_{k,k^{\prime}l,l^{\prime}m,m^{\prime}}%
^{<}\nonumber\\
^{spinor}  &  \Longrightarrow\left\{
\begin{array}
[c]{c}%
\left(
\begin{array}
[c]{cc}%
G_{o,o,o}^{<}+G_{o,o,z}^{<} & G_{o,o,x}^{<}-iG_{o,o,y}^{<}\\
G_{o,o,x}^{<}+iG_{o,o,y}^{<} & G_{o,o,o}^{<}-G_{o,o,z}^{<}%
\end{array}
\right) \\
+\left(
\begin{array}
[c]{cc}%
G_{o,l,o}^{<}+G_{o,l,z}^{<} & G_{o,l,x}^{<}-iG_{o,l,y}^{<}\\
G_{o,l,x}^{<}+iG_{o,l,y}^{<} & G_{o,l,o}^{<}-G_{o,l,z}^{<}%
\end{array}
\right) \\
+\left(
\begin{array}
[c]{cc}%
G_{k,o,o}^{<}+G_{k,o,z}^{<} & G_{k,o,x}^{<}-iG_{k,o,y}^{<}\\
G_{k,o,x}^{<}+iG_{k,o,y}^{<} & G_{k,o,o}^{<}-G_{k,o,z}^{<}%
\end{array}
\right) \\
+\left(
\begin{array}
[c]{cc}%
G_{k,l^{\prime},o}^{<}+G_{k,l^{\prime},z}^{<} & G_{k,l^{\prime},x}%
^{<}-iG_{k,l^{\prime},y}^{<}\\
G_{k,l^{\prime},x}^{<}+iG_{k,l^{\prime},y}^{<} & G_{k,l^{\prime},o}%
^{<}-G_{k,l^{\prime},z}^{<}%
\end{array}
\right)
\end{array}
\right\}  \label{spinorTansformedG}%
\end{align}

Thus, from Eq. (\ref{spinorTansformedG}), we end up with the eight tensor
transport variables, similar to the number of distinct configurations of three
spin-qubits, which include joint distributions or spin-torque
entanglements,\cite{ref13} defined in the Table 1,

\begin{table}
[h!]

\caption{The Eight Transport Variables} \label{Table}
\begin{tabular}
[c]{|l|l|}\hline
$G_{0, 0, 0}^{<}\text{ }$ & \ \ complete scalar, i.e., particle number density
or total charge density\\\hline
$G_{k, 0, 0}^{<}\text{ }$ & \ \ Pauli-Dirac spin magnetization density\\\hline
$G_{0, l, 0}^{<}\text{ }$ & \ \ isospin magnetization density\\\hline
$G_{0, 0, m}^{<}\text{ }$ & \ \ pseudospin magnetization density\\\hline
$G_{k, l, 0}^{<}\text{ }$ & \ \ entangled or joint Pauli-Dirac spin and
isospin magnetization density\\\hline
$G_{k, 0, m}^{<}\text{ }$ & \ \ entangled or joint Pauli-Dirac spin and
pseudospin magnetization density\\\hline
$G_{0, l, m}^{<}\text{ }$ & \ \ entangled or joint isospin-pseudospin
magnetization density\\\hline
$G_{k, l, m}^{<}\text{ }$ & \ \ entangled or joint Pauli-Dirac spin, isospin,
pseudospin magnetization density\\\hline
\end{tabular}

\end{table}

Note that in Table $1$ the $8$ tensor variables, including the total charge,
are mapped to the configurations of three qubits as exhibited by the subcripts
of the nonequilibrium tensorial Green's function, $G^{<}$.

\section{Quantum Transport Equations for a 'Cube' Matrix $G^{\gtrless}$,
$N_{s}=3$}

By treating all $SU\left(  2\right)  $ indices on equal footing, we evaluate
the canonical form of every matrix involved in all matrix products as a series
of product of $2\times2$ matrices: $\left(  kj\right)  \left(  jk^{\prime
}\right)  ,\left(  l\gamma\right)  \left(  \gamma l^{\prime}\right)  ,\left(
m\alpha\right)  \left(  \alpha m^{\prime}\right)  $, as was done in
Sec.\ref{canonical}. We finally end up with the eight coupled quantum
transport equations for $G_{o,o,o}^{<}$, $G_{k,o,o}^{<}$, $G_{o,l,o}^{<}$,
$G_{o,o,m}^{<}$, $G_{k,l,o}^{<}$, $G_{k,o,m}^{<}$, $G_{o,l,m}^{<}$, and
$G_{k,l,m}^{<}$ defined in Table 1. The technique followed in the derivation
is explained in the Sec. \ref{product} - \ref{tensor} of the Appendix. Note
that in line with the four terms in the right-hand-side (RHS) of Eq.
(\ref{QTE}), the RHS of each of the eight coupled transport equations
correspondingly consist of four groups. We have generalized the Hamiltonian
spinor to account for either of the magnetic field and/or Dresselhaus and/or
Rashba spin-orbit coupling \cite{dres-rash}.

Guided by the results of Sec.\ref{prodcubematrx}, we obtain the nonequilibrium
quantum transport equations of a system with three spin degrees of freedom.
For example, we have for the total scalar,%

\begin{align}
&  2^{N_{s}}i\hbar\left(  \frac{\partial}{\partial t_{1}}+\frac{\partial
}{\partial t_{2}}\right)  G_{o,o,o}^{\gtrless}=\nonumber\\
&  =\left[  \mathcal{\tilde{H}},G^{\lessgtr}\right]  _{o,o,o}+\left[
\Sigma^{\lessgtr},\mathrm{Re}G^{r}\right]  _{o,o,o}-\frac{i}{2}\left\{
\Gamma,G^{\lessgtr}\right\}  _{o,o,o}+\frac{i}{2}\left\{  \Sigma^{\lessgtr
},A\right\}  _{o,o,o} \label{QT1}%
\end{align}
where%
\begin{align}
&  \left[  \mathcal{\tilde{H}},G^{\lessgtr}\right]  _{o,o,o}=\nonumber\\
&  =%
\begin{array}
[c]{c}%
\left[  H_{o,o,o},G_{o,o,o}^{\gtrless}\right]  +\left[  H_{o,o,m}%
,G_{o,o,m}^{\gtrless}\right]  +\left[  H_{o,l,o},G_{o,l,o}^{\gtrless}\right]
+\left[  H_{o,l,m},G_{o,l,m}^{\gtrless}\right] \\
+\left[  H_{k,o,o\ },G_{k,o,o}^{\gtrless}\right]  +\left[  H_{k,o,m}%
,G_{k,o,m}^{\lessgtr}\right]  +\left[  H_{k,l,o},G_{k,l,o}^{\gtrless}\right]
+\left[  H_{k,l,m},G_{k,l,m}^{\gtrless}\right]
\end{array}
\nonumber\\
&  \label{QT1-1}%
\end{align}%
\begin{align}
&  \left[  \Sigma^{\lessgtr},\mathrm{Re}G^{r}\right]  _{o,o,o}=\nonumber\\
&  =%
\begin{array}
[c]{c}%
\left[  \Sigma_{o,o,o}^{\gtrless},\mathrm{Re}G_{o,o,o}^{r}\right]  +\left[
\Sigma_{o,o,m}^{\lessgtr},\mathrm{Re}G_{o,o,m}^{r}\right]  +\left[
\Sigma_{o,l,o}^{\gtrless},\mathrm{Re}G_{o,l,o}^{r}\right]  +\left[
\Sigma_{o,l,m}^{\gtrless},\mathrm{Re}G_{o,l,m}^{r}\right] \\
+\left[  \Sigma_{k,o,o\ }^{\lessgtr},\mathrm{Re}G_{k,o,o}^{r}\right]  +\left[
\Sigma_{k,o,m}^{\gtrless},\mathrm{Re}G_{k,o,m}^{r}\right]  +\left[
\Sigma_{k,l,o}^{\lessgtr},\mathrm{Re}G_{k,l,o}^{r}\right]  +\left[
\Sigma_{k,l,m}^{\lessgtr},\mathrm{Re}G_{k,l,m}^{r}\right]
\end{array}
\nonumber\\
&  \label{QT1-2}%
\end{align}%
\begin{align}
&  \left\{  \Gamma,G^{\lessgtr}\right\}  _{o,o,o}=\nonumber\\
&  =%
\begin{array}
[c]{c}%
\left\{  \Gamma_{o,o,o},G_{o,o,o}^{\gtrless}\right\}  +\left\{  \Gamma
_{o,o,m},G_{o,o,m}^{\gtrless}\right\}  +\left\{  \Gamma_{o,l,o},G_{o,l,o}%
^{\gtrless}\right\}  +\left\{  \Gamma_{o,l,m},G_{o,l,m}^{\gtrless}\right\} \\
+\left\{  \Gamma_{k,o,o\ },G_{k,o,o}^{\gtrless}\right\}  +\left\{
\Gamma_{k,o,m},G_{k,o,m}^{\lessgtr}\right\}  +\left\{  \Gamma_{k,l,o}%
,G_{k,l,o}^{\gtrless}\right\}  +\left\{  \Gamma_{k,l,m},G_{k,l,m}^{\gtrless
}\right\}
\end{array}
\nonumber\\
&  \label{QT1-3}%
\end{align}%
\begin{align}
&  \left\{  \Sigma^{\lessgtr},A\right\}  _{o,o,o}=\nonumber\\
&  =\left\{  \Sigma_{o,o,o}^{\gtrless},A_{o,o,o}\right\}  +\left\{
\Sigma_{o,o,m}^{\lessgtr},A_{o,o,m}\right\}  +\left\{  \Sigma_{o,l,o}%
^{\gtrless},A_{o,l,o}\right\}  +\left\{  \Sigma_{o,l,m}^{\gtrless}%
,A_{o,l,m}\right\} \nonumber\\
&  +\left\{  \Sigma_{k,o,o\ }^{\lessgtr},A_{k,o,o}\right\}  +\left\{
\Sigma_{k,o,m}^{\gtrless},A_{k,o,m}\right\}  +\left\{  \Sigma_{k,l,o}%
^{\lessgtr},A_{k,l,o}\right\}  +\left\{  \Sigma_{k,l,m}^{\lessgtr}%
,A_{k,l,m}\right\}  \label{QT1-4}%
\end{align}
where repeated vector-component indices, corresponding to dot products, are
summed over following the Einstein summation convention, and $N_{s}$ is the
integer number of spin-like degrees of freedom. We have Pauli-Dirac spin,
valley spin (or isospin) and pseudospin degrees of freedom acting on the
systems. Here in Eqs. (\ref{QT1}) - (\ref{QT1-4}), we have $N_{s}=3$. The rest
of the equations for $G_{k,o,o}^{<}$, $G_{o,l,o}^{\lessgtr}$, $G_{o,o,m}%
^{\lessgtr}$, $G_{k,l,o}^{\lessgtr}$, $G_{k,o,m}^{\lessgtr}$, $G_{o,l,m}%
^{\lessgtr}$, and $G_{k,l,m}^{\lessgtr}$ are given in Sec. \ref{qtens3} of the Appendix.

\section{Limiting Case of Two Spin Degrees of Freedom, $N_{s}=2$}

To obtain the nonequilibrium quantum transport equations for $N_{s}=2$ from
the above quantum transport equations for $N_{s}=3$, one simply deletes one of
the three indices. For example, for the case of valley spin and pseudospin
degrees of freedom, we obtain%

\begin{align}
&  2^{N_{s}}i\hbar\left(  \frac{\partial}{\partial t_{1}}+\frac{\partial
}{\partial t_{2}}\right)  G_{o,o}^{\gtrless}=\nonumber\\
=  &  \left[  \mathcal{\tilde{H}},G^{\lessgtr}\right]  _{o,o}+\left[
\Sigma^{\lessgtr},\mathrm{Re}G^{r}\right]  _{o,o}-\frac{i}{2}\left\{
\Gamma,G^{\lessgtr}\right\}  _{o,o}+\frac{i}{2}\left\{  \Sigma^{\lessgtr
},A\right\}  _{o,o} \label{2spin1}%
\end{align}
where,%
\begin{align}
&  \left[  \mathcal{\tilde{H}},G^{\lessgtr}\right]  _{o,o}=\nonumber\\
=  &  \left[  H_{o,o},G_{o,o}^{\gtrless}\right]  +\left[  H_{o,m}%
,G_{o,m}^{\gtrless}\right]  +\left[  H_{l,o},G_{l,o}^{\gtrless}\right]
+\left[  H_{l,m},G_{l,m}^{\gtrless}\right]  \label{2spin1-1}%
\end{align}

\begin{align}
&  \left[  \Sigma^{\lessgtr},\mathrm{Re}G^{r}\right]  _{o,o}=\nonumber\\
&  \left[  \Sigma_{o,o}^{\gtrless},\mathrm{Re}G_{o,o}^{r}\right]  +\left[
\Sigma_{o,m}^{\lessgtr},\mathrm{Re}G_{o,m}^{r}\right]  +\left[  \Sigma
_{l,o}^{\gtrless},\mathrm{Re}G_{l,o}^{r}\right]  +\left[  \Sigma
_{l,m}^{\gtrless},\mathrm{Re}G_{l,m}^{r}\right] \nonumber\\
&  \label{2spin1-2}%
\end{align}%
\begin{align}
&  \left\{  \Gamma,G^{\lessgtr}\right\}  _{o,o}=\nonumber\\
=  &  \left\{  \Gamma_{o,o},G_{o,o}^{\gtrless}\right\}  +\left\{  \Gamma
_{o,m},G_{o,m}^{\gtrless}\right\}  +\left\{  \Gamma_{l,o},G_{l,o}^{\gtrless
}\right\}  +\left\{  \Gamma_{l,m},G_{l,m}^{\gtrless}\right\}  \label{2spin1-3}%
\end{align}%
\begin{align}
&  \left\{  \Sigma^{\lessgtr},A\right\}  _{o,o}=\nonumber\\
&  =\left\{  \Sigma_{o,o}^{\gtrless},A_{o,o}\right\}  +\left\{  \Sigma
_{o,m}^{\lessgtr},A_{o,m}\right\}  +\left\{  \Sigma_{l,o}^{\gtrless}%
,A_{l,o}\right\}  +\left\{  \Sigma_{l,m}^{\gtrless},A_{l,m}\right\}
\label{2spin1-4}%
\end{align}

\begin{align}
&  2^{N_{s}}i\hbar\left(  \frac{\partial}{\partial t_{1}}+\frac{\partial
}{\partial t_{2}}\right)  G_{l,o}^{<}=\nonumber\\
&  =\left[  \mathcal{\tilde{H}},G^{\lessgtr}\right]  _{l,o}+\left[
\Sigma^{\lessgtr},\mathrm{Re}G^{r}\right]  _{l,o}-\frac{i}{2}\left\{
\Gamma,G^{\lessgtr}\right\}  _{l,o}+\frac{i}{2}\left\{  \Sigma^{\lessgtr
},A\right\}  _{l,o} \label{2spin2}%
\end{align}
where,%
\begin{align}
&  \left[  \mathcal{\tilde{H}},G^{\lessgtr}\right]  _{l,o}=\nonumber\\
&  =%
\begin{array}
[c]{c}%
\left[  H_{o,o},G_{l,o}^{\gtrless}\right]  +\left[  H_{o,m},G_{l,m}^{\gtrless
}\right]  +i\epsilon_{ll_{1}l_{2}}\left\{  H_{l_{1},o},G_{l_{2},o}^{\gtrless
}\right\} \\
+\left[  H_{l,o},G_{o,o}^{\lessgtr}\right]  +\left[  H_{l,m},G_{o,m}%
^{\gtrless}\right]  +i\epsilon_{ll_{1}l_{2}}\left\{  H_{l_{1},m},G_{l_{2}%
,m}^{\gtrless}\right\}
\end{array}
\label{2spin2-1}%
\end{align}%
\begin{align}
&  \left[  \Sigma^{\lessgtr},\mathrm{Re}G^{r}\right]  _{l,o}=\nonumber\\
=  &
\begin{array}
[c]{c}%
+\left[  \Sigma_{o,o}^{\gtrless},\mathrm{Re}G_{l,o}^{r}\right]  +\left[
\Sigma_{o,m}^{\gtrless},\mathrm{Re}G_{l,m}^{r}\right]  +i\epsilon_{ll_{1}%
l_{2}}\left\{  \Sigma_{l_{1},o}^{\gtrless},\mathrm{Re}G_{l_{2},o}^{r}\right\}
\\
+\left[  \Sigma_{l,o}^{\gtrless},\mathrm{Re}G_{o,o}^{r}\right]  +\left[
\Sigma_{l,m}^{\gtrless},\mathrm{Re}G_{o,m}^{r}\right]  +i\epsilon_{ll_{1}%
l_{2}}\left\{  \Sigma_{l_{1},m}^{\gtrless},\mathrm{Re}G_{l_{2},m}^{r}\right\}
\end{array}
\label{2spin2-2}%
\end{align}%
\begin{align}
&  \left\{  \Gamma,G^{\lessgtr}\right\}  _{l,o}=\nonumber\\
=  &
\begin{array}
[c]{c}%
\left\{  \Gamma_{o,o},G_{l,o}^{\gtrless}\right\}  +\left\{  \Gamma
_{o,m},G_{l,m}^{\gtrless}\right\}  +i\epsilon_{ll_{1}l_{2}}\left[
\Gamma_{l_{1},o},G_{l_{2},o}^{\gtrless}\right] \\
+\left\{  \Gamma_{l,o},G_{o,o}^{\lessgtr}\right\}  +\left\{  \Gamma
_{l,m},G_{o,m}^{\gtrless}\right\}  +i\epsilon_{ll_{1}l_{2}}\left[
\Gamma_{l_{1},m},G_{l_{2},m}^{\gtrless}\right]
\end{array}
\label{2spin2-3}%
\end{align}%
\begin{align}
&  \left\{  \Sigma^{\lessgtr},A\right\}  _{l,o}=\nonumber\\
=  &
\begin{array}
[c]{c}%
\left\{  \Sigma_{o,o}^{\gtrless},A_{l,o}\right\}  +\left\{  \Sigma
_{o,m}^{\gtrless},A_{l,m}\right\}  +i\epsilon_{ll_{1}l_{2}}\left[
\Sigma_{l_{1},o}^{\gtrless},A_{l_{2},o}\right] \\
+\left\{  \Sigma_{l,o}^{\gtrless},A_{o,o}\right\}  +\left\{  \Sigma
_{l,m}^{\gtrless},A_{o,m}\right\}  +i\epsilon_{ll_{1}l_{2}}\left[
\Sigma_{l_{1},m}^{\gtrless},A_{l_{2},m}\right]
\end{array}
\label{2spin2-4}%
\end{align}

\begin{align}
&  2^{N_{s}}i\hbar\left(  \frac{\partial}{\partial t_{1}}+\frac{\partial
}{\partial t_{2}}\right)  G_{o,m}^{\gtrless}=\nonumber\\
&  =\left[  \mathcal{\tilde{H}},G^{\lessgtr}\right]  _{o,m}+\left[
\Sigma^{\lessgtr},\mathrm{Re}G^{r}\right]  _{o,m}-\frac{i}{2}\left\{
\Gamma,G^{\lessgtr}\right\}  _{o,m}+\frac{i}{2}\left\{  \Sigma^{\lessgtr
},A\right\}  _{o,m} \label{2spin3}%
\end{align}
where,%
\begin{align}
&  \left[  \mathcal{\tilde{H}},G^{\lessgtr}\right]  _{o,m}=\nonumber\\
&  =%
\begin{array}
[c]{c}%
\left[  H_{o,o},G_{o,m}^{\gtrless}\right]  +\left[  H_{o,m},G_{o,o}^{\gtrless
}\right]  +i\epsilon_{mm_{1}m_{2}}\left\{  H_{o,m_{1}},G_{o,m_{2}}^{\gtrless
}\right\} \\
+\left[  H_{l,o},G_{l,m}^{\gtrless}\right]  +\left[  H_{l,m},G_{l,o}%
^{\gtrless}\right]  +i\epsilon_{mm_{1}m_{2}}\left\{  H_{l,m_{1}},G_{l,m_{2}%
}^{\gtrless}\right\}
\end{array}
\label{2spin3-1}%
\end{align}%
\begin{align}
&  \left[  \Sigma^{\lessgtr},\mathrm{Re}G^{r}\right]  _{o,m}=\nonumber\\
&
\begin{array}
[c]{c}%
+\left[  \Sigma_{o,o}^{\lessgtr},\mathrm{Re}G_{o,m}^{r}\right]  +\left[
\Sigma_{o,m}^{\lessgtr},\mathrm{Re}G_{o,o}^{r}\right]  +i\epsilon_{mm_{1}%
m_{2}}\left\{  \Sigma_{o,m_{1}}^{\lessgtr},\mathrm{Re}G_{o,m_{2}}^{r}\right\}
\\
+\left[  \Sigma_{l,o}^{\lessgtr},\mathrm{Re}G_{l,m}^{r}\right]  +\left[
\Sigma_{l,m}^{\lessgtr},\mathrm{Re}G_{l,o}^{r}\right]  +i\epsilon_{mm_{1}%
m_{2}}\left\{  \Sigma_{l,m_{1}}^{\lessgtr},\mathrm{Re}G_{l,m_{2}}^{r}\right\}
\end{array}
\nonumber\\
&  \label{2spin3-2}%
\end{align}%
\begin{align}
&  \left\{  \Gamma,G^{\lessgtr}\right\}  _{o,m}=\nonumber\\
&  =\left\{
\begin{array}
[c]{c}%
\left\{  \Gamma_{o,o},G_{o,o,m}^{\gtrless}\right\}  +\left\{  \Gamma
_{o,m},G_{o,o,o}^{\gtrless}\right\}  +i\epsilon_{mm_{1}m_{2}}\left[
\Gamma_{o,m_{1}},G_{o,m_{2}}^{\gtrless}\right] \\
+\left\{  \Gamma_{l,o},G_{,l,m}^{\gtrless}\right\}  +\left\{  \Gamma
_{l,m},G_{l,o}^{\gtrless}\right\}  +i\epsilon_{mm_{1}m_{2}}\left[
\Gamma_{l,m_{1}},G_{l,m_{2}}^{\gtrless}\right]
\end{array}
\right\}  \label{2spin3-3}%
\end{align}%
\begin{align}
&  \left\{  \Sigma^{\lessgtr},A\right\}  _{o,m}=\nonumber\\
=  &  \left\{
\begin{array}
[c]{c}%
\left\{  \Sigma_{o,o}^{\lessgtr},A_{o,o,m}\right\}  +\left\{  \Sigma
_{o,m}^{\lessgtr},A_{o,o}\right\}  +i\epsilon_{mm_{1}m_{2}}\left[
\Sigma_{o,m_{1}}^{\lessgtr},A_{o,m_{2}}\right] \\
+\left\{  \Sigma_{l,o}^{\lessgtr},A_{l,m}\right\}  +\left\{  \Sigma
_{l,m}^{\lessgtr},A_{l,o}\right\}  +i\epsilon_{mm_{1}m_{2}}\left[
\Sigma_{l,m_{1}}^{\lessgtr},A_{l,m_{2}}\right]
\end{array}
\right\}  \label{2spin3-4}%
\end{align}

\begin{align}
&  2^{N_{s}}i\hbar\left(  \frac{\partial}{\partial t_{1}}+\frac{\partial
}{\partial t_{2}}\right)  G_{l,m}^{<}=\nonumber\\
&  =\left[  \mathcal{\tilde{H}},G^{\lessgtr}\right]  _{l,m}+\left[
\Sigma^{\lessgtr},\mathrm{Re}G^{r}\right]  _{l,m}-\frac{i}{2}\left\{
\Gamma,G^{\lessgtr}\right\}  _{l,m}+\frac{i}{2}\left\{  \Sigma^{\lessgtr
},A\right\}  _{l,m} \label{2spin4}%
\end{align}
where,%
\begin{align}
&  \left[  \mathcal{\tilde{H}},G^{\lessgtr}\right]  _{l,m}=\nonumber\\
&  =%
\begin{array}
[c]{c}%
\begin{array}
[c]{c}%
\left[  H_{o,o},G_{l,m}^{\lessgtr}\right]  +\left[  H_{o,m},G_{l,o}^{\lessgtr
}\right]  +\left[  H_{l,o},G_{o,m}^{\lessgtr}\right]  +\left[  H_{l,m}%
,G_{o,o}^{\lessgtr}\right] \\
+i\epsilon_{mm_{1}m_{2}}\left\{  H_{o,m_{1}},G_{l,m_{2}}^{\lessgtr}\right\}
+i\epsilon_{mm_{1}m_{2}}\left\{  H_{l,m_{1}},G_{o,m_{2}}^{\lessgtr}\right\} \\
+i\epsilon_{ll_{1}l_{2}}\left\{  H_{l_{1},o},G_{l_{2},m}^{\lessgtr}\right\}
+i\epsilon_{ll_{1}l_{2}}\left\{  H_{l_{1},m},G_{l_{2},o}^{\lessgtr}\right\}
\end{array}
\\
+\frac{1}{2}i\epsilon_{ll_{1}l_{2}}i\epsilon_{mm_{1}m_{2}}\left[
H_{l_{1},m_{1}},G_{l_{2},m_{2}}^{\lessgtr}\right]  +\frac{1}{2}i\epsilon
_{ll_{1}l_{2}}i\epsilon_{mm_{1}m_{2}}\left[  H_{l_{1},m_{1}},G_{l_{2},m_{2}%
}^{\lessgtr}\right]
\end{array}
\label{2spin4-1}%
\end{align}%
\begin{align}
&  \left[  \Sigma^{\lessgtr},\mathrm{Re}G^{r}\right]  _{l,m}=\nonumber\\
&  =%
\begin{array}
[c]{c}%
\begin{array}
[c]{c}%
\left[  \Sigma_{o,o}^{\lessgtr},\mathrm{Re}G_{l,m}^{r}\right]  +\left[
\Sigma_{o,m}^{\lessgtr},\mathrm{Re}G_{l,o}^{r}\right]  +\left[  \Sigma
_{l,o}^{\lessgtr},\mathrm{Re}G_{o,m}^{r}\right]  +\left[  \Sigma
_{l,m}^{\lessgtr},\mathrm{Re}G_{o,o}^{r}\right] \\
+i\epsilon_{mm_{1}m_{2}}\left\{  \Sigma_{o,m_{1}}^{\lessgtr},\mathrm{Re}%
G_{l,m_{2}}^{r}\right\}  +i\epsilon_{mm_{1}m_{2}}\left\{  \Sigma_{l,m_{1}%
}^{\lessgtr},\mathrm{Re}G_{o,m_{2}}^{r}\right\} \\
+i\epsilon_{ll_{1}l_{2}}\left\{  \Sigma_{l_{1},o}^{\lessgtr},\mathrm{Re}%
G_{l_{2},m}^{r}\right\}  +i\epsilon_{ll_{1}l_{2}}\left\{  \Sigma_{l_{1}%
,m}^{\lessgtr},\mathrm{Re}G_{l_{2},o}^{r}\right\}
\end{array}
\\
+\frac{1}{2}i\epsilon_{ll_{1}l_{2}}i\epsilon_{mm_{1}m_{2}}\left[
\Sigma_{l_{1},m_{1}}^{\lessgtr},\mathrm{Re}G_{l_{2},m_{2}}^{r}\right]
+\frac{1}{2}i\epsilon_{ll_{1}l_{2}}i\epsilon_{mm_{1}m_{2}}\left[
\Sigma_{l_{1},m_{1}}^{\lessgtr},\mathrm{Re}G_{l_{2},m_{2}}^{r}\right]
\end{array}
\label{2spin4-2}%
\end{align}%
\begin{align}
&  \left\{  \Gamma,G^{\lessgtr}\right\}  _{l,m}=\nonumber\\
=  &
\begin{array}
[c]{c}%
\begin{array}
[c]{c}%
\left\{  \Gamma_{o,o},G_{l,m}^{\lessgtr}\right\}  +\left\{  \Gamma
_{o,m},G_{l,o}^{\lessgtr}\right\}  \left\{  \Gamma_{l,o},G_{o,m}^{\lessgtr
}\right\}  +\left\{  \Gamma_{l,m},G_{o,o}^{\lessgtr}\right\} \\
+i\epsilon_{mm_{1}m_{2}}\left[  \Gamma_{o,m_{1}},G_{l,m_{2}}^{\lessgtr
}\right]  +i\epsilon_{mm_{1}m_{2}}\left[  \Gamma_{l,m_{1}},G_{o,m_{2}%
}^{\lessgtr}\right] \\
+i\epsilon_{ll_{1}l_{2}}\left[  \Gamma_{l_{1},o},G_{l_{2},m}^{\lessgtr
}\right]  +i\epsilon_{ll_{1}l_{2}}\left[  \Gamma_{l_{1},m},G_{l_{2}%
,o}^{\lessgtr}\right]
\end{array}
\\
+\frac{1}{2}i\epsilon_{ll_{1}l_{2}}i\epsilon_{mm_{1}m_{2}}\left\{
\Gamma_{l_{1},m_{1}},G_{l_{2},m_{2}}^{\lessgtr}\right\}  +\frac{1}{2}%
i\epsilon_{ll_{1}l_{2}}i\epsilon_{mm_{1}m_{2}}\left\{  \Gamma_{l_{1},m_{1}%
},G_{l_{2},m_{2}}^{\lessgtr}\right\}
\end{array}
\label{2spin4-3}%
\end{align}%
\begin{align}
&  \left\{  \Sigma^{\lessgtr},A\right\}  _{l,m}=\nonumber\\
=  &
\begin{array}
[c]{c}%
\begin{array}
[c]{c}%
\left\{  \Sigma_{o,o}^{\lessgtr},A_{l,m}\right\}  +\left\{  \Sigma
_{o,m}^{\lessgtr},A_{l,o}\right\}  +\left\{  \Sigma_{l,o}^{\lessgtr}%
,A_{o,m}\right\}  +\left\{  \Sigma_{l,m}^{\lessgtr},A_{o,o}\right\} \\
+i\epsilon_{mm_{1}m_{2}}\left\{  \Sigma_{o,m_{1}}^{\lessgtr},A_{l,m_{2}%
}\right\}  +i\epsilon_{mm_{1}m_{2}}\left\{  \Sigma_{l,m_{1}}^{\lessgtr
},A_{o,m_{2}}\right\} \\
+i\epsilon_{ll_{1}l_{2}}\left\{  \Sigma_{l_{1},o}^{\lessgtr},A_{l_{2}%
,m}\right\}  +i\epsilon_{ll_{1}l_{2}}\left\{  \Sigma_{l_{1},m}^{\lessgtr
},A_{l_{2},o}\right\}
\end{array}
\\
+\frac{1}{2}i\epsilon_{ll_{1}l_{2}}i\epsilon_{mm_{1}m_{2}}\left\{
\Sigma_{l_{1},m_{1}}^{\lessgtr},A_{l_{2},m_{2}}\right\}  +\frac{1}{2}%
i\epsilon_{ll_{1}l_{2}}i\epsilon_{mm_{1}m_{2}}\left\{  \Sigma_{l_{1},m_{1}%
}^{\lessgtr},A_{l_{2},m_{2}}\right\}
\end{array}
\label{2spin4-4}%
\end{align}

The limiting case of Pauli-Dirac spin and pseudospin, $N_{s}=2$ has been
treated in the paper by Buot et al\cite{ref14}. There the Dirac spin
semiconductor Bloch equations (DSSBEs) are treated in the electron-hole
picture, although the corresponding DSSBEs for holes was not treated in that
paper. When we cast the DSSBEs in the electron picture, either by virtue of
the symmetry at low energies of the conduction-valence bands of Dirac points,
or on account of layer pseudospin in bilayer systems, the resulting equations
for Pauli-Dirac spin and pseudospin, using the method of using appropriate
combinations of the DSSBEs employed there,\cite{ref14} exactly reproduce the
results as given in the present limiting case. The explicit form of the
expressions containing the product of two Levi Civita symbols can be obtained
by the method of Ref. \cite{ref14} using the DSSBEs derived there but cast in
the electron picture. The typical result containing two Levi Civita symbols,
for example, the $x$-component of pseudospin in our present notation, is%
\begin{align}
&  2^{N_{s}}i\hbar\left(  \frac{\partial}{\partial t_{1}}+\frac{\partial
}{\partial t_{2}}\right)  \vec{G}_{x}^{<}\nonumber\\
&  \Longrightarrow\left[
\begin{array}
[c]{c}%
\frac{i}{2}\left\{  \left(  \vec{\Gamma}_{y}\times\vec{G}_{z}^{<}-\vec{\Gamma
}_{z}\times\vec{G}_{y}^{<}\right)  +\left(  \vec{G}_{y}^{<}\times\vec{\Gamma
}_{z}-\vec{G}_{z}^{<}\times\vec{\Gamma}_{y}\right)  \right\}  \ \\
-\frac{i}{2}\left\{  \left(  \vec{\Sigma}_{y}^{<}\ \times\vec{A}_{z}%
-\vec{\Sigma}_{z}^{<}\times\vec{A}_{y}\right)  +\left(  \vec{A}_{y}\times
\vec{\Sigma}_{z}^{<}-\vec{A}_{z}\times\vec{\Sigma}_{y}^{<}\right)  \right\}
\end{array}
\right]  \label{two_levi1}%
\end{align}
where the vector denotes the Pauli-Dirac spin. This translates to our result
in term of tensor components, for arbitrary vector components for both Dirac
spin and pseudospin, as
\begin{align}
&  2^{N_{s}}i\hbar\left(  \frac{\partial}{\partial t_{1}}+\frac{\partial
}{\partial t_{2}}\right)  G_{k,m}^{<}\nonumber\\
&  \Longrightarrow+\left[
\begin{array}
[c]{c}%
\frac{i}{2}\epsilon_{kk_{1}k_{2}}\epsilon_{mm_{1}m_{2}}\left\{  \Gamma
_{k_{1},m_{1}},G_{k_{2},m_{2}}^{<}\right\}  \ \\
-\frac{i}{2}\epsilon_{kk_{1}k_{2}}\epsilon_{mm_{1}m_{2}}\left\{  \Sigma
_{k_{1},m_{1}},A_{k_{2},m_{2}}\right\}
\end{array}
\right]  \label{two_levi2}%
\end{align}
Equations (\ref{two_levi1}) -(\ref{two_levi2}) give a clear meaning of the
terms containing two levi-Civita symbols in the transport equations. Indeed
for the scattering terms, it restores the locality character as contained in
the original equations, Eq. (\ref{QTE}) [refer to the discussion in Sec.
\ref{locdiff} of the Appendix].

\subsection{Comparison with the expression in Ref \cite{ref14}}

In the light of the present notations used, the corresponding quantum
transport equation for Pauli-Dirac spin and pseudospin with different
notations employed by Buot et al.\cite{ref14} is shown in Sec. \ref{multiband}
of the Appendix to be identical to the present results.

\section{Limiting Case of a Single Spin, $N_{s}=1$}

To obtain the nonequilibrium quantum transport equations for $N_{s}=1$ from
the above quantum transport equations for $N_{s}=3$, one simply deletes two of
the three indices. For example, for the case of pseudospin degrees of freedom,
we obtain,%

\begin{align}
&  2^{N_{s}}i\hbar\left(  \frac{\partial}{\partial t_{1}}+\frac{\partial
}{\partial t_{2}}\right)  G_{o}^{\gtrless}=\nonumber\\
&  \left[  H_{o},G_{o}^{\gtrless}\right]  +\left[  H_{m},G_{m}^{\gtrless
}\right] \nonumber\\
&  +\left[  \Sigma_{o}^{\gtrless},\mathrm{Re}G_{o}^{r}\right]  +\left[
\Sigma_{m}^{\lessgtr},\mathrm{Re}G_{m}^{r}\right] \nonumber\\
&  -\frac{i}{2}\left\{  \left\{  \Gamma_{o},G_{o}^{\gtrless}\right\}
+\left\{  \Gamma_{m},G_{m}^{\gtrless}\right\}  \right\} \nonumber\\
&  +\frac{i}{2}\left\{  \left\{  \Sigma_{o}^{\gtrless},A_{o}\right\}
+\left\{  \Sigma_{m}^{\lessgtr},A_{m}\right\}  \right\}  \label{1spin1}%
\end{align}
where $N_{s}=1$. The pseudospin magnetization density is given by,%

\begin{align}
&  2^{N_{s}}i\hbar\left(  \frac{\partial}{\partial t_{1}}+\frac{\partial
}{\partial t_{2}}\right)  G_{m}^{\gtrless}=\nonumber\\
&  =\left[  H_{o},G_{m}^{\gtrless}\right]  +\left[  H_{m},G_{o}^{\gtrless
}\right]  +i\epsilon_{mm_{1}m_{2}}\left\{  H_{m_{1}},G_{m_{2}}^{\gtrless
}\right\} \nonumber\\
&  +\left[  \Sigma_{o}^{\lessgtr},\mathrm{Re}G_{m}^{r}\right]  +\left[
\Sigma_{m}^{\lessgtr},\mathrm{Re}G_{o}^{r}\right]  +i\epsilon_{mm_{1}m_{2}%
}\left\{  \Sigma_{m_{1}}^{\lessgtr},\mathrm{Re}G_{m_{2}}^{r}\right\}
\nonumber\\
&  -\frac{i}{2}\left\{  \left\{  \Gamma_{o},G_{m}^{\gtrless}\right\}
+\left\{  \Gamma_{m},G_{o}^{\gtrless}\right\}  +i\epsilon_{mm_{1}m_{2}}\left[
\Gamma_{m_{1}},G_{m_{2}}^{\gtrless}\right]  \right\} \nonumber\\
&  +\frac{i}{2}\left\{  \left\{  \Sigma_{o}^{\lessgtr},A_{m}\right\}
+\left\{  \Sigma_{m}^{\lessgtr},A_{o}\right\}  +i\epsilon_{mm_{1}m_{2}}\left[
\Sigma_{m_{1}}^{\lessgtr},A_{m_{2}}\right]  \right\}  \label{1spin2}%
\end{align}

\subsection{Comparison with the expression given in Ref. \cite{ref12}}

The equation we obtain for Dirac spin is identical to that obtained by Buot et
al,\cite{ref12} which is rearranged as follows,%

\begin{align}
2  &  i\hbar\left(  \frac{\partial}{\partial t_{1}}+\frac{\partial}{\partial
t_{2}}\right)  \vec{S}^{<}=\nonumber\\
&  =\left\{
\begin{array}
[c]{c}%
\left[  \bar{H}+\mathrm{Re}\bar{\Sigma}^{r},\vec{S}^{<}\right] \\
+\left[  \left(  \mathcal{\vec{B}+}\mathrm{Re}\vec{\Xi}^{r}\right)  ,S_{o}%
^{<}\right]  +i\left[  \left(  \mathcal{\vec{B}}+\mathrm{Re}\vec{\Xi}%
^{r}\right)  \times\vec{S}^{<}-\vec{S}^{<}\times\left(  \mathcal{\vec{B}%
}+\mathrm{Re}\vec{\Xi}^{r}\right)  \right]
\end{array}
\right\} \nonumber\\
&  +\left[  \bar{\Sigma}^{<},\mathrm{Re}\vec{S}^{r}\right]  +\left[  \vec{\Xi
}^{<},\mathrm{Re}S_{o}^{r}\right]  +i\left[  \vec{\Xi}^{<}\times
\mathrm{Re}\vec{S}^{r}-\ \mathrm{Re}\vec{S}^{r}\times\vec{\Xi}^{<}\right]
\nonumber\\
&  -\frac{i}{2}\left\{  \left\{  \bar{\Gamma},\vec{S}^{<}\right\}  -\frac
{i}{2}\left\{  \vec{\gamma},S_{o}^{<}\right\}  +\frac{i}{2}\left\{
\vec{\gamma}\ \times\vec{S}^{<}+\ \vec{S}^{<}\times\vec{\gamma}\right\}
\right\} \nonumber\\
&  \ +\frac{i}{2}\left\{  \left\{  \bar{\Sigma}^{<},\mathcal{\vec{A}}\right\}
+\frac{i}{2}\left\{  \vec{\Xi}^{<},\mathcal{A}_{o}\right\}  -\frac{i}%
{2}\left\{  \vec{\Xi}^{<}\times\mathcal{\vec{A}}+\mathcal{\vec{A}}\times
\vec{\Xi}^{<}\right\}  \right\}  \label{singleband}%
\end{align}
where in the present paper we used the following corresponding notations as
follows,%
\begin{align*}
\vec{S}^{<}  &  \Longrightarrow G_{k}^{<}\text{, }S_{o}^{<}\Longrightarrow
G_{o}^{<}\text{,\ }\bar{H}\Longrightarrow H_{o}\text{, }\mathcal{\vec{B}%
}+\mathrm{Re}\vec{\Xi}^{r}\Longrightarrow H_{k}\\
\text{\ }\bar{\Gamma}  &  \Longrightarrow\Gamma_{o}\text{, }\vec{\gamma
}\Longrightarrow\Gamma_{k}\text{, }\mathrm{Re}\bar{\Sigma}^{r}\Longrightarrow
\text{included in }H_{o}\text{, }\\
\bar{\Sigma}^{<}  &  \Longrightarrow\Sigma_{o}^{\lessgtr}\text{, \ }\vec{\Xi
}^{<}\Longrightarrow\Sigma_{k}^{\lessgtr}\\
\mathrm{Re}\vec{S}^{r}  &  \Longrightarrow\mathrm{Re}G_{k}^{r}\text{,
\ }\mathrm{Im}S_{o}^{r}\Longrightarrow-\frac{A_{o}}{2}\text{, }\mathcal{\vec
{A}\Longrightarrow}A_{k}\text{, \ }%
\end{align*}
Note that there were typos in Ref. \cite{ref12}, (Eq. (20), where some terms
have missing extra factor $\frac{1}{2}$ and has been corrected in Eq.
(\ref{singleband}).

\section{\label{discon}Discussion and Concluding Remarks}

We have shown from all of the derived coupled spin quantum transport equations
that there is no formal difference between the resulting equations of
different spins for the same characteristic integer, $N_{s}$, which designates
the number of spin degrees of freedom involved in a system. Thus, for the same
$N_{s}$, the transport equations are mathematically identical and exhibit
identical physical behavior, whether Pauli-Dirac spin or any of the
pseudospins. The maximal entanglement entropy of the system \cite{ref13} is
given by $S=N_{s}\ln2$, by simply associating with the bipartite entanglement
of $N_{s}$-qubit systems. For carrying out numerical simulations \cite{bj,jb}
of real highly-nonlinear and switching spintronic devices, the prescription
for converting all equations to phase space is given in Sec. \ref{lww} of the Appendix.

One of the interesting aspects of entanglement in nonequilibrium superfield
spin quantum-transport physics coupled with lattice Weyl transform techniques
\cite{bj} is its relevance to local and nonlocal or diffusive processes in
phase space or drift-diffusion and quantum tunneling transport in position
space. This diffusive propertyof the commutator is exhibited even in the
leading order in gradient expansion. The commutator represents the inherent
nonlocality in quantum mechanics and describes diffusive and tunneling motion
in phase space of transport kinetics, whereas the anticommutator describes
local or nondiffusive events in $\left(  \vec{p},\vec{q}\right)  $ space
typefied either by cyclotron-orbit current tied to orbit center or by decay
and growth of a phase-space distribution function tied to a point $\left(
\vec{p},\vec{q}\right)  $ in phase space. For example, the nondiffusive
character of the scattering anticommutator has been simply approximated by a
relaxation-time approximation in Refs.\cite{bj,jb}, whereas the commutator is
entirely responsible for bringing in the nonlocality of quantum transport
physics, specifically the quantum tunneling transport in resonant tunneling
nanostructures \cite{bj,jb} which for the first time resolves the
controversial aspects of the current-voltage characteristics of resonant
tunneling diodes.\cite{jb}

For convenience (see Appendix Sec. \ref{lww} for more details), we give here
the following lattice-Weyl (LW) transform to phase space of a commutator
$\left[  A,B\right]  $ and an anticommutator $\left\{  A,B\right\}  $ in terms
of Poisson bracket differential operator, $\Lambda$, as expansion to all
powers of $\hbar$,%
\begin{align}
\left[  A,B\right]  \left(  p,q\right)   &  =\cos\Lambda\ \left[  a\left(
p,q\right)  b\left(  p,q\right)  -b\left(  p,q\right)  a\left(  p,q\right)
\right] \nonumber\\
&  -i\sin\Lambda\left\{  a\left(  p,q\right)  b\left(  p,q\right)  +b\left(
p,q\right)  a\left(  p,q\right)  \right\}  ,\label{nonloc}\\
\left\{  A,B\right\}  \left(  p,q\right)   &  =\cos\Lambda\ \left\{  a\left(
p,q\right)  b\left(  p,q\right)  +b\left(  p,q\right)  a\left(  p,q\right)
\right\} \nonumber\\
&  -i\sin\Lambda\ \left[  a\left(  p,q\right)  b\left(  p,q\right)  -b\left(
p,q\right)  a\left(  p,q\right)  \right]  , \label{loc}%
\end{align}
where $\Lambda=\frac{\hbar}{2}\left(  \frac{\partial^{\left(  a\right)  }%
}{\partial p}\cdot\frac{\partial^{\left(  b\right)  }}{\partial q}%
-\frac{\partial^{\left(  a\right)  }}{\partial q}\cdot\frac{\partial^{\left(
b\right)  }}{\partial p}\right)  $, the Poisson bracket operator. Thus if the
LW transforms $a\left(  p,q\right)  $ and $b\left(  p,q\right)  $are not
matrices, then to lowest order, we have%
\begin{align}
\left[  A,B\right]  \left(  p,q\right)   &  =-i\frac{\hbar}{2}\left(
\frac{\partial a\left(  p,q\right)  }{\partial p}\cdot\frac{\partial b\left(
p,q\right)  }{\partial q}-\frac{\partial a\left(  p,q\right)  }{\partial
q}\cdot\frac{\partial b\left(  p,q\right)  }{\partial p}\right)
,\label{commute}\\
\left\{  A,B\right\}  \left(  p,q\right)   &  =\left\{  a\left(  p,q\right)
b\left(  p,q\right)  +b\left(  p,q\right)  a\left(  p,q\right)  \right\}
=2\ a\left(  p,q\right)  b\left(  p,q\right)  . \label{anticommute}%
\end{align}
clearly showing that even to the lowest order the local and diffusive
properties in phase space of anticommutator and commutator, respectively, are
apparent. Likewise, in its full integral representation [Sec. \ref{lww} in the
Appendix] which is suitable for carrying numerical simulations, Eq.
(\ref{nonloc}) has a nonlocal kernel, whereas Eq. (\ref{loc}) has a local
kernel of integration. It is precisely the nonlocal character of the kernel
given in Eq. (\ref{nonlocalK}) of the Appendix that is responsible for quantum
tunneling in resonant tunneling diodes, \cite{bj,jb} whose unique
current-voltage characteristics cannot be explained by any degree of
semi-classical approximations.\cite{fedyabu}

One observes that by virtue of Eqs. (\ref{torque}) - (\ref{2diffusive}) in the
Appendix, the entangling with a second pseudospin can transform local terms in
the transport equation into diffusive or mobile terms, a sort of
\textit{delocalization}. This type of delocalization is not surprising as it
is exhibited even in the classical case, often used to explain the absence of
diamagnetism in classical free electron gas \cite{refa}. In the classical
case, when localized currents or cyclotron orbits interacts with a
constriction or opposite boundaries (providing torque interaction),
delocalized currents in opposite directions reside in opposite boundaries,
respectively. Thus, the quantum edge states in integer quantum Hall effect
(precursor to topological insulators) are often portrayed semiclassically as
counterpropagating \textquotedblleft skipping orbits\textquotedblright\ that
propagate (delocalize) along the boundary of the system in the opposite sense
to that of the Landau orbits, when a particle in a Landau orbit interacts with
the boundary, and specularly bounces off it (see Fig. \ref{skiporbit})
\cite{refa}.

We are lead to conclude that, quantum mechanically speaking, in the specular
reflection at the boundary, the incoming and outgoing states present the
boundary pseudospin resulting in the boundary-pseudospin dependent interaction
of the Landau orbits. This boundary-pseudospin dependent interaction results
in the delocalization of the electron current along the boundary. Since the
interaction is elastic and does not cost energy the delocalized current
excitations do not have mass or are zero-excitation modes resulting in the
chiral/helical dispersion relations or Dirac point of the excitations as in
the case of integer quantum Hall effect. This is the mechanism leading to QHE-TI.

\begin{figure}
[h!]\centering
\includegraphics[width=1.5in]{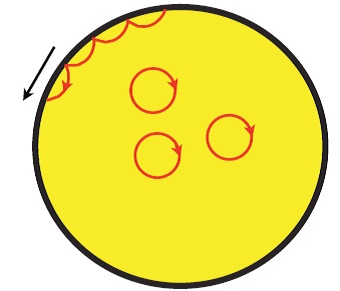}
\caption{Classical skipping orbits in
space, showing the \textit{holographic} nature of topological entanglement,
i.e., \textit{bulk-boundary correspondence}. [Reproduced from Ref. \cite{refa}]. }
\label{skiporbit}
\end{figure}

Therefore, starting from the commutator $\left[  A,B\right]  $ of Eq.
(\ref{QTE}), the dependence in spinor pseudospin degrees of freedom results in
a series of alternating commutator and anti-commutator of the spin tensor
components as the number of entangled spin degrees of freedom, $N_{s}$, grows,
with commutator for even number of $i\epsilon_{ijk}$ and anti-commutator for
odd number of $i\epsilon_{ijk}$. Had we started with $\left\{  A,B\right\}  $,
then there will be a series of alternating anti-commutator and commutator
instead. In other words, entanglement of spins will result in transformations
from local to diffusive motion in phase space, and vice versa. This novel
delocalization mechanism seems to have direct relevance to topological
insulators where torques, due either to spin-orbit coupling (Berry
curvature)\cite{ref14a, ref14b} or magnetic fields resulting in localization
or a gapped spectrum in the bulk, \textit{akin} to integer quantum Hall
effect, are then entangled with pseudospin torque at the boundary consistent
with Dirac points, \textit{akin} to classical specular reflection of cyclotron
orbits at boundaries, resulting in delocalization at the edges (conductive
edge quantum states) of $2$-D systems. We therefore expect that localization
of the edge states can occur if further entanglement with spin-dependent
(spinor) impurities (resulting in spin flips),\cite{ref6} or when layer
pseudospin becomes important in bilayer topological insulators, i.e., when
three spin degrees of freedom become entangled at the edge. There are actually
some evidence on this aspect of edge-state localization in bilayer topological
quantum Hall 2-D systems where edge states are not even mentioned in the
discussion of transport \cite{ref15}. On the other hand, the intralayer gapped
bulk states of the LLO in bilayer systems become mobile due to interaction
with layer pseudospin, in fact in pseudospin ferromagnetic state this becomes
superconducting mediated by the condensation of bilayer bosons of excitons
\cite{ref5}.

To illustrate the point in the above discussion, consider for instance the
first line of Eq. (\ref{2spin2-1}), specifically we are interested in the
term, $i\epsilon_{ll_{1}l_{2}}\left\{  H_{l_{1},o},G_{l_{2},o}^{\gtrless
}\right\}  $, which is the term coming from the Hamiltonian [our spinor
Hamiltonian is generalized to account for the spinor $\mathrm{Re}\Sigma^{r}$
\cite{raja, rajaetal} (which is included in $H$, Eq. (\ref{absorbed_in_H}) as
a spinor mass term) and/or magnetic field and/or Dresselhaus and/or Rashba
spin-orbit coupling] that incorporates the Landau orbit and is thus a local
term by virtue of the anticommutator. However, when the dependence on
pseudospin, $m$, comes into play this local term is transformed into a
nonlocal term in Eq. (\ref{2spin4-1}), specifically through the
term,$i\epsilon_{ll_{1}l_{2}}i\epsilon_{mm_{1}m_{2}}\left[  H_{l_{1},m_{1}%
}G_{l_{2},m_{2}}^{\lessgtr}\right]  $ which is now a nonlocal term, i.e.,
mobile term in terms of transport point of view, as expressed by the
commutator. We believe that the inherent nonlocality property of commutators
involving quantum operators\textit{ }in phase space is fundamentally rooted in
the uncertainty principle of quantum mechanics where the smallest nonlocality
in phase space is measured by the Planck constant $\hbar$. Indeed, the quantum
uncertainty principle naturally follows from the formalism discussed in the
\textbf{Introduction} section.

\subsection{Metallic edge states and bilayer superconductivity}

If the pseudospin $m$ is the boundary pseudospin, then the delocalization will
result in the so-called edge metallic states. Similarly, in bilayer system
with each layer in their LLO localized state, the entangling with the layer
pseudospin will delocalized the LLO states into nonlocal conducting states. In
fact in the layer-pseudospin ferromagnetic state, experiments has shown that
the electron-hole boson condensate at nonzero momentum leads to a new
mechanism for the onset of superconductivity.

\subsection{Localization and delocalization}

Thus, from Eq. (\ref{1spin2}) we expect, in the absence of magnetic field,
localization from Dresselhaus and Rashba spin-orbit coupling as incorporated
in the Hamiltonian, $H$, yielding the \textit{orbit-term} $i\epsilon
_{mm_{1}m_{2}}\left\{  H_{m_{1}},G_{m_{2}}^{\gtrless}\right\}  $, whereas
delocalization contribution is expected from the Dyakonov-Perel and
Elliott-Yafet mechanism as incorporated in the \textit{scattering-out}
$\Gamma$-term, and \textit{scattering-in} $A$-term. These are the commutator
terms in Eq. (\ref{1spin2}) containing one Levi-Civita symbol, which describe
a complex motion similar to the motion and deformation of Landau-orbits,
induced by strong magnetic fields, due to the torque exerted by the electric
field \cite{jefi} resulting in the build-up of Hall voltage.\cite{dp} There
are indeed theoretical and experimental works which support this assertion,
the effect is related to what is often referred to in the literature as weak
localization due to Dresselhaus and Rashba spin-orbit couplings \cite{WL} and
weak antilocalization or supression of scattering rates due to the
Dyakonov-Perel and Elliott-Yafet scattering mechanisms \cite{WAL, WALa} or
entanglement of spin-orbit couplings with another torque (spin) degrees of
freedom. Delocalization or nonlocal scattering terms may also occur where the
scattering centers have a pseudo-spin character, such as the two-level
dependence of the scattering in Wigner-function quantum transport formalism
recently studied by Rossi et al.\cite{rossi, rossi2}

\subsubsection{Crossover from weak localization to weak anti-localization}

In the literature, the term weak localization (WL) and weak antilocalization
(WAL) often specifically refers to the the presence of both spin-orbit
coupling and magnetic field in a $2$-d system. The effects of the magnetic
field is to induce weak localization where carriers execute Wilson loops (i.e.
attributed to $i\epsilon_{kk_{1}k_{2}}\left\{  H_{k_{1},o},G_{k_{2}%
,o}^{\gtrless}\right\}  $ anti-commutator term in transport equation) along
its path, but the effect of spin-orbit coupling is to induce weak
antilocalization (WAL) by trying to modify the Wilson loops into a quantum
diffusive term (i.e., from the $i\epsilon_{kk_{1}k_{2}}i\epsilon_{mm_{1}m_{2}%
}\left[  H_{k_{1},m_{1}}G_{k_{2},m_{2}}^{\lessgtr}\right]  $ commutator term).
These contrasting and competing effects as well as their respective dependence
on magnetic-field strenght for Wislon localization loops, and gated
electric-field strenght\ or thickness of a $2$-d layer system for the
spin-orbit delocalization effects, can result in an observed crossover between
WL and WAL in the magnetoconductivity measurements \cite{CWLWAL} as a function
of gate voltage or magnetic field. The observation of weak antilocalization
implies that a strong spin-orbit coupling is the dominant
effects.\cite{newton}

In exprimental conditions, we can assume that statistically an $\Omega$ number
of particles are executing Wilson loops in their transport, i.e.
$i\epsilon_{kk_{1}k_{2}}\left\{  H_{k_{1},o},G_{k_{2},o}^{\gtrless}\right\}  $
due to the effects of applied magnetic fields, whereas $\left(  N-\Omega
\right)  $ number of particles already felt as well the applied gate-electric
field resulting in a stronger spin-orbit coupling leading to $i\epsilon
_{kk_{1}k_{2}}i\epsilon_{mm_{1}m_{2}}\left[  H_{k_{1},m_{1}}G_{k_{2},m_{2}%
}^{\lessgtr}\right]  $ torque-entanglement modification. At constant gate
voltage, the magnetoconductivity is a decreasing function of the magnetic
field. However by increasing the gate voltage, the spin-orbit coupling start
to become dominant in counteracting the effects of the Wilson
weak-localization loops and therefore a crossover point in the
magnetoconductivity measurent is expcted to occur at $\Omega=0.5N$ after which
the magnetoconductivity will start to increase with the applied gate voltage.
Indeed, the crossoveer point or minimum point of the magnetoconductivity has
been observed in the experiments. The terminology '\textit{weak}' as used in
the experimental observations is due to the fact that in Eqs. (\ref{1spin1}) -
(\ref{1spin2}) or Eqs. (\ref{2spin4-1})-(\ref{2spin4-4}) there are other
important terms that contribute to quantum nonlocality and hence
magnetoconductivity transport measurements, e.g. nonlocality of spin-dependent
scattering terms.\cite{dp}

\begin{remark}
We remark on quantum spin Hall effect topological insulator (QSHE-TI) as it
relates to the above discussion on QHE-TI. In time-reversal symmetric systems
with strong spin-orbit coupling, Kramer's degenerate states play an important
role. Since spin-orbit coupling does not break time reversal symmetry, the
Kramer's degenerate pair remains intact and moves in opposite directions with
opposite spins. Thus, the effective spin-orbit magnetic field rotates the
Kramer's degenerate pair in opposite sense, say clockwise and counterclockwise
rotating pair. When this pair interacts with the boundary pseudospin, in the
same sense as discussed above, this oppositely moving Kramer's Landau-orbit
pair forms metallic edge states of oppositely moving currents having opposite
spins, i.e., helical edge states. This is the mechanism leading to QSHE-TI.
\end{remark}

\begin{remark}
It is worthwhile to add comments on topological Kondo insulator \cite{ki},
which represents the intersection of topology and strongly correlated and
heavy fermion systems. Within the quantum transport perspective, the
hybridization of localized $f$ and mobile $d$-band, which gives rise to a
hybridization gap insulator, is within the domain of quantum transport
localization term expressed by the '\textit{orbit-term'} $i\epsilon
_{kk_{1}k_{2}}\left\{  H_{k_{1}},G_{k_{2}}^{\gtrless}\right\}  $ driven by
strong spin-orbit interaction of the $f$-electrons \cite{ki} as discussed
above. Likewise, the metallic edge states are described by the delocalization
term $i\epsilon_{kk_{1}k_{2}}i\epsilon_{mm_{1}m_{2}}\left[  H_{k_{1},m_{1}%
}G_{k_{2},m_{2}}^{\lessgtr}\right]  $, similar to the mechanism leading to
QSHE-TI with helical edge or surface states. It should be pointed out that
since the orbit centers are the atoms themselves the delocalization at the
surface could results in the restructuring of the atomic surface atoms
resulting in twisting or warping, as exhibited by some Kondo insulator
materials, such as $CeNiSn$ possessing nonsymmorphic symmetries, i.e.,
containing three glide reflections, three screw rotations and an inversion
symmetry \cite{mobius}. There, nonsymmorphic symmetries give rise to a
momentum-dependent twist that enables the surface states to be detached from
the bulk on the glide plane leading to the so-called M\"{o}bius-twisted
surface states \cite{mobius}.
\end{remark}

In summary, it is worth noting that although topological phases are often
discussed in the Hamiltonian context, it has also been shown that associated
topological protection and phenomena can also emerge in open quantum systems
with engineered dissipation \cite{ref16}. In this paper, we have shown that a
rounded picture and intuitive aspects of topological phases and bilayer
superconductivity arise from the entanglement of different pseudospins in
highly nonlinear and nonequilibrium quantum transport equations of spin and
pseudospins. These mechanisms also shed light in the so-called weak
localization (WL) and delocalization (WAL) as observed experimentally
\cite{CWLWAL}. Two entangled torques from two spin degrees of freedom, one in
the bulk extending to the boundary and one at the boundary, are needed to
characterize topological insulators. Moreover in most experiments on WL and
WAL, two entangled torques, one from the applied magnetic field and one from
spin-orbit coupling is also needed to to characterize the results of
experiments on WL versus WAL and their crossover point.

It was mentioned in the \textbf{Introduction}, that a parallel treatment might
be possible for treating majorana edge states in topological superconductors.
Ideally, the zeros of either $\frac{1}{4}\left[  1-\left\langle \Psi
_{i}^{\dagger}\Psi_{j}^{\dagger}\right\rangle +\left\langle \Psi_{i}\Psi
_{j}\right\rangle \right]  $ or $\frac{1}{4}\left[  1+\left\langle \Psi
_{i}^{\dagger}\Psi_{j}^{\dagger}\right\rangle +\left\langle \Psi_{i}\Psi
_{j}\right\rangle \right]  $ would then constitute the probability of finding
only '\textit{one-half} $\left(  \frac{1}{2}\right)  $ fermion' or unpaired
majorana at the edges of $2$-D topological superconductor and at both ends of
Kitaev wire \cite{kitaev}. For the Kitaev wire, the situation bears
similarities with polyacetylene $\left[  C_{2}H_{2}\right]  _{n}$, which is a
prototype of a one-dimensional topological insulator. The entanglement entropy
of a majorana pair is of course determine by this probability, and is given by
$S=\ln2$. In numerical computer simulation, one can look for the big imbalance
between the two quantities to signal the presence of unpaired-majorana, i.e.,
\textit{one-half} fermion density. This would be an interesting topic for
further research, probably as a nonequilibrium quantum superield theory in a
lattice-space framework \cite{ref9a}.

The nonequilibrium quantum transport results of this paper also serve as
springboard to applications in the emerging field of spincaloritronics and
pseudo-spincaloritronics \cite{ref17,ref18}. The formalism employed here can
readily be extended to the inclusion of the spinor form of the
electron-phonon/plasmon and spinon/magnon self-energies, in accounting for the
coupling of the equations obtained here to the spin-dependent nonequilibrium
quantumtransport of phonons and plasmons \cite{ref11}. This will be discussed
in another communication concerning heat flow, spincaloritronics and pseudo-spincaloritronics.

\subsubsection*{Acknowledgment:}

This research is partially supported by the Philippine Council for Industry,
Energy, and Emerging Technology Research and Development of the Department of
Science and Technology (DOST-PCIEERD) of the Philippines through the 'Balik
Scientist Program". Dr. Buot is currently at the Laboratory of Computational
Functional Materials, Nanoscience and Nanotechnology, Department of Physics,
University of San Carlos -Talamban, Cebu City 6000, Philippines. Email: felixa.buot@gmail.com

\bigskip

\appendix

\section{\label{product}Product of $2\times2$ matrices}

\begin{theorem}
Any product of $2\times2$ matrices given by
\[
AB=\left(
\begin{array}
[c]{cc}%
a_{11} & a_{12}\\
a_{21} & a_{22}%
\end{array}
\right)  \left(
\begin{array}
[c]{cc}%
b_{11} & b_{12}\\
b_{21} & b_{22}%
\end{array}
\right)
\]
can be expressed in terms of $2\times2$ identity and Pauli matrices as%
\[
AB=\frac{1}{4}\left\{  S_{o}\left(  ab\right)  I+\vec{V}\left(  ab\right)
\cdot\vec{\sigma}\right\}
\]
where $S_{o}\left(  ab\right)  $ is the \textit{scalar-transforming} operator
on product $AB$ and $\vec{V}\left(  ab\right)  $ is the
\textit{vector-transforming} operator on $AB$. Using the Einstein
\textit{summation} convention for the repeated indices, we have
\begin{align*}
S_{o}\left(  ab\right)   &  =\left[  \bar{a}\bar{b}+a_{k}b_{k}\right] \\
V_{k}\left(  ab\right)   &  =\left[  \bar{a}b_{k}+a_{k}\bar{b}+i\epsilon
_{klm}\left(  a_{l}b_{m}\right)  \right]  .
\end{align*}
Here, the scalar and the vector components of $\vec{a}$ and $\vec{b}$ are
defined by,%
\begin{align*}
\bar{a}  &  =a_{11}+a_{22}\text{, \ \ }\bar{b}=b_{11}+b_{22}\\
a_{x}  &  =a_{12}+a_{21}\text{, \ \ }b_{x}=b_{12}+b_{21}\\
ia_{y}  &  =a_{12}-a_{21}\text{, \ \ }ib_{y}=b_{12}-b_{21}\\
a_{z}  &  =a_{11}-a_{22}\text{, \ \ }b_{z}=b_{11}-b_{22}%
\end{align*}

\end{theorem}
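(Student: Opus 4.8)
The plan is to prove the identity by decomposing each factor in the basis $\{I,\sigma_x,\sigma_y,\sigma_z\}$ of the complex $2\times2$ matrices and then collapsing the product with the Pauli multiplication rule. First I would check that the four scalar/vector components quoted in the statement are genuinely the coordinates of a matrix in this basis. Inverting the four defining relations gives
\[
a_{11}=\tfrac12(\bar a+a_z),\quad a_{22}=\tfrac12(\bar a-a_z),\quad a_{12}=\tfrac12(a_x+ia_y),\quad a_{21}=\tfrac12(a_x-ia_y),
\]
which is the same as the single matrix statement $A=\tfrac12(\bar a\,I+\vec a\cdot\vec\sigma)$, and likewise $B=\tfrac12(\bar b\,I+\vec b\cdot\vec\sigma)$. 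This is a routine linear-algebra verification that the assignment $A\mapsto(\bar a,a_x,a_y,a_z)$ is a bijection, so the decomposition loses no information.

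Next I would multiply the two decompositions and reduce using the Pauli algebra. The one nontrivial ingredient is the contraction identity
\[
(\vec a\cdot\vec\sigma)(\vec b\cdot\vec\sigma)=(\vec a\cdot\vec b)\,I+i(\vec a\times\vec b)\cdot\vec\sigma=a_kb_k\,I+i\epsilon_{klm}a_lb_m\,\sigma_k,
\]
which itself follows from $\sigma_i\sigma_j=\delta_{ij}I+i\epsilon_{ijk}\sigma_k$. Substituting, the two scalar-times-vector cross terms produce the $\bar a b_k+a_k\bar b$ part of $V_k$, the $I$-terms together with the dot product produce $\bar a\bar b$ and $a_kb_k$ in $S_o$, and the antisymmetric part supplies the $i\epsilon_{klm}a_lb_m$ piece. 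Since each factor carried a prefactor $\tfrac12$, the product carries $\tfrac12\cdot\tfrac12=\tfrac14$, which reproduces $AB=\tfrac14\{S_o(ab)\,I+\vec V(ab)\cdot\vec\sigma\}$ with $S_o$ and $V_k$ as stated.

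The step I expect to be the main obstacle is fixing the orientation conventions so that the Levi-Civita term comes out with the sign written in the statement. The definition $ia_y=a_{12}-a_{21}$ represents the $y$-component with the matrix $-\sigma_y$ (entries $(1,2)=i$, $(2,1)=-i$) rather than the standard $\sigma_y=\bigl(\begin{smallmatrix}0&-i\\i&0\end{smallmatrix}\bigr)$; flipping the sign of $\sigma_y$ reverses the handedness of the antisymmetric products $\sigma_i\sigma_j$, so the cross-product contribution changes sign relative to the naive substitution. The symmetric data ($S_o$ in full, and the $\bar a b_k+a_k\bar b$ part of $V_k$) are insensitive to this choice, so the entire subtlety is confined to the $i\epsilon_{klm}$ term. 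I would therefore settle it by a direct entry-by-entry multiplication, forming $(AB)_{11}+(AB)_{22}$, $(AB)_{11}-(AB)_{22}$, and $(AB)_{12}\pm(AB)_{21}$, and comparing with $\tfrac12 S_o$ and $\tfrac12 V_k$; this pins down the handedness of $\epsilon_{klm}$ consistent with the stated formula and simultaneously confirms that $S_o$ and $\vec V$ are twice the scalar and vector coordinates of the product matrix.
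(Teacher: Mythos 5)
Your proposal is correct, and it takes a genuinely different route from the paper's. The paper proves the theorem by brute force: it writes each factor in its explicit spinor matrix form $\frac{1}{2}\bigl(\begin{smallmatrix}\bar a + a_z & a_x - i a_y\\ a_x + i a_y & \bar a - a_z\end{smallmatrix}\bigr)$, multiplies the two $2\times 2$ matrices entry by entry, and then regroups the four entries of the product into scalar and vector combinations; the Pauli algebra $\sigma_i\sigma_j=\delta_{ij}I+i\epsilon_{ijk}\sigma_k$ is never invoked. Your route --- expand $A=\tfrac12(\bar a\,I+\vec a\cdot\vec\sigma)$, $B=\tfrac12(\bar b\,I+\vec b\cdot\vec\sigma)$ and collapse the cross term with $(\vec a\cdot\vec\sigma)(\vec b\cdot\vec\sigma)=(\vec a\cdot\vec b)\,I+i(\vec a\times\vec b)\cdot\vec\sigma$ --- is more structural: it makes clear why the product splits into a symmetric scalar part and a vector part whose antisymmetric piece is a cross product, which is exactly the pattern the paper then iterates in its Lemma on general binary operations and in the appendix on products of ``cube matrices.'' What the paper's computation buys is self-containedness and an explicit check of every entry; what yours buys is transparency and reusability of the key identity.

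Your closing concern about orientation is not a defect of your proof; it exposes a real inconsistency in the paper. As you observe, the printed definition $ia_y=a_{12}-a_{21}$ encodes the $y$-component with $-\sigma_y$. However, the matrix form used in the paper's own proof places $a_x-ia_y$ in the $(1,2)$ slot, i.e., it silently uses $ia_y=a_{21}-a_{12}$, the standard convention, contradicting the definition list. Under the standard convention the stated formula $V_k=\bar a b_k+a_k\bar b+i\epsilon_{klm}a_lb_m$ (with right-handed $\epsilon$) is exactly right: for instance, $\bar a b_z+a_z\bar b+i\left(a_xb_y-a_yb_x\right)=2\left(a_{11}b_{11}-a_{22}b_{22}\right)+2\left(a_{12}b_{21}-a_{21}b_{12}\right)=2\left[(AB)_{11}-(AB)_{22}\right]$, as required. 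Under the definitions as literally printed, the $\epsilon$-term comes out with the opposite sign. So your planned entry-by-entry check would not simply confirm the statement as printed; it would show that either the sign in the definitions of $a_y,b_y$ or the sign of the Levi-Civita term must be flipped (equivalently, that $\epsilon$ must be taken left-handed). That is the correct conclusion, and pinning the convention down by direct computation is the right way to settle it --- just state explicitly that the typo sits in the theorem's definition list, not in your Pauli-algebra step, whose symmetric output ($S_o$ and the $\bar a b_k+a_k\bar b$ part of $V_k$) is convention-independent, as you note.
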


\begin{proof}
Expressing each matrix in terms of the identity and Pauli matrices, we have%
\begin{align*}
AB& =\frac{1}{2}\left(
\begin{array}{cc}
\bar{a}+a_{z} & a_{x}-ia_{y} \\
a_{x}+ia_{y} & \bar{a}-a_{z}%
\end{array}%
\right) \frac{1}{2}\left(
\begin{array}{cc}
\bar{b}+b_{z} & b_{x}-ib_{y} \\
b_{x}+ib_{y} & \bar{b}-b_{z}%
\end{array}%
\right) \\
& =\frac{1}{4}\left(
\begin{array}{cc}
\left( \bar{a}+a_{z}\right) \left( \bar{b}+b_{z}\right) +\left(
a_{x}-ia_{y}\right) \left( b_{x}+ib_{y}\right) & \left( \bar{a}+a_{z}\right)
\left( b_{x}-ib_{y}\right) +\left( a_{x}-ia_{y}\right) \left( \bar{b}%
-b_{z}\right) \\
\left( a_{x}+ia_{y}\right) \left( \bar{b}+b_{z}\right) +\left( \bar{a}%
-a_{z}\right) \left( b_{x}+ib_{y}\right) & \left( a_{x}+ia_{y}\right) \left(
b_{x}-ib_{y}\right) +\left( \bar{a}-a_{z}\right) \left( \bar{b}-b_{z}\right)%
\end{array}%
\right)
\end{align*}%
Upon evaluating the last line, we end up with%
\begin{equation*}
AB=\frac{1}{4}\left\{
\begin{array}{c}
\left[ \bar{a}\bar{b}+\vec{a}\cdot \vec{b}\right] I+\left[ \bar{a}\vec{b}+%
\bar{b}\vec{a}+i\left( \vec{a}\times \vec{b}\right) \right] _{z}\sigma _{z}
\\
+\left[ \left( \bar{a}\vec{b}+\vec{a}\bar{b}\right) +i\left( \vec{a}\times
\vec{b}\right) \right] _{x}\sigma _{x}+\left[ \left( \bar{a}\vec{b}+\vec{a}%
\bar{b}\right) +i\left( \vec{a}\times \vec{b}\right) \right] _{y}\sigma _{y}%
\end{array}%
\right\}
\end{equation*}
\end{proof}

\begin{lemma}
The theorem can be generalized to any binary operation of spinor matrices.
\begin{align*}
S_{o}\left(  a\otimes b\right)   &  =\left[  \bar{a}\otimes\bar{b}+\vec
{a}\otimes\cdot\vec{b}\right] \\
&  =\frac{1}{4}\left[  \bar{a}\otimes\bar{b}+a_{k}\otimes b_{k}\right] \\
\vec{S}\left(  a\otimes b\right)   &  =\frac{1}{4}\left[  \bar{a}\otimes
\vec{b}+\bar{b}\otimes\vec{a}+i\left(  \vec{a}\times\vec{b}\right)  _{\otimes
}\right] \\
S_{k}\left(  ab\right)   &  =\frac{1}{4}\left[  \bar{a}\otimes b_{k}%
+a_{k}\otimes\bar{b}+i\epsilon_{klm}a_{l}\circledast b_{m}\right]  .
\end{align*}
where $\circledast$ will generally be a different binary operation from that
of $\otimes$. This will become clear in the example below.
\end{lemma}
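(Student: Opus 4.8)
The plan is to repeat the computation of the Theorem essentially verbatim, but to keep the inner multiplication of the matrix entries abstract, replacing ordinary operator multiplication by the bilinear binary operation $\otimes$. First I would write both spinor matrices in the Pauli basis, $A=\tfrac12(\bar a\,I+\vec a\cdot\vec\sigma)$ and $B=\tfrac12(\bar b\,I+\vec b\cdot\vec\sigma)$, where the coefficients $\bar a,a_k,\bar b,b_k$ are given by the same scalar/vector projections as in the Theorem but are now themselves operator- (or matrix-) valued objects in the remaining indices and arguments. The crucial structural observation is that every admissible $\otimes$ (matrix product, commutator, anticommutator, or the $\Sigma^{r}G$-type products of the transport equations) treats the two spinor indices by ordinary $2\times2$ matrix composition, so the only algebra I need on the spinor labels is $\sigma_i\sigma_j=\delta_{ij}I+i\epsilon_{ijk}\sigma_k$.

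Next, using bilinearity of $\otimes$ I would distribute it across the Pauli expansion,
\[
A\otimes B=\tfrac14\big(\bar a\,I+\vec a\cdot\vec\sigma\big)\otimes\big(\bar b\,I+\vec b\cdot\vec\sigma\big),
\]
and collect the four groups of terms $I\!\otimes\!I$, $I\!\otimes\!\sigma_j$, $\sigma_i\!\otimes\!I$, and $\sigma_i\!\otimes\!\sigma_j$. Applying the Pauli identity to the last group splits it into a $\delta_{ij}$ piece that feeds the scalar coefficient and an $i\epsilon_{ijk}$ piece that feeds the vector coefficient. This reproduces, term for term, the scalar projector $S_o(a\otimes b)=\bar a\otimes\bar b+a_k\otimes b_k$ and the vector projector $\vec S(a\otimes b)$ with components $\bar a\otimes b_k+a_k\otimes\bar b+i\epsilon_{klm}a_l\otimes b_m$, which are the first two displayed lines of the Lemma.

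The one genuinely new point --- and the reason the third line carries $\circledast$ rather than $\otimes$ --- is the fate of the cross term $i\epsilon_{klm}a_l\otimes b_m$. Since $\epsilon_{klm}$ is antisymmetric in $(l,m)$, only the $(l,m)$-antisymmetric part of $a_l\otimes b_m$ contributes. When $\otimes$ is itself assembled from the two orderings of the underlying operator product --- for instance $\otimes=[\,\cdot\,,\cdot\,]$ or $\otimes=\{\cdot,\cdot\}$, the operations that actually occur in Eqs. (\ref{QT1-1})--(\ref{2spin4-4}) --- I would write $\otimes$ as that signed sum and relabel $l\leftrightarrow m$ using $\epsilon_{kml}=-\epsilon_{klm}$. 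The relabeling flips the relative sign of the two orderings, so a commutator in the direct terms is converted into an anticommutator in the cross term, and conversely. Concretely, for $\otimes=[\,\cdot\,,\cdot\,]$ I would specialize $A\otimes B=AB-BA$ and subtract the two projector expressions to obtain $S_k([a,b])=[\bar a,b_k]+[a_k,\bar b]+i\epsilon_{klm}\{a_l,b_m\}$, exhibiting the promised substitution $\circledast=\{\cdot,\cdot\}$; the anticommutator case gives the mirror result. This is exactly the example the Lemma defers to.

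I expect the only real obstacle to be careful bookkeeping rather than any conceptual difficulty: because the coefficients $a_k,b_k$ are noncommuting operator-valued quantities, the antisymmetrization in the cross term must retain both orderings --- it is precisely their non-commutativity that manufactures the opposite-symmetry operation $\circledast$ --- and I must keep the Levi-Civita sign changes consistent with the operator ordering through every relabeling. Once the Pauli identity and this single antisymmetrization step are in place, the remaining verification is identical in form to the proof of the Theorem and requires no further ideas.
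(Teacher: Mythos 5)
Your proposal is correct and follows essentially the same route the paper intends: the paper gives no separate proof of the Lemma, deferring instead to the commutator example of Eq. (\ref{torque}), which is obtained exactly as you describe --- expanding both factors in the Pauli basis, applying the Theorem to each ordering, and using the antisymmetry of $\epsilon_{klm}$ under relabeling $l\leftrightarrow m$ to turn the cross term's difference of orderings into an anticommutator $i\epsilon_{klm}\{a_l,b_m\}$ while the direct terms remain commutators. Your identification of $\circledast$ as the opposite-symmetry operation generated by this antisymmetrization reproduces the paper's result term for term, so the only caveat is bookkeeping of the overall $\tfrac{1}{4}$ normalization, which the paper itself writes inconsistently in the Lemma's first line.
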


\subsection{Examples}

The Pauli-Dirac spinor form of the commutator of $A$ and $B$ is given by%

\begin{align}
\left[  A,B\right]   &  =\frac{1}{4}\left(  \left[  \bar{a},\vec{b}\right]
+\left[  \vec{a},\bar{b}\right]  +i\left[  \vec{a}\times\vec{b}-\vec{b}%
\times\vec{a}\right]  \right) \nonumber\\
\left[  A,B\right]  _{k}  &  =\frac{1}{4}\left(  \left[  \bar{a},b_{k}\right]
+\left[  a_{k},\bar{b}\right]  +i\epsilon_{kk_{1}k_{2}}\left\{  a_{k_{1}%
},b_{k_{2}}\right\}  \right)  \label{torque}%
\end{align}
where the \textit{bar} indicates total independence with respect to the spin
degrees of freedom, $\left\{  a_{k_{1}},b_{k_{2}}\right\}  $ is the
anticommutation of $a_{k_{1}}$ and $b_{k_{2}}$. Note the change from
commutator to anticommutator for the term containing one Levi-Civita symbol.
To include dependence on another spin degrees of freedom, it is convenient to
arrange the component of the next spin as the succeeding subcripts, such as
dyadic tensors $a_{k,l}$ and $b_{k,l}$ where the index $l$ coresponds to the
next spin variable, such as valley isospin, for example. Instead of using the
bar, quantities independent of two spins will be designated by $a_{o,o}$ and
$b_{o,o}$. In all instances, the theorem on binary product of $2\times2$
matrices can be sucessively applied. Thus, the fully entangled 'torque'
[coming from the last term of Eq. (\ref{torque})] now becomes,%
\begin{equation}
i\epsilon_{ll_{1}l_{2}}i\epsilon_{kk_{1}k_{2}}\left[  a_{k_{1},l_{1}}%
,b_{k_{2},l_{2}}\right]  =-\epsilon_{ll_{1}l_{2}}\epsilon_{kk_{1}k_{2}}\left[
a_{k_{1},l_{1}},b_{k_{2},l_{2}}\right]  \label{2diffusive}%
\end{equation}
Upon adding dependence on one more pseudospin degrees of freedom, such the
pseudospin arising either from the symmetry at low energies of the band
indices at Dirac points in graphene or from layer pseudospin in bilayer
graphene, the entanglement of the torques becomes%
\begin{equation}
-i\epsilon_{mm_{1}m_{2}}\epsilon_{ll_{1}l_{2}}\epsilon_{kk_{1}k_{2}}\left\{
a_{k_{1},l_{1},m_{1}},b_{k_{2},l_{2},m_{2}}\right\}  \label{threepseudo}%
\end{equation}
which is a third rank tensor. Assuming we have considered only the band
pseudospin in Eq. (\ref{threepseudo}), if one further consider the pseudospin
from layer degrees of freedom in bilayer graphene, for example, then the fully
entangled 'torque' becomes%
\[
\epsilon_{nn_{1}n_{2}}\epsilon_{mm_{1}m_{2}}\epsilon_{ll_{1}l_{2}}%
\epsilon_{kk_{1}k_{2}}\left[  a_{k_{1},l_{1},m_{1},n_{1}}b_{k_{2},l_{2}%
,m_{2},n_{2}}\right]
\]
Note that starting from the commutation of $\left[  A,B\right]  $, the
dependence in spin degrees of freedom results in the alternating commutator
and anti-commutator of the spin tensor components in the expression for
entangled torques, with commutator for even number of $i\epsilon_{ijk}$
(\textit{Levi Civita} symbols) and anti-commutator for odd number of
$i\epsilon_{ijk}$. Had we started with $\left\{  A,B\right\}  $, then there
will be alternating anti-commutator and commutator instead as the number of
spin degrees of freedom grows.

These considerations have important significance in formulating the
nonequilibrium spin quantum transport equations. However, for economy of
space, we will only consider the Pauli-Dirac spin, valley spin or isospin, and
pseudospins either due to symmetry of band indices at low energies at the
Dirac points \cite{ref3} or due to bilayer pseudospin. Here we only have eight
coupled transport equations to consider. On the other hand, by including any
other fourth pseudospin degrees of freedom we will double the number of
coupled quantum transport equations to sixteen. Although sixteen coupled spin
quantum transport equations are not treated here, the derivation is
straightforward following the procedures exemplified in this paper. From the
general nonequilibrium coupled spin quantum transport equations various
limiting cases appropriate to the problem at hand can be obained as given in
the text.

\section{\label{locdiff}Local and Nonlocal Terms}

As discussed in the \textbf{Introduction}, the commutator rigorously
incorporates the quantum non-locality in phase space or quantum-diffusive
motion of transport kinetics in phase space [ this is evident even in the
semi-classical approximation leading to Boltzmann transport equation, i.e.,
within the first power of Poisson bracket ] but does exhibit quantum tunneling
across abrupt barriers when all powers of $\hbar$ are taken into account by
using the integral kernel \cite{jb}, whereas the anticommutator describes
local or nondiffusive events in $\left(  \vec{p},\vec{q}\right)  $ space as
typefied by cyclotron-orbit current tied to orbit center or decay and growth
of a phase-space distribution function tied to a point $\left(  \vec{p}%
,\vec{q}\right)  $ in phase space. Moreover, the exact integral form of the
commutator and anticommutator exhibits nonlocal kernel and local kernel of
integration, respectively. See also the discussion in Sec. \ref{lww} in this Appendix.

\section{\label{prodcubematrx}Binary Product of 'Cube Matrices'}

The result of any binary product of matrices characterized by Pauli-Dirac spin
indices, valley or isospin indices, and energy bands or pseudospin indices can
be classified into \emph{eight} tensorial groups, obtained by
\textit{successive} \textit{iteration} of the spinor transformation of
$2\times2$ matrices given in Sec. \ref{product}. We obtained the following
expressions, where as before repeated indices are summed.

\subsection{Total scalars or tensors of rank zero}

Here we have $\left\{  k=0\text{ or }\sum\limits_{k}\text{, \ }l=0\text{ or
}\sum\limits_{l}\text{, \ }m=0\text{ or}\sum\limits_{m}\right\}  $, i.e.,
three $2$-dgrees of freedom, yielding $2^{3}=8$ terms,%

\[
S_{o,o,o}=\left(  \frac{1}{4}\right)  ^{3}\left\{
\begin{array}
[c]{c}%
\left[  \bar{a}_{o,o,o}\bar{b}_{o,o,o}+a_{k,o,o}b_{k,o,o}+\bar{a}_{o,o,m}%
\bar{b}_{o,o,m}\right] \\
+\bar{a}_{o,l,o}\bar{b}_{o,l,o}+a_{k,l,o}b_{k,l,o}+\bar{a}_{o,l,m}\bar
{b}_{o,l,m}\\
+a_{k,o,m}b_{k,o,m}+a_{k,l,m}b_{k,l,m}%
\end{array}
\right\}  \text{ }%
\]

\subsection{Dirac-spin vectors and Torques}

Here we have [$l=0$ or $\sum\limits_{l}$, $m=0$ or $\sum\limits_{m}$] on each
factor yielding $2\times2^{2}=8$ vectors, and two $2$-degrees of freedom for
the torques, $2^{2}=4$ torques%

\begin{equation}
V_{k,o,o}=\left(  \frac{1}{4}\right)  ^{3}\left[
\begin{array}
[c]{c}%
\bar{a}_{o,o,o}b_{k,o,o}+a_{k,o,o}\bar{b}_{o,o,o}+i\epsilon_{kk_{1}k_{2}%
}\left(  a_{k_{1},o,o}b_{k_{2},o,o}\right) \\
+\bar{a}_{o,l,o}b_{k,l,o}+a_{k,l,o}\bar{b}_{o,l,o}+i\epsilon_{kk_{1}k_{2}%
}\left(  a_{k_{1},l,o}b_{k_{2},l,o}\right) \\
+\bar{a}_{o,o,m}b_{k,o,m}+a_{k,o,m}\bar{b}_{o,o,m}+i\epsilon_{kk_{1}k_{2}%
}\left(  a_{k_{1},o,m}b_{k_{2},o,m}\right) \\
+\bar{a}_{o,l,m}b_{k,l,m}+a_{k.l,m}\bar{b}_{o,l,m}+i\epsilon_{kk_{1}k_{2}%
}\left(  a_{k_{1},l,m}b_{k_{2},l,m}\right)
\end{array}
\right]  \text{ }%
\end{equation}

\subsection{Isospin vectors and torques}

Here we have [$k=0$ or $\sum\limits_{k}$, $m=0$ or $\sum\limits_{m}$] on each
factor yielding $2\times2^{2}=8$ vectors, and two $2$-degrees of freedom for
the torques, $2^{2}=4$ torques%

\begin{equation}
V_{o,l,o}=\left(  \frac{1}{4}\right)  ^{3}\left\{
\begin{array}
[c]{c}%
\left[  \bar{a}_{o,o,o}\bar{b}_{o,l,o}+\bar{a}_{o,l,o}\bar{b}_{o,o,o}%
+i\epsilon_{ll_{1}l_{2}}\left(  \bar{a}_{o,l_{1},o}\bar{b}_{o,l_{2},o}\right)
\right] \\
+\bar{a}_{o,o,m}\bar{b}_{o,l,m}+\bar{a}_{o,l,m}\bar{b}_{o,o,m}+i\epsilon
_{ll_{1}l_{2}}\left(  \bar{a}_{o,l_{1},m}\bar{b}_{o,l_{2},m}\right) \\
+\left[  a_{k,o,o}b_{k,l,o}+a_{k,l,o}b_{k,o,o}+i\epsilon_{ll_{1}l_{2}}\left(
a_{k,l_{1},o}b_{k,l_{2},o}\right)  \right] \\
+a_{k,o,m}b_{k,l,m}+a_{k.l,m}b_{k,o,m}+i\epsilon_{ll_{1}l_{2}}\left(
a_{k,l_{1},m}b_{k,l_{2},m}\right)
\end{array}
\right\}  \text{ }%
\end{equation}

\subsection{Pseudospin vectors and torques}

Similarly, here we have [$k=0$ or $\sum\limits_{k}$, $l=0$ or $\sum
\limits_{l}$] on each factor yielding $2\times2^{2}=8$ vectors, and two
$2$-degrees of freedom for the torques, $2^{2}=4$ torques%

\[
V_{o,o,m}=\left(  \frac{1}{4}\right)  ^{3}\left\{
\begin{array}
[c]{c}%
\left[  \bar{a}_{o,o,o}\bar{b}_{o,o,m}+\bar{a}_{o,o,m}\bar{b}_{o,o,o}%
+i\epsilon_{mm_{1}m_{2}}\left(  \bar{a}_{o,o,m_{1}}\bar{b}_{o,o,m_{2}}\right)
\right] \\
+\left[  a_{k,o,o}b_{k,o,m}+a_{k,o,m}b_{k,o,o}+i\epsilon_{mm_{1}m_{2}}\left(
a_{k,o,m_{1}}b_{k,o,m_{2}}\right)  \right] \\
+\left[  \bar{a}_{o,l,o}\bar{b}_{o,l,m}+\bar{a}_{o,l,m}\bar{b}_{o,l,o}%
+i\epsilon_{mm_{1}m_{2}}\left(  \bar{a}_{o,l,m_{1}}\bar{b}_{o,l,m_{2}}\right)
\right] \\
+\left[  a_{k,l,o}b_{k,l,m}+a_{k,l,m}b_{k,l,o}+i\epsilon_{mm_{1}m_{2}}\left(
a_{k,l,m_{1}}b_{k,l,m_{2}}\right)  \right]
\end{array}
\right\}  \text{ }%
\]
\bigskip

\subsection{Dirac spin-isospin dyadics, torques and entangled torques}

Here we have [$m=0$ or $%
{\displaystyle\sum\limits_{m}}
$, $k,l$ can be distributed in two ways, and $l$ and $k$ can be distributed in
two ways] yielding $2^{3}=8$ dyadics, $2^{3}=8$ torques, and $2^{1}=2$
entangled torques from $m=0$ or $%
{\displaystyle\sum\limits_{m}}
$ only. We have,%

\[
T_{k,l}=\left(  \frac{1}{4}\right)  ^{3}\left\{
\begin{array}
[c]{c}%
\left[
\begin{array}
[c]{c}%
\bar{a}_{o,o,o}b_{k,l,o}+a_{k,l,o}\bar{b}_{o,o,o}+\bar{a}_{o,l,o}%
b_{k,o,o}+a_{k,o,o}\bar{b}_{o,l,o}\\
+i\epsilon_{kk_{1}k_{2}}\left(  a_{k_{1},o,o}b_{k_{2},l,o}\right)
+i\epsilon_{kk_{1}k_{2}}\left(  a_{k_{1},l,o}b_{k_{2},o,o}\right) \\
+\left[  i\epsilon_{ll_{1}l_{2}}\left(  \bar{a}_{o,l_{1},o}b_{k,l_{2}%
,o}\right)  \right]  +i\epsilon_{ll_{1}l_{2}}\left(  a_{k,l_{1},o}\bar
{b}_{o,l_{2},o}\right) \\
+\frac{1}{2}\left[  i\epsilon_{kk_{1}k_{2}}i\epsilon_{ll_{1}l_{2}}\left(
a_{k_{1},l_{1},o}b_{k_{2},l_{2},o}\right)  \right]  +\frac{1}{2}%
i\epsilon_{kk_{1}k_{2}}i\epsilon_{ll_{1}l_{2}}\left(  a_{k_{1},l_{1}%
,o}b_{k_{2},l_{2},o}\right)
\end{array}
\right] \\
+\left[
\begin{array}
[c]{c}%
+\bar{a}_{o,o,m}b_{k,l,m}+a_{k.l,m}\bar{b}_{o,o,m}+\bar{a}_{o,l,m}%
b_{k,o,m}+a_{k,o,m}\bar{b}_{o,l,m}\\
+i\epsilon_{kk_{1}k_{2}}\left(  a_{k_{1},l,m}b_{k_{2},o,m}\right)
+i\epsilon_{kk_{1}k_{2}}\left(  a_{k_{1},o,m}b_{k_{2},l,m}\right) \\
+i\epsilon_{ll_{1}l_{2}}\left(  a_{k,l_{1},m}\bar{b}_{o,l_{2},m}\right)
+\left[  i\epsilon_{ll_{1}l_{2}}\left(  \bar{a}_{o,l_{1},m}b_{k,l_{2}%
,m}\right)  \right] \\
+\frac{1}{2}\left[  i\epsilon_{kk_{1}k_{2}}i\epsilon_{ll_{1}l_{2}}\left(
a_{k_{1},l_{1},m}b_{k_{2},l_{2},m}\right)  \right]  +\frac{1}{2}\left[
i\epsilon_{kk_{1}k_{2}}i\epsilon_{ll_{1}l_{2}}\left(  a_{k_{1},l_{1}%
,m}b_{k_{2},l_{2},m}\right)  \right]
\end{array}
\right]
\end{array}
\right\}  \text{ }%
\]

\subsection{Dirac spin-pseudospin dyadics, torques and entangled torques}

Here we have [$l=0$, or $%
{\displaystyle\sum\limits_{l}}
$, $k,m$ can be distributed in two ways, and $m$ and $k$ can be distributed in
two ways] yielding $2^{3}=8$ dyadics, $2^{3}=8$ torques, and $2^{1}=2$
entangled torques from $l=0$, or $%
{\displaystyle\sum\limits_{l}}
$ only. We have,%

\begin{align*}
T_{k,m}  &  =\left(  \frac{1}{4}\right)  ^{3}\left[
\begin{array}
[c]{c}%
\bar{a}_{o,o,o}b_{k,o,m}+a_{k,o,m}\bar{b}_{o,o,o}+\bar{a}_{o,o,m}%
b_{k,o,o}+a_{k,o,o}\bar{b}_{o,o,m}\\
+i\epsilon_{kk_{1}k_{2}}\left(  a_{k_{1},o,o}b_{k_{2},o,m}\right)
+i\epsilon_{kk_{1}k_{2}}\left(  a_{k_{1},o,m}b_{k_{2},o,o}\right) \\
+i\epsilon_{mm_{1}m_{2}}\left(  \bar{a}_{o,o,m_{1}}b_{k,o,m_{2}}\right)
+i\epsilon_{mm_{1}m_{2}}\left(  a_{k,o,m_{1}}\bar{b}_{o,o,m_{2}}\right) \\
+\frac{1}{2}i\epsilon_{kk_{1}k_{2}}i\epsilon_{mm_{1}m_{2}}\left(
a_{k_{1},o,m_{1}}b_{k_{2},o,m_{2}}\right) \\
+\frac{1}{2}i\epsilon_{kk_{1}k_{2}}i\epsilon_{mm_{1}m_{2}}\left(
a_{k_{1},o,m_{1}}b_{k_{2},o,m_{2}}\right)
\end{array}
\right]  \text{ }\\
&  +\left(  \frac{1}{4}\right)  ^{3}\left[
\begin{array}
[c]{c}%
\bar{a}_{o,l,o}b_{k,l,m}+a_{k.l,m}\bar{b}_{o,l,o}+a_{k,l,o}\bar{b}%
_{o,l,m}+\bar{a}_{o,l,m}b_{k,l,o}\\
+i\epsilon_{kk_{1}k_{2}}\left(  a_{k_{1},l,o}b_{k_{2},l,m}\right)
+i\epsilon_{kk_{1}k_{2}}\left(  a_{k_{1},l,m}b_{k_{2},l,o}\right) \\
+i\epsilon_{mm_{1}m_{2}}\left(  \bar{a}_{o,l,m_{1}}b_{k,l,m2}\right)
+i\epsilon_{mm_{1}m_{2}}\left(  a_{k.l,m_{1}}\bar{b}_{o,l,m_{2}}\right) \\
+\frac{1}{2}i\epsilon_{kk_{1}k_{2}}i\epsilon_{mm_{1}m_{2}}\left(
a_{k_{1},l,m_{1}}b_{k_{2},l,m_{2}}\right)  +\frac{1}{2}i\epsilon_{kk_{1}k_{2}%
}i\epsilon_{mm_{1}m_{2}}\left(  a_{k_{1},o,m_{1}}b_{k_{2},o,m_{2}}\right)
\end{array}
\right]
\end{align*}

\subsection{Isospin-pseudospin dyadics, torques and entangled torques}

Here we have [$k=0$ or $%
{\displaystyle\sum\limits_{k}}
$, $l,m$ can be distributed in two ways, and $m$ and $l$ can be distributed in
two ways] yielding $2^{3}=8$ dyadics, $2^{3}=8$ torques, and $2^{1}=2$
entangled torques from $k=0$ or $%
{\displaystyle\sum\limits_{k}}
$ only. We have,%

\begin{align*}
T_{l,m}  &  =\left(  \frac{1}{4}\right)  ^{3}\left[
\begin{array}
[c]{c}%
\bar{a}_{o,o,o}\bar{b}_{o,l,m}+\bar{a}_{o,l,m}\bar{b}_{o,o,o}+\bar{a}%
_{o,l,o}\bar{b}_{o,o,m}+\bar{a}_{o,o,m}\bar{b}_{o,l,o}\\
+i\epsilon_{mm_{1}m_{2}}\left(  \bar{a}_{o,o,m_{1}}\bar{b}_{o,l,m_{2}}\right)
+i\epsilon_{mm_{1}m_{2}}\left(  \bar{a}_{o,l,m_{1}}\bar{b}_{o,o,m_{2}}\right)
\\
+i\epsilon_{ll_{1}l_{2}}\left(  \bar{a}_{o,l_{1},o}\bar{b}_{o,l_{2},m}\right)
+i\epsilon_{ll_{1}l_{2}}\left(  \bar{a}_{o,l_{1},m}\bar{b}_{o,l_{2},o}\right)
\\
+\frac{1}{2}i\epsilon_{ll_{1}l_{2}}i\epsilon_{mm_{1}m_{2}}\left(  \bar
{a}_{o,l_{1},m_{1}}\bar{b}_{o,l_{2},m_{2}}\right)  +\frac{1}{2}i\epsilon
_{ll_{1}l_{2}}i\epsilon_{mm_{1}m_{2}}\left(  \bar{a}_{o,l_{1},m_{1}}\bar
{b}_{o,l_{2},m_{2}}\right)
\end{array}
\right] \\
&  +\left(  \frac{1}{4}\right)  ^{3}\left[
\begin{array}
[c]{c}%
a_{k,l,o}b_{k,o,m}+a_{k,o,m}b_{k,l,o}+a_{k.l,m}b_{k,o,o}+a_{k,o,o}b_{k,l,m}\\
+i\epsilon_{mm_{1}m_{2}}\left(  a_{k.l,m_{1}}b_{k,o,m_{2}}\right)
+i\epsilon_{mm_{1}m_{2}}\left(  a_{k,o,m_{1}}b_{k,l,m_{2}}\right) \\
+i\epsilon_{ll_{1}l_{2}}\left(  a_{k,l_{1},m}b_{k,l_{2},o}\right)
+i\epsilon_{ll_{1}l_{2}}\left(  a_{k,l_{1},o}b_{k,l_{2},m}\right) \\
+\frac{1}{2}i\epsilon_{ll_{1}l_{2}}i\epsilon_{mm_{1}m_{2}}\left(
a_{k,l_{1},m_{1}}b_{k,l_{2},m_{2}}\right)  +\frac{1}{2}i\epsilon_{ll_{1}l_{2}%
}i\epsilon_{mm_{1}m_{2}}\left(  a_{k,l_{1},m_{1}}b_{k,l_{2},m_{2}}\right)
\end{array}
\right]  \text{ }%
\end{align*}

\subsection{\label{tensor}Dirac spin-isospin-pseudospin tensors, torques, and
entangled 2 and 3 torques}

Here we have [$k,l,$ and $m$ can be considered as three qubits] yielding
$2^{3}=8$ tensors of third rank, the number of torques for each spin is
$2^{2}=4$ yielding for three spins $3\times4=12$ torques, the number of
entngled torques for each pair of torques is $2$ and since there are three
pair of torques that can entangle, the results is $3\times2=6$ entangled
torques. Finally we can only have $1$ entangled three torques We have the result,%

\begin{align*}
&  T_{k,l,m}\\
&  =\left(  \frac{1}{4}\right)  ^{3}\left[
\begin{array}
[c]{c}%
a_{k,l,m}b_{0,0,0}+a_{k,l,0}b_{0,0,m}+a_{k,0,m}b_{0,l,0}+a_{0,l,m}b_{k,0,0}\\
+a_{k,0,0}b_{0,l,m}+a_{0,0,m}b_{k,l,0}+a_{0,l,0}b_{k,0,m}+a_{o,o,o}b_{k.l.m}%
\end{array}
\right] \\
&  +\left(  \frac{1}{4}\right)  ^{3}\left[
\begin{array}
[c]{c}%
\epsilon_{kk_{1}k_{2}}a_{k_{1},0,0}b_{k_{2},l,m}+\epsilon_{kk_{1}k_{2}%
}a_{k_{1},l,0}b_{k_{2},0,m}+\epsilon_{kk_{1}k_{2}}a_{k_{1},0,m}b_{k_{2}%
,l,0}+\epsilon_{kk_{1}k_{2}}a_{k_{1},l,m}b_{k_{2},0,0}\\
+\epsilon_{ll_{1}l_{2}}a_{0,l_{1},0}b_{k,l_{2},m}+\epsilon_{ll_{1}l_{2}%
}a_{k,l_{1},0}b_{0,l_{2},m}+\epsilon_{ll_{1}l_{2}}a_{0,l_{1},m}b_{k,l_{2}%
,0}+\epsilon_{ll_{1}l_{2}}a_{k,l_{1},m}b_{0,l_{2},0}\\
+\epsilon_{m;m_{1},m_{2}}a_{0,0.m_{1}}b_{k,l,m_{2}}+\epsilon_{m;m_{1},m_{2}%
}a_{0,l.m_{1}}b_{k,0,m_{2}}+\epsilon_{m;m_{1},m_{2}}a_{k,0.m_{1}}b_{0,l,m_{2}%
}+\epsilon_{m;m_{1},m_{2}}a_{k,l.m_{1}}b_{0,0,m_{2}}%
\end{array}
\right] \\
&  +\left(  \frac{1}{4}\right)  ^{3}\left[
\begin{array}
[c]{c}%
i\epsilon_{kk_{1}k_{2}}i\epsilon_{ll_{1}l_{2}}\left(  a_{k_{1},l_{1}%
,o}b_{k_{2},l_{2},m}+a_{k_{1},l_{1},m}b_{k_{2},l_{2},o}\right) \\
+i\epsilon_{kk_{1}k_{2}}i\epsilon_{mm_{1}m_{2}}\left(  a_{k_{1},o,m_{1}%
}b_{k_{2},l,m_{2}}+a_{k_{1},l,m_{1}}b_{k_{2},o,m_{2}}\right) \\
+i\epsilon_{ll_{1}l_{2}}i\epsilon_{mm_{1}m_{2}}\left(  a_{o,l_{1},m_{1}%
}b_{k,l_{2},m_{2}}+a_{k,l_{1},m_{1}}b_{o,l_{2},m_{2}}\right) \\
+i\epsilon_{kk_{1}k_{2}}i\epsilon_{ll_{1}l_{2}}i\epsilon_{mm_{1}m_{2}}\left(
a_{k_{1},l_{1},m_{1}}b_{k_{2},l_{2},m_{2}}\right)
\end{array}
\right]
\end{align*}

\section{\label{qtens3}Quantum Transport Equations for $N_{s}=3$}

The equation for $G_{o,o,o}^{\gtrless}$ has been given in the text. Here we
give the rest of the equations. We have for the Pauli-Dirac spin,%

\begin{align}
&  2^{N_{s}}i\hbar\left(  \frac{\partial}{\partial t_{1}}+\frac{\partial
}{\partial t_{2}}\right)  G_{k,o,o}^{\gtrless}=\nonumber\\
&  =\left[  \mathcal{\tilde{H}},G^{\lessgtr}\right]  _{k,o,o}+\left[
\Sigma^{\lessgtr},\mathrm{Re}G^{r}\right]  _{k,o,o}-\frac{i}{2}\left\{
\Gamma,G^{\lessgtr}\right\}  _{k,o,o}+\frac{i}{2}\left\{  \Sigma^{\lessgtr
},A\right\}  _{k,o,o} \label{QT2}%
\end{align}
where,%
\begin{align}
&  \left[  \mathcal{\tilde{H}},G^{\lessgtr}\right]  _{k,o,o}=\nonumber\\
&  =%
\begin{array}
[c]{c}%
\left[  H_{o,o,o},G_{k,o,o}^{\gtrless}\right]  +\left[  H_{o,o,m}%
,G_{k,o,m}^{\gtrless}\right]  +i\epsilon_{kk_{1}k_{2}}\left\{  H_{k_{1}%
,o,o},G_{k_{2},o,o}^{\gtrless}\right\} \\
+\left[  H_{o,l,o},G_{k,l,o}^{\gtrless}\right]  +\left[  H_{o,l,m}%
,G_{k,l,m}^{\gtrless}\right]  +i\epsilon_{kk_{1}k_{2}}\left\{  H_{k_{1}%
,l,o},G_{k_{2},l,o}^{\gtrless}\right\} \\
+\left[  H_{k,o,o},G_{o,o,o}^{\gtrless}\right]  +\left[  H_{k,o,m}%
,G_{o,o,m}^{\gtrless}\right]  +i\epsilon_{kk_{1}k_{2}}\left\{  H_{k_{1}%
,o,m},G_{k_{2},o,m}^{\gtrless}\right\} \\
+\left[  H_{k,l,o},G_{o,l,o}^{\gtrless}\right]  +\left[  H_{k.l,m}%
,G_{o,l,m}^{\gtrless}\right]  +i\epsilon_{kk_{1}k_{2}}\left\{  H_{k_{1}%
,l,m},G_{k_{2},l,m}^{\gtrless}\right\}
\end{array}
\label{QT2-1}%
\end{align}%
\begin{align}
&  \left[  \Sigma^{\lessgtr},\mathrm{Re}G^{r}\right]  _{k,o,o}=\nonumber\\
&  =%
\begin{array}
[c]{c}%
\left[  \Sigma_{o,o,o}^{\lessgtr},\mathrm{Re}G_{k,o,o}^{r}\right]  +\left[
\Sigma_{o,o,m}^{\lessgtr},\mathrm{Re}G_{k,o,m}^{r}\right]  +i\epsilon
_{kk_{1}k_{2}}\left\{  \Sigma_{k_{1},o,o}^{\lessgtr},\mathrm{Re}G_{k_{2}%
,o,o}^{r}\right\} \\
+\left[  \Sigma_{o,l,o}^{\lessgtr},\mathrm{Re}G_{k,l,o}^{r}\right]  +\left[
\Sigma_{o,l,m}^{\lessgtr},\mathrm{Re}G_{k,l,m}^{r}\right]  +i\epsilon
_{kk_{1}k_{2}}\left\{  \Sigma_{k_{1},l,o}^{\lessgtr},\mathrm{Re}G_{k_{2}%
,l,o}^{r}\right\} \\
+\left[  \Sigma_{k,o,o}^{\lessgtr},\mathrm{Re}G_{o,o,o}^{r}\right]  +\left[
\Sigma_{k,o,m}^{\lessgtr},\mathrm{Re}G_{o,o,m}^{r}\right]  +i\epsilon
_{kk_{1}k_{2}}\left\{  \Sigma_{k_{1},o,m}^{\lessgtr},\mathrm{Re}G_{k_{2}%
,o,m}^{r}\right\} \\
+\left[  \Sigma_{k,l,o}^{\lessgtr},\mathrm{Re}G_{o,l,o}^{r}\right]  +\left[
\Sigma_{k.l,m}^{\lessgtr},\mathrm{Re}G_{o,l,m}^{r}\right]  +i\epsilon
_{kk_{1}k_{2}}\left\{  \Sigma_{k_{1},l,m}^{\lessgtr},\mathrm{Re}G_{k_{2}%
,l,m}^{r}\right\}
\end{array}
\nonumber\\
&  \label{QT2-2}%
\end{align}%
\begin{align}
&  \left\{  \Gamma,G^{\lessgtr}\right\}  _{k,o,o}=\nonumber\\
&  =%
\begin{array}
[c]{c}%
\left\{  \Gamma_{o,o,o},G_{k,o,o}^{\gtrless}\right\}  +\left\{  \Gamma
_{o,o,m},G_{k,o,m}^{\gtrless}\right\}  +i\epsilon_{kk_{1}k_{2}}\left[
\Gamma_{k_{1},o,o},G_{k_{2},o,o}^{\gtrless}\right] \\
+\left\{  \Gamma_{o,l,o},G_{k,l,o}^{\gtrless}\right\}  +\left\{
\Gamma_{o,l,m},G_{k,l,m}^{\gtrless}\right\}  +i\epsilon_{kk_{1}k_{2}}\left[
\Gamma_{k_{1},l,o},G_{k_{2},l,o}^{\gtrless}\right] \\
+\left\{  \Gamma_{k,o,o},G_{o,o,o}^{\gtrless}\right\}  +\left\{
\Gamma_{k,o,m},G_{o,o,m}^{\gtrless}\right\}  +i\epsilon_{kk_{1}k_{2}}\left[
\Gamma_{k_{1},o,m},G_{k_{2},o,m}^{\gtrless}\right] \\
+\left\{  \Gamma_{k,l,o},G_{o,l,o}^{\gtrless}\right\}  +\left\{
\Gamma_{k.l,m},G_{o,l,m}^{\gtrless}\right\}  +i\epsilon_{kk_{1}k_{2}}\left[
\Gamma_{k_{1},l,m},G_{k_{2},l,m}^{\gtrless}\right]
\end{array}
\label{QT2-3}%
\end{align}%
\begin{align}
&  \left\{  \Sigma^{\lessgtr},A\right\}  _{k,o,o}=\nonumber\\
&  =%
\begin{array}
[c]{c}%
\left\{  \Sigma_{o,o,o}^{\lessgtr},A_{k,o,o}\right\}  +\left\{  \Sigma
_{o,o,m}^{\lessgtr},A_{k,o,m}\right\}  +i\epsilon_{kk_{1}k_{2}}\left[
\Sigma_{k_{1},o,o}^{\lessgtr},A_{k_{2},o,o}\right] \\
+\left\{  \Sigma_{o,l,o}^{\lessgtr},A_{k,l,o}\right\}  +\left\{
\Sigma_{o,l,m}^{\lessgtr},A_{k,l,m}\right\}  +i\epsilon_{kk_{1}k_{2}}\left[
\Sigma_{k_{1},l,o}^{\lessgtr},A_{k_{2},l,o}\right] \\
+\left\{  \Sigma_{k,o,o}^{\lessgtr},A_{o,o,o}\right\}  +\left\{
\Sigma_{k,o,m}^{\lessgtr},A_{o,o,m}\right\}  +i\epsilon_{kk_{1}k_{2}}\left[
\Sigma_{k_{1},o,m}^{\lessgtr},A_{k_{2},o,m}\right] \\
+\left\{  \Sigma_{k,l,o}^{\lessgtr},A_{o,l,o}\right\}  +\left\{
\Sigma_{k.l,m}^{\lessgtr},A_{o,l,m}\right\}  +i\epsilon_{kk_{1}k_{2}}\left[
\Sigma_{k_{1},l,m}^{\lessgtr},A_{k_{2},l,m}\right]
\end{array}
\label{QT2-4}%
\end{align}
We have for the valley spin,%

\begin{align}
&  2^{N_{s}}i\hbar\left(  \frac{\partial}{\partial t_{1}}+\frac{\partial
}{\partial t_{2}}\right)  G_{o,l,o}^{\gtrless}=\nonumber\\
&  \left[  \mathcal{\tilde{H}},G^{\lessgtr}\right]  _{o,l,o}+\left[
\Sigma^{\lessgtr},\mathrm{Re}G^{r}\right]  _{o,l,o}-\frac{i}{2}\left\{
\Gamma,G^{\lessgtr}\right\}  _{o,l,o}+\frac{i}{2}\left\{  \Sigma^{\lessgtr
},A\right\}  _{o,l,o} \label{QT3}%
\end{align}
where%
\begin{align}
&  \left[  \mathcal{\tilde{H}},G^{\lessgtr}\right]  _{o,l,o}=\nonumber\\
&  =%
\begin{array}
[c]{c}%
\left[  H_{o,o,o},G_{o,l,o}^{\gtrless}\right]  +\left[  H_{o,o,m}%
,G_{o,l,m}^{\gtrless}\right]  +i\epsilon_{ll_{1}l_{2}}\left\{  H_{o,l_{1}%
,o},G_{o,l_{2},o}^{\gtrless}\right\} \\
+\left[  H_{o,l,o},G_{o,o,o}^{\lessgtr}\right]  +\left[  H_{o,l,m}%
,G_{o,o,m}^{\gtrless}\right]  +i\epsilon_{ll_{1}l_{2}}\left\{  H_{o,l_{1}%
,m},G_{o,l_{2},m}^{\gtrless}\right\} \\
+\left[  H_{k,o,o},G_{k,l,o}^{\gtrless}\right]  +\left[  H_{k,o,m}%
,G_{k,l,m}^{\gtrless}\right]  +i\epsilon_{ll_{1}l_{2}}\left\{  H_{k,l_{1}%
,o},G_{k,l_{2},o}^{\gtrless}\right\} \\
+\left[  H_{k,l,o},G_{k,o,o}^{\gtrless}\right]  +\left[  H_{k.l,m}%
,G_{k,o,m}^{\gtrless}\right]  +i\epsilon_{ll_{1}l_{2}}\left\{  H_{k,l_{1}%
,m},G_{k,l_{2},m}^{\gtrless}\right\}
\end{array}
\label{QT3-1}%
\end{align}%
\begin{align}
&  \left[  \Sigma^{\lessgtr},\mathrm{Re}G^{r}\right]  _{o,l,o}=\nonumber\\
&  =%
\begin{array}
[c]{c}%
\left[  \Sigma_{o,o,o}^{\gtrless},\mathrm{Re}G_{o,l,o}^{r}\right]  +\left[
\Sigma_{o,o,m}^{\gtrless},\mathrm{Re}G_{o,l,m}^{r}\right]  +i\epsilon
_{ll_{1}l_{2}}\left\{  \Sigma_{o,l_{1},o}^{\gtrless},\mathrm{Re}G_{o,l_{2}%
,o}^{r}\right\} \\
+\left[  \Sigma_{o,l,o}^{\gtrless},\mathrm{Re}G_{o,o,o}^{r}\right]  +\left[
\Sigma_{o,l,m}^{\gtrless},\mathrm{Re}G_{o,o,m}^{r}\right]  +i\epsilon
_{ll_{1}l_{2}}\left\{  \Sigma_{o,l_{1},m}^{\gtrless},\mathrm{Re}G_{o,l_{2}%
,m}^{r}\right\} \\
+\left[  \Sigma_{k,o,o}^{\gtrless},\mathrm{Re}G_{k,l,o}^{r}\right]  +\left[
\Sigma_{k,o,m}^{\gtrless},\mathrm{Re}G_{k,l,m}^{r}\right]  +i\epsilon
_{ll_{1}l_{2}}\left\{  \Sigma_{k,l_{1},o}^{\gtrless},\mathrm{Re}G_{k,l_{2}%
,o}^{r}\right\} \\
+\left[  \Sigma_{k,l,o}^{\gtrless},\mathrm{Re}G_{k,o,o}^{r}\right]  +\left[
\Sigma_{k.l,m}^{\gtrless},\mathrm{Re}G_{k,o,m}^{r}\right]  +i\epsilon
_{ll_{1}l_{2}}\left\{  \Sigma_{k,l_{1},m}^{\gtrless},\mathrm{Re}G_{k,l_{2}%
,m}^{r}\right\}
\end{array}
\nonumber\\
&  \label{QT3-2}%
\end{align}%
\begin{align}
&  \left\{  \Gamma,G^{\lessgtr}\right\}  _{o,l,o}=\nonumber\\
&  =%
\begin{array}
[c]{c}%
\left\{  \Gamma_{o,o,o},G_{o,l,o}^{\gtrless}\right\}  +\left\{  \Gamma
_{o,o,m},G_{o,l,m}^{\gtrless}\right\}  +i\epsilon_{ll_{1}l_{2}}\left[
\Gamma_{o,l_{1},o},G_{o,l_{2},o}^{\gtrless}\right] \\
+\left\{  \Gamma_{o,l,o},G_{o,o,o}^{\lessgtr}\right\}  +\left\{
\Gamma_{o,l,m},G_{o,o,m}^{\gtrless}\right\}  +i\epsilon_{ll_{1}l_{2}}\left[
\Gamma_{o,l_{1},m},G_{o,l_{2},m}^{\gtrless}\right] \\
+\left\{  \Gamma_{k,o,o},G_{k,l,o}^{\gtrless}\right\}  +\left\{
\Gamma_{k,o,m},G_{k,l,m}^{\gtrless}\right\}  +i\epsilon_{ll_{1}l_{2}}\left[
\Gamma_{k,l_{1},o},G_{k,l_{2},o}^{\gtrless}\right] \\
+\left\{  \Gamma_{k,l,o},G_{k,o,o}^{\gtrless}\right\}  +\left\{
\Gamma_{k.l,m},G_{k,o,m}^{\gtrless}\right\}  +i\epsilon_{ll_{1}l_{2}}\left[
\Gamma_{k,l_{1},m},G_{k,l_{2},m}^{\gtrless}\right]
\end{array}
\label{QT3-3}%
\end{align}%
\begin{align}
&  \left\{  \Sigma^{\lessgtr},A\right\}  _{o,l,o}=\nonumber\\
&  =%
\begin{array}
[c]{c}%
\left\{  \Sigma_{o,o,o}^{\gtrless},A_{o,l,o}\right\}  +\left\{  \Sigma
_{o,o,m}^{\gtrless},A_{o,l,m}\right\}  +i\epsilon_{ll_{1}l_{2}}\left[
\Sigma_{o,l_{1},o}^{\gtrless},A_{o,l_{2},o}\right] \\
+\left\{  \Sigma_{o,l,o}^{\gtrless},A_{o,o,o}\right\}  +\left\{
\Sigma_{o,l,m}^{\gtrless},A_{o,o,m}\right\}  +i\epsilon_{ll_{1}l_{2}}\left[
\Sigma_{o,l_{1},m}^{\gtrless},A_{o,l_{2},m}\right] \\
+\left\{  \Sigma_{k,o,o}^{\gtrless},A_{k,l,o}\right\}  +\left\{
\Sigma_{k,o,m}^{\gtrless},A_{k,l,m}\right\}  +i\epsilon_{ll_{1}l_{2}}\left[
\Sigma_{k,l_{1},o}^{\gtrless},A_{k,l_{2},o}\right] \\
+\left\{  \Sigma_{k,l,o}^{\gtrless},A_{k,o,o}\right\}  +\left\{
\Sigma_{k.l,m}^{\gtrless},A_{k,o,m}\right\}  +i\epsilon_{ll_{1}l_{2}}\left[
\Sigma_{k,l_{1},m}^{\gtrless},A_{k,l_{2},m}\right]
\end{array}
\label{QT3-4}%
\end{align}
We have for the pseudospin,%

\begin{align}
&  2^{N_{s}}i\hbar\left(  \frac{\partial}{\partial t_{1}}+\frac{\partial
}{\partial t_{2}}\right)  G_{o,o,m}^{\gtrless}=\nonumber\\
&  =\left[  \mathcal{\tilde{H}},G^{\lessgtr}\right]  _{o,o,m}+\left[
\Sigma^{\lessgtr},\mathrm{Re}G^{r}\right]  _{o,o,m}-\frac{i}{2}\left\{
\Gamma,G^{\lessgtr}\right\}  _{o,o,m}+\frac{i}{2}\left\{  \Sigma^{\lessgtr
},A\right\}  _{o,o,m} \label{QT4}%
\end{align}
where,%
\begin{align}
&  \left[  \mathcal{\tilde{H}},G^{\lessgtr}\right]  _{o,o,m}=\nonumber\\
&  =%
\begin{array}
[c]{c}%
\left[  H_{o,o,o},G_{o,o,m}^{\gtrless}\right]  +\left[  H_{o,o,m}%
,G_{o,o,o}^{\gtrless}\right]  +i\epsilon_{mm_{1}m_{2}}\left\{  H_{o,o,m_{1}%
},G_{o,o,m_{2}}^{\gtrless}\right\} \\
+\left[  H_{o,l,o},G_{o,l,m}^{\gtrless}\right]  +\left[  H_{o,l,m}%
,G_{o,l,o}^{\gtrless}\right]  +i\epsilon_{mm_{1}m_{2}}\left\{  H_{o,l,m_{1}%
},G_{o,l,m_{2}}^{\gtrless}\right\} \\
+\left[  H_{k,o,o},G_{k,o,m}^{\gtrless}\right]  +\left[  H_{k,o,m}%
,G_{k,o,o}^{\gtrless}\right]  +i\epsilon_{mm_{1}m_{2}}\left\{  H_{k,o,m_{1}%
},G_{k,o,m_{2}}^{\gtrless}\right\} \\
+\left[  H_{k,l,o},G_{k,l,m}^{\gtrless}\right]  +\left[  H_{k,l,m}%
,G_{k,l,o}^{\gtrless}\right]  +i\epsilon_{mm_{1}m_{2}}\left\{  H_{k,l,m_{1}%
},G_{k,l,m_{2}}^{\gtrless}\right\}
\end{array}
\label{QT4-1}%
\end{align}%
\begin{align}
&  \left[  \Sigma^{\lessgtr},\mathrm{Re}G^{r}\right]  _{o,o,m}=\nonumber\\
&  =%
\begin{array}
[c]{c}%
\left[  \Sigma_{o,o,o}^{\lessgtr},\mathrm{Re}G_{o,o,m}^{r}\right]  +\left[
\Sigma_{o,o,m}^{\lessgtr},\mathrm{Re}G_{o,o,o}^{r}\right]  +i\epsilon
_{mm_{1}m_{2}}\left\{  \Sigma_{o,o,m_{1}}^{\lessgtr},\mathrm{Re}G_{o,o,m_{2}%
}^{r}\right\} \\
+\left[  \Sigma_{o,l,o}^{\lessgtr},\mathrm{Re}G_{o,l,m}^{r}\right]  +\left[
\Sigma_{o,l,m}^{\lessgtr},\mathrm{Re}G_{o,l,o}^{r}\right]  +i\epsilon
_{mm_{1}m_{2}}\left\{  \Sigma_{o,l,m_{1}}^{\lessgtr},\mathrm{Re}G_{o,l,m_{2}%
}^{r}\right\} \\
+\left[  \Sigma_{k,o,o}^{\lessgtr},\mathrm{Re}G_{k,o,m}^{r}\right]  +\left[
\Sigma_{k,o,m}^{\lessgtr},\mathrm{Re}G_{k,o,o}^{r}\right]  +i\epsilon
_{mm_{1}m_{2}}\left\{  \Sigma_{k,o,m_{1}}^{\lessgtr},\mathrm{Re}G_{k,o,m_{2}%
}^{r}\right\} \\
+\left[  \Sigma_{k,l,o}^{\lessgtr},\mathrm{Re}G_{k,l,m}^{r}\right]  +\left[
\Sigma_{k,l,m}^{\lessgtr},\mathrm{Re}G_{k,l,o}^{r}\right]  +i\epsilon
_{mm_{1}m_{2}}\left\{  \Sigma_{k,l,m_{1}}^{\lessgtr},\mathrm{Re}G_{k,l,m_{2}%
}^{r}\right\}
\end{array}
\nonumber\\
&  \label{QT4-2}%
\end{align}%
\begin{align}
&  \left\{  \Gamma,G^{\lessgtr}\right\}  _{o,o,m}=\nonumber\\
&  =%
\begin{array}
[c]{c}%
\left\{  \Gamma_{o,o,o},G_{o,o,m}^{\gtrless}\right\}  +\left\{  \Gamma
_{o,o,m},G_{o,o,o}^{\gtrless}\right\}  +i\epsilon_{mm_{1}m_{2}}\left[
\Gamma_{o,o,m_{1}},G_{o,o,m_{2}}^{\gtrless}\right] \\
+\left\{  \Gamma_{o,l,o},G_{o,l,m}^{\gtrless}\right\}  +\left\{
\Gamma_{o,l,m},G_{o,l,o}^{\gtrless}\right\}  +i\epsilon_{mm_{1}m_{2}}\left[
\Gamma_{o,l,m_{1}},G_{o,l,m_{2}}^{\gtrless}\right] \\
+\left\{  \Gamma_{k,o,o},G_{k,o,m}^{\gtrless}\right\}  +\left\{
\Gamma_{k,o,m},G_{k,o,o}^{\gtrless}\right\}  +i\epsilon_{mm_{1}m_{2}}\left[
\Gamma_{k,o,m_{1}},G_{k,o,m_{2}}^{\gtrless}\right] \\
+\left\{  \Gamma_{k,l,o},G_{k,l,m}^{\gtrless}\right\}  +\left\{
\Gamma_{k,l,m},G_{k,l,o}^{\gtrless}\right\}  +i\epsilon_{mm_{1}m_{2}}\left[
\Gamma_{k,l,m_{1}},G_{k,l,m_{2}}^{\gtrless}\right]
\end{array}
\label{QT4-3}%
\end{align}%
\begin{align}
&  \left\{  \Sigma^{\lessgtr},A\right\}  _{o,o,m}=\nonumber\\
&  =%
\begin{array}
[c]{c}%
\left\{  \Sigma_{o,o,o}^{\lessgtr},A_{o,o,m}\right\}  +\left\{  \Sigma
_{o,o,m}^{\lessgtr},A_{o,o,o}\right\}  +i\epsilon_{mm_{1}m_{2}}\left[
\Sigma_{o,o,m_{1}}^{\lessgtr},A_{o,o,m_{2}}\right] \\
+\left\{  \Sigma_{o,l,o}^{\lessgtr},A_{o,l,m}\right\}  +\left\{
\Sigma_{o,l,m}^{\lessgtr},A_{o,l,o}\right\}  +i\epsilon_{mm_{1}m_{2}}\left[
\Sigma_{o,l,m_{1}}^{\lessgtr},A_{o,l,m_{2}}\right] \\
+\left\{  \Sigma_{k,o,o}^{\lessgtr},A_{k,o,m}\right\}  +\left\{
\Sigma_{k,o,m}^{\lessgtr},A_{k,o,o}\right\}  +i\epsilon_{mm_{1}m_{2}}\left[
\Sigma_{k,o,m_{1}}^{\lessgtr},A_{k,o,m_{2}}\right] \\
+\left\{  \Sigma_{k,l,o}^{\lessgtr},A_{k,l,m}\right\}  +\left\{
\Sigma_{k,l,m}^{\lessgtr},A_{k,l,o}\right\}  +i\epsilon_{mm_{1}m_{2}}\left[
\Sigma_{k,l,m_{1}}^{\lessgtr},A_{k,l,m_{2}}\right]
\end{array}
\label{QT4-4}%
\end{align}

We have for the entangled Dirac spin and valley spin,%

\begin{align}
&  2^{N_{s}}i\hbar\left(  \frac{\partial}{\partial t_{1}}+\frac{\partial
}{\partial t_{2}}\right)  G_{k,l,o}^{\gtrless}=\nonumber\\
&  =\left[  \mathcal{\tilde{H}},G^{\lessgtr}\right]  _{k,l,o}+\left[
\Sigma^{\lessgtr},\mathrm{Re}G^{r}\right]  _{k,l,o}-\frac{i}{2}\left\{
\Gamma,G^{\lessgtr}\right\}  _{k,l,o}+\frac{i}{2}\left\{  \Sigma^{\lessgtr
},A\right\}  _{k,l,o} \label{QT5}%
\end{align}
where,%
\begin{align}
&  \left[  \mathcal{\tilde{H}},G^{\lessgtr}\right]  _{k,l,o}=\nonumber\\
&  =%
\begin{array}
[c]{c}%
\begin{array}
[c]{c}%
\left[  H_{o,o,o},G_{k,l,o}^{\lessgtr}\right]  +\left[  H_{o,o,m}%
,G_{k,l,m}^{\lessgtr}\right]  +i\epsilon_{kk_{1}k_{2}}\left\{  H_{k_{1}%
,o,o},G_{k_{2},l,o}^{\lessgtr}\right\} \\
+i\epsilon_{ll_{1}l_{2}}\left\{  H_{o,l_{1},o},G_{k,l_{2},o}^{\lessgtr
}\right\}  +\frac{1}{2}i\epsilon_{kk_{1}k_{2}}i\epsilon_{ll_{1}l_{2}}\left[
H_{k_{1},l_{1},o},G_{k_{2},l_{2},o}^{\lessgtr}\right]
\end{array}
\\
+%
\begin{array}
[c]{c}%
\left[  H_{o,l,o},G_{k,o,o}^{\lessgtr}\right]  +\left[  H_{o,l,m}%
,G_{k,o,m}^{\lessgtr}\right]  +i\epsilon_{kk_{1}k_{2}}\left\{  H_{k_{1}%
,l,o},G_{k_{2},o,o}^{\lessgtr}\right\} \\
+i\epsilon_{ll_{1}l_{2}}\left\{  H_{k,l_{1},o}G_{o,l_{2},o}^{\lessgtr
}\right\}  +\frac{1}{2}i\epsilon_{kk_{1}k_{2}}i\epsilon_{ll_{1}l_{2}}\left[
H_{k_{1},l_{1},o},G_{k_{2},l_{2},o}^{\lessgtr}\right]
\end{array}
\\
+%
\begin{array}
[c]{c}%
\left[  H_{k,l,o},G_{o,o,o}^{\lessgtr}\right]  +\left[  H_{k.l,m}%
,G_{o,o,m}^{\lessgtr}\right]  +i\epsilon_{kk_{1}k_{2}}\left\{  H_{k_{1}%
,l,m},G_{k_{2},o,m}^{\lessgtr}\right\} \\
+i\epsilon_{ll_{1}l_{2}}\left\{  H_{k,l_{1},m},G_{o,l_{2},m}^{\lessgtr
}\right\}  +\frac{1}{2}i\epsilon_{kk_{1}k_{2}}i\epsilon_{ll_{1}l_{2}}\left[
H_{k_{1},l_{1},m},G_{k_{2},l_{2},m}^{\lessgtr}\right]
\end{array}
\\
+%
\begin{array}
[c]{c}%
\left[  H_{k,o,o},G_{o,l,o}^{\lessgtr}\right]  +\left[  H_{k,o,m}%
,G_{o,l,m}^{\lessgtr}\right]  +i\epsilon_{kk_{1}k_{2}}\left\{  H_{k_{1}%
,o,m},G_{k_{2},l,m}^{\lessgtr}\right\} \\
+i\epsilon_{ll_{1}l_{2}}\left\{  H_{o,l_{1},m},G_{k,l_{2},m}^{\lessgtr
}\right\}  +\frac{1}{2}i\epsilon_{kk_{1}k_{2}}i\epsilon_{ll_{1}l_{2}}\left[
H_{k_{1},l_{1},m},G_{k_{2},l_{2},m}^{\lessgtr}\right]
\end{array}
\end{array}
\label{QT5-1}%
\end{align}%
\begin{align}
&  \left[  \Sigma^{\lessgtr},\mathrm{Re}G^{r}\right]  _{k,l,o}=\nonumber\\
&  =%
\begin{array}
[c]{c}%
\begin{array}
[c]{c}%
\left[  \Sigma_{o,o,o}^{\lessgtr},\mathrm{Re}G_{k,l,o}^{r}\right]  +\left[
\Sigma_{o,o,m}^{\lessgtr},\mathrm{Re}G_{k,l,m}^{r}\right]  +i\epsilon
_{kk_{1}k_{2}}\left\{  \Sigma_{k_{1},o,o}^{\lessgtr},\mathrm{Re}G_{k_{2}%
,l,o}^{r}\right\} \\
+i\epsilon_{ll_{1}l_{2}}\left\{  \Sigma_{o,l_{1},o}^{\lessgtr},\mathrm{Re}%
G_{k,l_{2},o}^{r}\right\}  +\frac{1}{2}i\epsilon_{kk_{1}k_{2}}i\epsilon
_{ll_{1}l_{2}}\left[  \Sigma_{k_{1},l_{1},o}^{\lessgtr},\mathrm{Re}%
G_{k_{2},l_{2},o}^{r}\right]
\end{array}
\\%
\begin{array}
[c]{c}%
+\left[  \Sigma_{o,l,o}^{\lessgtr},\mathrm{Re}G_{k,o,o}^{r}\right]  +\left[
\Sigma_{o,l,m}^{\lessgtr},\mathrm{Re}G_{k,o,m}^{r}\right]  +i\epsilon
_{kk_{1}k_{2}}\left\{  \Sigma_{k_{1},l,o}^{\lessgtr},\mathrm{Re}G_{k_{2}%
,o,o}^{r}\right\} \\
+i\epsilon_{ll_{1}l_{2}}\left\{  \Sigma_{k,l_{1},o}^{\lessgtr}\mathrm{Re}%
G_{o,l_{2},o}^{r}\right\}  +\frac{1}{2}i\epsilon_{kk_{1}k_{2}}i\epsilon
_{ll_{1}l_{2}}\left[  \Sigma_{k_{1},l_{1},o}^{\lessgtr},\mathrm{Re}%
G_{k_{2},l_{2},o}^{r}\right]
\end{array}
\\%
\begin{array}
[c]{c}%
+\left[  \Sigma_{k,l,o}^{\lessgtr},\mathrm{Re}G_{o,o,o}^{r}\right]  +\left[
\Sigma_{k.l,m}^{\lessgtr},\mathrm{Re}G_{o,o,m}^{r}\right]  +i\epsilon
_{kk_{1}k_{2}}\left\{  \Sigma_{k_{1},l,m}^{\lessgtr},\mathrm{Re}G_{k_{2}%
,o,m}^{r}\right\} \\
+i\epsilon_{ll_{1}l_{2}}\left\{  \Sigma_{k,l_{1},m}^{\lessgtr},\mathrm{Re}%
G_{o,l_{2},m}^{r}\right\}  +\frac{1}{2}i\epsilon_{kk_{1}k_{2}}i\epsilon
_{ll_{1}l_{2}}\left[  \Sigma_{k_{1},l_{1},m}^{\lessgtr},\mathrm{Re}%
G_{k_{2},l_{2},m}^{r}\right]
\end{array}
\\%
\begin{array}
[c]{c}%
+\left[  \Sigma_{k,o,o}^{\lessgtr},\mathrm{Re}G_{o,l,o}^{r}\right]  +\left[
\Sigma_{k,o,m}^{\lessgtr},\mathrm{Re}G_{o,l,m}^{r}\right]  +i\epsilon
_{kk_{1}k_{2}}\left\{  \Sigma_{k_{1},o,m}^{\lessgtr},\mathrm{Re}G_{k_{2}%
,l,m}^{r}\right\} \\
+i\epsilon_{ll_{1}l_{2}}\left\{  \Sigma_{o,l_{1},m}^{\lessgtr},\mathrm{Re}%
G_{k,l_{2},m}^{r}\right\}  +\frac{1}{2}i\epsilon_{kk_{1}k_{2}}i\epsilon
_{ll_{1}l_{2}}\left[  \Sigma_{k_{1},l_{1},m}^{\lessgtr},\mathrm{Re}%
G_{k_{2},l_{2},m}^{r}\right]
\end{array}
\end{array}
\nonumber\\
&  \label{QT5-2}%
\end{align}%
\begin{align}
&  \left\{  \Gamma,G^{\lessgtr}\right\}  _{k,l,o}=\nonumber\\
&  =%
\begin{array}
[c]{c}%
\begin{array}
[c]{c}%
\left\{  \Gamma_{o,o,o},G_{k,l,o}^{\lessgtr}\right\}  +\left\{  \Gamma
_{o,o,m},G_{k,l,m}^{\lessgtr}\right\}  +i\epsilon_{kk_{1}k_{2}}\left[
\Gamma_{k_{1},o,o},G_{k_{2},l,o}^{\lessgtr}\right] \\
+i\epsilon_{ll_{1}l_{2}}\left[  \Gamma_{o,l_{1},o},G_{k,l_{2},o}^{\lessgtr
}\right]  +\frac{1}{2}i\epsilon_{kk_{1}k_{2}}i\epsilon_{ll_{1}l_{2}}\left\{
\Gamma_{k_{1},l_{1},o},G_{k_{2},l_{2},o}^{\lessgtr}\right\}
\end{array}
\\%
\begin{array}
[c]{c}%
+\left\{  \Gamma_{o,l,o},G_{k,o,o}^{\lessgtr}\right\}  +\left\{
\Gamma_{o,l,m},G_{k,o,m}^{\lessgtr}\right\}  +i\epsilon_{kk_{1}k_{2}}\left[
\Gamma_{k_{1},l,o},G_{k_{2},o,o}^{\lessgtr}\right] \\
+i\epsilon_{ll_{1}l_{2}}\left[  \Gamma_{k,l_{1},o}G_{o,l_{2},o}^{\lessgtr
}\right]  +\frac{1}{2}i\epsilon_{kk_{1}k_{2}}i\epsilon_{ll_{1}l_{2}}\left\{
\Gamma_{k_{1},l_{1},o},G_{k_{2},l_{2},o}^{\lessgtr}\right\}
\end{array}
\\%
\begin{array}
[c]{c}%
+\left\{  \Gamma_{k,l,o},G_{o,o,o}^{\lessgtr}\right\}  +\left\{
\Gamma_{k.l,m},G_{o,o,m}^{\lessgtr}\right\}  +i\epsilon_{kk_{1}k_{2}}\left[
\Gamma_{k_{1},l,m},G_{k_{2},o,m}^{\lessgtr}\right] \\
+i\epsilon_{ll_{1}l_{2}}\left[  \Gamma_{k,l_{1},m},G_{o,l_{2},m}^{\lessgtr
}\right]  +\frac{1}{2}i\epsilon_{kk_{1}k_{2}}i\epsilon_{ll_{1}l_{2}}\left\{
\Gamma_{k_{1},l_{1},m},G_{k_{2},l_{2},m}^{\lessgtr}\right\}
\end{array}
\\%
\begin{array}
[c]{c}%
+\left\{  \Gamma_{k,o,o},G_{o,l,o}^{\lessgtr}\right\}  +\left\{
\Gamma_{k,o,m},G_{o,l,m}^{\lessgtr}\right\}  +i\epsilon_{kk_{1}k_{2}}\left[
\Gamma_{k_{1},o,m},G_{k_{2},l,m}^{\lessgtr}\right] \\
+i\epsilon_{ll_{1}l_{2}}\left[  \Gamma_{o,l_{1},m},G_{k,l_{2},m}^{\lessgtr
}\right]  +\frac{1}{2}i\epsilon_{kk_{1}k_{2}}i\epsilon_{ll_{1}l_{2}}\left\{
\Gamma_{k_{1},l_{1},m},G_{k_{2},l_{2},m}^{\lessgtr}\right\}
\end{array}
\end{array}
\label{QT5-3}%
\end{align}%
\begin{align}
&  \left\{  \Sigma^{\lessgtr},A\right\}  _{k,l,o}=\nonumber\\
&  =%
\begin{array}
[c]{c}%
\begin{array}
[c]{c}%
\left\{  \Sigma_{o,o,o}^{\lessgtr},A_{k,l,o}\right\}  +\left\{  \Sigma
_{o,o,m}^{\lessgtr},A_{k,l,m}\right\}  +i\epsilon_{kk_{1}k_{2}}\left[
\Sigma_{k_{1},o,o}^{\lessgtr},A_{k_{2},l,o}\right] \\
+i\epsilon_{ll_{1}l_{2}}\left[  \Sigma_{o,l_{1},o}^{\lessgtr},A_{k,l_{2}%
,o}\right]  +\frac{1}{2}i\epsilon_{kk_{1}k_{2}}i\epsilon_{ll_{1}l_{2}}\left\{
\Sigma_{k_{1},l_{1},o}^{\lessgtr},A_{k_{2},l_{2},o}\right\}
\end{array}
\\
+%
\begin{array}
[c]{c}%
\left\{  \Sigma_{o,l,o}^{\lessgtr},A_{k,o,o}\right\}  +\left\{  \Sigma
_{o,l,m}^{\lessgtr},A_{k,o,m}\right\}  +i\epsilon_{kk_{1}k_{2}}\left[
\Sigma_{k_{1},l,o}^{\lessgtr},A_{k_{2},o,o}\right] \\
+i\epsilon_{ll_{1}l_{2}}\left[  \Sigma_{k,l_{1},o}^{\lessgtr}A_{o,l_{2}%
,o}\right]  +\frac{1}{2}i\epsilon_{kk_{1}k_{2}}i\epsilon_{ll_{1}l_{2}}\left\{
\Sigma_{k_{1},l_{1},o}^{\lessgtr},A_{k_{2},l_{2},o}\right\}
\end{array}
\\
+%
\begin{array}
[c]{c}%
\left\{  \Sigma_{k,l,o}^{\lessgtr},A_{o,o,o}\right\}  +\left\{  \Sigma
_{k.l,m}^{\lessgtr},A_{o,o,m}\right\}  +i\epsilon_{kk_{1}k_{2}}\left[
\Sigma_{k_{1},l,m}^{\lessgtr},A_{k_{2},o,m}\right] \\
+i\epsilon_{ll_{1}l_{2}}\left[  \Sigma_{k,l_{1},m}^{\lessgtr},A_{o,l_{2}%
,m}\right]  +\frac{1}{2}i\epsilon_{kk_{1}k_{2}}i\epsilon_{ll_{1}l_{2}}\left\{
\Sigma_{k_{1},l_{1},m}^{\lessgtr},A_{k_{2},l_{2},m}\right\}
\end{array}
\\
+%
\begin{array}
[c]{c}%
\left\{  \Sigma_{k,o,o}^{\lessgtr},A_{o,l,o}\right\}  +\left\{  \Sigma
_{k,o,m}^{\lessgtr},A_{o,l,m}\right\}  +i\epsilon_{kk_{1}k_{2}}\left[
\Sigma_{k_{1},o,m}^{\lessgtr},A_{k_{2},l,m}\right] \\
+i\epsilon_{ll_{1}l_{2}}\left[  \Sigma_{o,l_{1},m}^{\lessgtr},A_{k,l_{2}%
,m}\right]  +\frac{1}{2}i\epsilon_{kk_{1}k_{2}}i\epsilon_{ll_{1}l_{2}}\left\{
\Sigma_{k_{1},l_{1},m}^{\lessgtr},A_{k_{2},l_{2},m}\right\}
\end{array}
\end{array}
\label{QT5-4}%
\end{align}

We have for the entangled Dirac spin and pseudospin,%

\begin{align}
&  2^{N_{s}}i\hbar\left(  \frac{\partial}{\partial t_{1}}+\frac{\partial
}{\partial t_{2}}\right)  G_{k,o,m}^{\gtrless}=\nonumber\\
&  =\left[  \mathcal{\tilde{H}},G^{\lessgtr}\right]  _{k,o,m}+\left[
\Sigma^{\lessgtr},\mathrm{Re}G^{r}\right]  _{k,o,m}-\frac{i}{2}\left\{
\Gamma,G^{\lessgtr}\right\}  _{k,o,m}+\frac{i}{2}\left\{  \Sigma^{\lessgtr
},A\right\}  _{k,o,m} \label{QT6}%
\end{align}
where,%
\begin{align}
&  \left[  \mathcal{\tilde{H}},G^{\lessgtr}\right]  _{k,o,m}=\nonumber\\
&
\begin{array}
[c]{c}%
\begin{array}
[c]{c}%
\left[  H_{o,o,o},G_{k,o,m}^{\lessgtr}\right]  +\left[  H_{o,o,m}%
,G_{k,o,o}^{\lessgtr}\right]  +i\epsilon_{kk_{1}k_{2}}\left\{  H_{k_{1}%
,o,o},G_{k_{2},o,m}^{\lessgtr}\right\} \\
+i\epsilon_{mm_{1}m_{2}}\left\{  H_{o,o,m_{1}},G_{k,o,m_{2}}^{\lessgtr
}\right\}  +\frac{1}{2}i\epsilon_{kk_{1}k_{2}}i\epsilon_{mm_{1}m_{2}}\left[
H_{k_{1},o,m_{1}}G_{k_{2},o,m_{2}}^{\lessgtr}\right]
\end{array}
\\%
\begin{array}
[c]{c}%
+\left[  H_{o,l,o},G_{k,l,m}^{\lessgtr}\right]  +\left[  H_{o,l,m}%
,G_{k,l,o}^{\lessgtr}\right]  +i\epsilon_{kk_{1}k_{2}}\left\{  H_{k_{1}%
,l,o},G_{k_{2},l,m}^{\lessgtr}\right\} \\
+i\epsilon_{mm_{1}m_{2}}\left\{  H_{o,l,m_{1}},G_{k,l,m2}^{\lessgtr}\right\}
+\frac{1}{2}i\epsilon_{kk_{1}k_{2}}i\epsilon_{mm_{1}m_{2}}\left[
H_{k_{1},l,m_{1}},G_{k_{2},l,m_{2}}^{\lessgtr}\right]
\end{array}
\\%
\begin{array}
[c]{c}%
+\left[  H_{k,o,o},G_{o,o,m}^{\lessgtr}\right]  +\left[  H_{k,o,m}%
,G_{o,o,o}^{\lessgtr}\right]  +i\epsilon_{kk_{1}k_{2}}\left\{  H_{k_{1}%
,o,m},G_{k_{2},o,o}^{\lessgtr}\right\} \\
+i\epsilon_{mm_{1}m_{2}}\left\{  H_{k,o,m_{1}},G_{o,o,m_{2}}^{\lessgtr
}\right\}  +\frac{1}{2}i\epsilon_{kk_{1}k_{2}}i\epsilon_{mm_{1}m_{2}}\left[
H_{k_{1},o,m_{1}},G_{k_{2},o,m_{2}}^{\lessgtr}\right]
\end{array}
\\%
\begin{array}
[c]{c}%
\left[  H_{k,l,o},G_{o,l,m}^{\lessgtr}\right]  +\left[  H_{k.l,m}%
,G_{o,l,o}^{\lessgtr}\right]  +i\epsilon_{kk_{1}k_{2}}\left\{  H_{k_{1}%
,l,m},G_{k_{2},l,o}^{\lessgtr}\right\} \\
+i\epsilon_{mm_{1}m_{2}}\left\{  H_{k.l,m_{1}},G_{o,l,m_{2}}^{\lessgtr
}\right\}  +\frac{1}{2}i\epsilon_{kk_{1}k_{2}}i\epsilon_{mm_{1}m_{2}}\left[
H_{k_{1},o,m_{1}},G_{k_{2},o,m_{2}}^{\lessgtr}\right]
\end{array}
\end{array}
\nonumber\\
&  \label{QT6-1}%
\end{align}%
\begin{align}
&  \left[  \Sigma^{\lessgtr},\mathrm{Re}G^{r}\right]  _{k,o,m}=\nonumber\\
&
\begin{array}
[c]{c}%
\begin{array}
[c]{c}%
\left[  \Sigma_{o,o,o}^{\lessgtr},\mathrm{Re}G_{k,o,m}^{r}\right]  +\left[
\Sigma_{o,o,m}^{\lessgtr},\mathrm{Re}G_{k,o,o}^{r}\right]  +i\epsilon
_{kk_{1}k_{2}}\left\{  \Sigma_{k_{1},o,o}^{\lessgtr},\mathrm{Re}G_{k_{2}%
,o,m}^{r}\right\} \\
+i\epsilon_{mm_{1}m_{2}}\left\{  \Sigma_{o,o,m_{1}}^{\lessgtr},\mathrm{Re}%
G_{k,o,m_{2}}^{r}\right\}  +\frac{1}{2}i\epsilon_{kk_{1}k_{2}}i\epsilon
_{mm_{1}m_{2}}\left[  \Sigma_{k_{1},o,m_{1}}^{\lessgtr}\mathrm{Re}%
G_{k_{2},o,m_{2}}^{r}\right]
\end{array}
\\%
\begin{array}
[c]{c}%
\left[  \Sigma_{o,l,o}^{\lessgtr},\mathrm{Re}G_{k,l,m}^{r}\right]  +\left[
\Sigma_{o,l,m}^{\lessgtr},\mathrm{Re}G_{k,l,o}^{r}\right]  +i\epsilon
_{kk_{1}k_{2}}\left\{  \Sigma_{k_{1},l,o}^{\lessgtr},\mathrm{Re}G_{k_{2}%
,l,m}^{r}\right\} \\
+i\epsilon_{mm_{1}m_{2}}\left\{  \Sigma_{o,l,m_{1}}^{\lessgtr},\mathrm{Re}%
G_{k,l,m2}^{r}\right\}  +\frac{1}{2}i\epsilon_{kk_{1}k_{2}}i\epsilon
_{mm_{1}m_{2}}\left[  \Sigma_{k_{1},l,m_{1}}^{\lessgtr},\mathrm{Re}%
G_{k_{2},l,m_{2}}^{r}\right]
\end{array}
\\%
\begin{array}
[c]{c}%
+\left[  \Sigma_{k,o,o}^{\lessgtr},\mathrm{Re}G_{o,o,m}^{r}\right]  +\left[
\Sigma_{k,o,m}^{\lessgtr},\mathrm{Re}G_{o,o,o}^{r}\right]  +i\epsilon
_{kk_{1}k_{2}}\left\{  \Sigma_{k_{1},o,m}^{\lessgtr},\mathrm{Re}G_{k_{2}%
,o,o}^{r}\right\} \\
+i\epsilon_{mm_{1}m_{2}}\left\{  \Sigma_{k,o,m_{1}}^{\lessgtr},\mathrm{Re}%
G_{o,o,m_{2}}^{r}\right\}  +\frac{1}{2}i\epsilon_{kk_{1}k_{2}}i\epsilon
_{mm_{1}m_{2}}\left[  \Sigma_{k_{1},o,m_{1}}^{\lessgtr},\mathrm{Re}%
G_{k_{2},o,m_{2}}^{r}\right]
\end{array}
\\%
\begin{array}
[c]{c}%
\left[  \Sigma_{k,l,o}^{\lessgtr},\mathrm{Re}G_{o,l,m}^{r}\right]  +\left[
\Sigma_{k.l,m}^{\lessgtr},\mathrm{Re}G_{o,l,o}^{r}\right]  +i\epsilon
_{kk_{1}k_{2}}\left\{  \Sigma_{k_{1},l,m}^{\lessgtr},\mathrm{Re}G_{k_{2}%
,l,o}^{r}\right\} \\
+i\epsilon_{mm_{1}m_{2}}\left\{  \Sigma_{k.l,m_{1}}^{\lessgtr},\mathrm{Re}%
G_{o,l,m_{2}}^{r}\right\}  +\frac{1}{2}i\epsilon_{kk_{1}k_{2}}i\epsilon
_{mm_{1}m_{2}}\left[  \Sigma_{k_{1},o,m_{1}}^{\lessgtr},\mathrm{Re}%
G_{k_{2},o,m_{2}}^{r}\right]
\end{array}
\end{array}
\nonumber\\
&  \label{QT6-2}%
\end{align}%
\begin{align}
&  \left\{  \Gamma,G^{\lessgtr}\right\}  _{k,o,m}=\nonumber\\
&  =%
\begin{array}
[c]{c}%
\begin{array}
[c]{c}%
\left\{  \Gamma_{o,o,o},G_{k,o,m}^{\lessgtr}\right\}  +\left\{  \Gamma
_{o,o,m},G_{k,o,o}^{\lessgtr}\right\}  +i\epsilon_{kk_{1}k_{2}}\left[
\Gamma_{k_{1},o,o},G_{k_{2},o,m}^{\lessgtr}\right] \\
+i\epsilon_{mm_{1}m_{2}}\left[  \Gamma_{o,o,m_{1}},G_{k,o,m_{2}}^{\lessgtr
}\right]  +\frac{1}{2}i\epsilon_{kk_{1}k_{2}}i\epsilon_{mm_{1}m_{2}}\left\{
\Gamma_{k_{1},o,m_{1}}G_{k_{2},o,m_{2}}^{\lessgtr}\right\}
\end{array}
\\%
\begin{array}
[c]{c}%
+\left\{  \Gamma_{o,l,o},G_{k,l,m}^{\lessgtr}\right\}  +\left\{
\Gamma_{o,l,m},G_{k,l,o}^{\lessgtr}\right\}  +i\epsilon_{kk_{1}k_{2}}\left[
\Gamma_{k_{1},l,o},G_{k_{2},l,m}^{\lessgtr}\right] \\
+i\epsilon_{mm_{1}m_{2}}\left[  \Gamma_{o,l,m_{1}},G_{k,l,m2}^{\lessgtr
}\right]  +\frac{1}{2}i\epsilon_{kk_{1}k_{2}}i\epsilon_{mm_{1}m_{2}}\left\{
\Gamma_{k_{1},l,m_{1}},G_{k_{2},l,m_{2}}^{\lessgtr}\right\}
\end{array}
\\%
\begin{array}
[c]{c}%
+\left\{  \Gamma_{k,o,o},G_{o,o,m}^{\lessgtr}\right\}  +\left\{
\Gamma_{k,o,m},G_{o,o,o}^{\lessgtr}\right\}  +i\epsilon_{kk_{1}k_{2}}\left[
\Gamma_{k_{1},o,m},G_{k_{2},o,o}^{\lessgtr}\right] \\
+i\epsilon_{mm_{1}m_{2}}\left[  \Gamma_{k,o,m_{1}},G_{o,o,m_{2}}^{\lessgtr
}\right]  +\frac{1}{2}i\epsilon_{kk_{1}k_{2}}i\epsilon_{mm_{1}m_{2}}\left\{
\Gamma_{k_{1},o,m_{1}},G_{k_{2},o,m_{2}}^{\lessgtr}\right\}
\end{array}
\\%
\begin{array}
[c]{c}%
+\left\{  \Gamma_{k,l,o},G_{o,l,m}^{\lessgtr}\right\}  +\left\{
\Gamma_{k.l,m},G_{o,l,o}^{\lessgtr}\right\}  +i\epsilon_{kk_{1}k_{2}}\left[
\Gamma_{k_{1},l,m},G_{k_{2},l,o}^{\lessgtr}\right] \\
+i\epsilon_{mm_{1}m_{2}}\left[  \Gamma_{k.l,m_{1}},G_{o,l,m_{2}}^{\lessgtr
}\right]  +\frac{1}{2}i\epsilon_{kk_{1}k_{2}}i\epsilon_{mm_{1}m_{2}}\left\{
\Gamma_{k_{1},o,m_{1}},G_{k_{2},o,m_{2}}^{\lessgtr}\right\}
\end{array}
\end{array}
\nonumber\\
&  \label{QT6-3}%
\end{align}%
\begin{align}
&  \left\{  \Sigma^{\lessgtr},A\right\}  _{k,o,m}=\nonumber\\
&  =%
\begin{array}
[c]{c}%
\begin{array}
[c]{c}%
\left\{  \Sigma_{o,o,o}^{\lessgtr},A_{k,o,m}\right\}  +\left\{  \Sigma
_{o,o,m}^{\lessgtr},A_{k,o,o}\right\}  +i\epsilon_{kk_{1}k_{2}}\left[
\Sigma_{k_{1},o,o}^{\lessgtr},A_{k_{2},o,m}\right] \\
+i\epsilon_{mm_{1}m_{2}}\left[  \Sigma_{o,o,m_{1}}^{\lessgtr},A_{k,o,m_{2}%
}\right]  +\frac{1}{2}i\epsilon_{kk_{1}k_{2}}i\epsilon_{mm_{1}m_{2}}\left\{
\Sigma_{k_{1},o,m_{1}}^{\lessgtr}A_{k_{2},o,m_{2}}\right\}
\end{array}
\\%
\begin{array}
[c]{c}%
+\left\{  \Sigma_{o,l,o}^{\lessgtr},A_{k,l,m}\right\}  +\left\{
\Sigma_{o,l,m}^{\lessgtr},A_{k,l,o}\right\}  +i\epsilon_{kk_{1}k_{2}}\left[
\Sigma_{k_{1},l,o}^{\lessgtr},A_{k_{2},l,m}\right] \\
+i\epsilon_{mm_{1}m_{2}}\left[  \Sigma_{o,l,m_{1}}^{\lessgtr},A_{k,l,m2}%
\right]  +\frac{1}{2}i\epsilon_{kk_{1}k_{2}}i\epsilon_{mm_{1}m_{2}}\left\{
\Sigma_{k_{1},l,m_{1}}^{\lessgtr},A_{k_{2},l,m_{2}}\right\}
\end{array}
\\%
\begin{array}
[c]{c}%
+\left\{  \Sigma_{k,o,o}^{\lessgtr},A_{o,o,m}\right\}  +\left\{
\Sigma_{k,o,m}^{\lessgtr},A_{o,o,o}\right\}  +i\epsilon_{kk_{1}k_{2}}\left[
\Sigma_{k_{1},o,m}^{\lessgtr},A_{k_{2},o,o}\right] \\
+i\epsilon_{mm_{1}m_{2}}\left[  \Sigma_{k,o,m_{1}}^{\lessgtr},A_{o,o,m_{2}%
}\right]  +\frac{1}{2}i\epsilon_{kk_{1}k_{2}}i\epsilon_{mm_{1}m_{2}}\left\{
\Sigma_{k_{1},o,m_{1}}^{\lessgtr},A_{k_{2},o,m_{2}}\right\}
\end{array}
\\%
\begin{array}
[c]{c}%
+\left\{  \Sigma_{k,l,o}^{\lessgtr},A_{o,l,m}\right\}  +\left\{
\Sigma_{k.l,m}^{\lessgtr},A_{o,l,o}\right\}  +i\epsilon_{kk_{1}k_{2}}\left[
\Sigma_{k_{1},l,m}^{\lessgtr},A_{k_{2},l,o}\right] \\
+i\epsilon_{mm_{1}m_{2}}\left[  \Sigma_{k.l,m_{1}}^{\lessgtr},A_{o,l,m_{2}%
}\right]  +\frac{1}{2}i\epsilon_{kk_{1}k_{2}}i\epsilon_{mm_{1}m_{2}}\left\{
\Sigma_{k_{1},o,m_{1}}^{\lessgtr},A_{k_{2},o,m_{2}}\right\}
\end{array}
\end{array}
\nonumber\\
&  \label{QT6-4}%
\end{align}

We have for the entangled valley spin and pseudospin,%

\begin{align}
&  2^{N_{s}}i\hbar\left(  \frac{\partial}{\partial t_{1}}+\frac{\partial
}{\partial t_{2}}\right)  G_{o,l,m}^{\gtrless}=\nonumber\\
&  =\left[  \mathcal{\tilde{H}},G^{\lessgtr}\right]  _{o,l,m}+\left[
\Sigma^{\lessgtr},\mathrm{Re}G^{r}\right]  _{o,l,m}-\frac{i}{2}\left\{
\Gamma,G^{\lessgtr}\right\}  _{o,l,m}+\frac{i}{2}\left\{  \Sigma^{\lessgtr
},A\right\}  _{o,l,m} \label{QT7}%
\end{align}
where,%
\begin{align}
&  \left[  \mathcal{\tilde{H}},G^{\lessgtr}\right]  _{o,l,m}=\nonumber\\
&  =%
\begin{array}
[c]{c}%
\begin{array}
[c]{c}%
\left[  H_{o,o,o},G_{o,l,m}^{\lessgtr}\right]  +\left[  H_{o,o,m}%
,G_{o,l,o}^{\lessgtr}\right]  +i\epsilon_{mm_{1}m_{2}}\left\{  H_{o,o,m_{1}%
},G_{o,l,m_{2}}^{\lessgtr}\right\} \\
+i\epsilon_{ll_{1}l_{2}}\left\{  H_{o,l_{1},o},G_{o,l_{2},m}^{\lessgtr
}\right\}  +\frac{1}{2}i\epsilon_{ll_{1}l_{2}}i\epsilon_{mm_{1}m_{2}}\left[
H_{o,l_{1},m_{1}},G_{o,l_{2},m_{2}}^{\lessgtr}\right]
\end{array}
\\%
\begin{array}
[c]{c}%
+\left[  H_{o,l,o},G_{o,o,m}^{\lessgtr}\right]  +\left[  H_{o,l,m}%
,G_{o,o,o}^{\lessgtr}\right]  +i\epsilon_{mm_{1}m_{2}}\left\{  H_{o,l,m_{1}%
},G_{o,o,m_{2}}^{\lessgtr}\right\} \\
+i\epsilon_{ll_{1}l_{2}}\left\{  H_{o,l_{1},m},G_{o,l_{2},o}^{\lessgtr
}\right\}  +\frac{1}{2}i\epsilon_{ll_{1}l_{2}}i\epsilon_{mm_{1}m_{2}}\left[
H_{o,l_{1},m_{1}},G_{o,l_{2},m_{2}}^{\lessgtr}\right]
\end{array}
\\%
\begin{array}
[c]{c}%
+\left[  H_{k,l,o},G_{k,o,m}^{\lessgtr}\right]  +\left[  H_{k.l,m}%
,G_{k,o,o}^{\lessgtr}\right]  +i\epsilon_{mm_{1}m_{2}}\left\{  H_{k.l,m_{1}%
},G_{k,o,m_{2}}^{\lessgtr}\right\} \\
+i\epsilon_{ll_{1}l_{2}}\left\{  H_{k,l_{1},m},G_{k,l_{2},o}^{\lessgtr
}\right\}  +\frac{1}{2}i\epsilon_{ll_{1}l_{2}}i\epsilon_{mm_{1}m_{2}}\left[
H_{k,l_{1},m_{1}},G_{k,l_{2},m_{2}}^{\lessgtr}\right]
\end{array}
\\%
\begin{array}
[c]{c}%
+\left[  H_{k,o,o},G_{k,l,m}^{\lessgtr}\right]  +\left[  H_{k,o,m}%
,G_{k,l,o}^{\lessgtr}\right]  +i\epsilon_{mm_{1}m_{2}}\left\{  H_{k,o,m_{1}%
},G_{k,l,m_{2}}^{\lessgtr}\right\} \\
+i\epsilon_{ll_{1}l_{2}}\left\{  H_{k,l_{1},o},G_{k,l_{2},m}^{\lessgtr
}\right\}  +\frac{1}{2}i\epsilon_{ll_{1}l_{2}}i\epsilon_{mm_{1}m_{2}}\left[
H_{k,l_{1},m_{1}},G_{k,l_{2},m_{2}}^{\lessgtr}\right]
\end{array}
\end{array}
\nonumber\\
&  \label{QT7-1}%
\end{align}%
\begin{align}
&  \left[  \Sigma^{\lessgtr},\mathrm{Re}G^{r}\right]  _{o,l,m}=\nonumber\\
&  =%
\begin{array}
[c]{c}%
\begin{array}
[c]{c}%
\begin{array}
[c]{c}%
\left[  \Sigma_{o,o,o}^{\lessgtr},\mathrm{Re}G_{o,l,m}^{r}\right]  +\left[
\Sigma_{o,o,m}^{\lessgtr},\mathrm{Re}G_{o,l,o}^{r}\right]  +i\epsilon
_{mm_{1}m_{2}}\left\{  \Sigma_{o,o,m_{1}}^{\lessgtr},\mathrm{Re}G_{o,l,m_{2}%
}^{r}\right\} \\
+i\epsilon_{ll_{1}l_{2}}\left\{  \Sigma_{o,l_{1},o}^{\lessgtr},\mathrm{Re}%
G_{o,l_{2},m}^{r}\right\}  +\frac{1}{2}i\epsilon_{ll_{1}l_{2}}i\epsilon
_{mm_{1}m_{2}}\left[  \Sigma_{o,l_{1},m_{1}}^{\lessgtr},\mathrm{Re}%
G_{o,l_{2},m_{2}}^{r}\right]
\end{array}
\\%
\begin{array}
[c]{c}%
+\left[  \Sigma_{o,l,o}^{\lessgtr},\mathrm{Re}G_{o,o,m}^{r}\right]  +\left[
\Sigma_{o,l,m}^{\lessgtr},\mathrm{Re}G_{o,o,o}^{r}\right]  +i\epsilon
_{mm_{1}m_{2}}\left\{  \Sigma_{o,l,m_{1}}^{\lessgtr},\mathrm{Re}G_{o,o,m_{2}%
}^{r}\right\} \\
+i\epsilon_{ll_{1}l_{2}}\left\{  \Sigma_{o,l_{1},m}^{\lessgtr},\mathrm{Re}%
G_{o,l_{2},o}^{r}\right\}  +\frac{1}{2}i\epsilon_{ll_{1}l_{2}}i\epsilon
_{mm_{1}m_{2}}\left[  \Sigma_{o,l_{1},m_{1}}^{\lessgtr},\mathrm{Re}%
G_{o,l_{2},m_{2}}^{r}\right]
\end{array}
\end{array}
\\%
\begin{array}
[c]{c}%
+\left[  \Sigma_{k,l,o}^{\lessgtr},\mathrm{Re}G_{k,o,m}^{r}\right]  +\left[
\Sigma_{k.l,m}^{\lessgtr},\mathrm{Re}G_{k,o,o}^{r}\right]  +i\epsilon
_{mm_{1}m_{2}}\left\{  \Sigma_{k.l,m_{1}}^{\lessgtr},\mathrm{Re}G_{k,o,m_{2}%
}^{r}\right\} \\
+i\epsilon_{ll_{1}l_{2}}\left\{  \Sigma_{k,l_{1},m}^{\lessgtr},\mathrm{Re}%
G_{k,l_{2},o}^{r}\right\}  +\frac{1}{2}i\epsilon_{ll_{1}l_{2}}i\epsilon
_{mm_{1}m_{2}}\left[  \Sigma_{k,l_{1},m_{1}}^{\lessgtr},\mathrm{Re}%
G_{k,l_{2},m_{2}}^{r}\right]
\end{array}
\\%
\begin{array}
[c]{c}%
+\left[  \Sigma_{k,o,o}^{\lessgtr},\mathrm{Re}G_{k,l,m}^{r}\right]  +\left[
\Sigma_{k,o,m}^{\lessgtr},\mathrm{Re}G_{k,l,o}^{r}\right]  +i\epsilon
_{mm_{1}m_{2}}\left\{  \Sigma_{k,o,m_{1}}^{\lessgtr},\mathrm{Re}G_{k,l,m_{2}%
}^{r}\right\} \\
+i\epsilon_{ll_{1}l_{2}}\left\{  \Sigma_{k,l_{1},o}^{\lessgtr},\mathrm{Re}%
G_{k,l_{2},m}^{r}\right\}  +\frac{1}{2}i\epsilon_{ll_{1}l_{2}}i\epsilon
_{mm_{1}m_{2}}\left[  \Sigma_{k,l_{1},m_{1}}^{\lessgtr},\mathrm{Re}%
G_{k,l_{2},m_{2}}^{r}\right]
\end{array}
\end{array}
\nonumber\\
&  \label{QT7-2}%
\end{align}%
\begin{align}
&  \left\{  \Gamma,G^{\lessgtr}\right\}  _{o,l,m}=\nonumber\\
&  =%
\begin{array}
[c]{c}%
\begin{array}
[c]{c}%
\left\{  \Gamma_{o,o,o},G_{o,l,m}^{\lessgtr}\right\}  +\left\{  \Gamma
_{o,o,m},G_{o,l,o}^{\lessgtr}\right\}  +i\epsilon_{mm_{1}m_{2}}\left[
\Gamma_{o,o,m_{1}},G_{o,l,m_{2}}^{\lessgtr}\right] \\
+i\epsilon_{ll_{1}l_{2}}\left[  \Gamma_{o,l_{1},o},G_{o,l_{2},m}^{\lessgtr
}\right]  +\frac{1}{2}i\epsilon_{ll_{1}l_{2}}i\epsilon_{mm_{1}m_{2}}\left\{
\Gamma_{o,l_{1},m_{1}},G_{o,l_{2},m_{2}}^{\lessgtr}\right\}
\end{array}
\\%
\begin{array}
[c]{c}%
+\left\{  \Gamma_{o,l,o},G_{o,o,m}^{\lessgtr}\right\}  +\left\{
\Gamma_{o,l,m},G_{o,o,o}^{\lessgtr}\right\}  +i\epsilon_{mm_{1}m_{2}}\left[
\Gamma_{o,l,m_{1}},G_{o,o,m_{2}}^{\lessgtr}\right] \\
+i\epsilon_{ll_{1}l_{2}}\left[  \Gamma_{o,l_{1},m},G_{o,l_{2},o}^{\lessgtr
}\right]  +\frac{1}{2}i\epsilon_{ll_{1}l_{2}}i\epsilon_{mm_{1}m_{2}}\left\{
\Gamma_{o,l_{1},m_{1}},G_{o,l_{2},m_{2}}^{\lessgtr}\right\}
\end{array}
\\%
\begin{array}
[c]{c}%
+\left\{  \Gamma_{k,l,o},G_{k,o,m}^{\lessgtr}\right\}  +\left\{
\Gamma_{k.l,m},G_{k,o,o}^{\lessgtr}\right\}  +i\epsilon_{mm_{1}m_{2}}\left[
\Gamma_{k.l,m_{1}},G_{k,o,m_{2}}^{\lessgtr}\right] \\
+i\epsilon_{ll_{1}l_{2}}\left[  \Gamma_{k,l_{1},m},G_{k,l_{2},o}^{\lessgtr
}\right]  +\frac{1}{2}i\epsilon_{ll_{1}l_{2}}i\epsilon_{mm_{1}m_{2}}\left\{
\Gamma_{k,l_{1},m_{1}},G_{k,l_{2},m_{2}}^{\lessgtr}\right\}
\end{array}
\\%
\begin{array}
[c]{c}%
+\left\{  \Gamma_{k,o,o},G_{k,l,m}^{\lessgtr}\right\}  +\left\{
\Gamma_{k,o,m},G_{k,l,o}^{\lessgtr}\right\}  +i\epsilon_{mm_{1}m_{2}}\left[
\Gamma_{k,o,m_{1}},G_{k,l,m_{2}}^{\lessgtr}\right] \\
+i\epsilon_{ll_{1}l_{2}}\left[  \Gamma_{k,l_{1},o},G_{k,l_{2},m}^{\lessgtr
}\right]  +\frac{1}{2}i\epsilon_{ll_{1}l_{2}}i\epsilon_{mm_{1}m_{2}}\left\{
\Gamma_{k,l_{1},m_{1}},G_{k,l_{2},m_{2}}^{\lessgtr}\right\}
\end{array}
\end{array}
\nonumber\\
&  \label{QT7-3}%
\end{align}%
\begin{align}
&  \left\{  \Sigma^{\lessgtr},A\right\}  _{o,l,m}=\nonumber\\
&  =%
\begin{array}
[c]{c}%
\begin{array}
[c]{c}%
\left\{  \Sigma_{o,o,o}^{\lessgtr},A_{o,l,m}\right\}  +\left\{  \Sigma
_{o,o,m}^{\lessgtr},A_{o,l,o}\right\}  +i\epsilon_{mm_{1}m_{2}}\left[
\Sigma_{o,o,m_{1}}^{\lessgtr},A_{o,l,m_{2}}\right] \\
+i\epsilon_{ll_{1}l_{2}}\left[  \Sigma_{o,l_{1},o}^{\lessgtr},A_{o,l_{2}%
,m}\right]  +\frac{1}{2}i\epsilon_{ll_{1}l_{2}}i\epsilon_{mm_{1}m_{2}}\left\{
\Sigma_{o,l_{1},m_{1}}^{\lessgtr},A_{o,l_{2},m_{2}}\right\}
\end{array}
\\%
\begin{array}
[c]{c}%
+\left[  \Sigma_{o,l,o}^{\lessgtr},A_{o,o,m}\right]  +\left[  \Sigma
_{o,l,m}^{\lessgtr},A_{o,o,o}\right]  +i\epsilon_{mm_{1}m_{2}}\left\{
\Sigma_{o,l,m_{1}}^{\lessgtr},A_{o,o,m_{2}}\right\} \\
+i\epsilon_{ll_{1}l_{2}}\left\{  \Sigma_{o,l_{1},m}^{\lessgtr},A_{o,l_{2}%
,o}\right\}  +\frac{1}{2}i\epsilon_{ll_{1}l_{2}}i\epsilon_{mm_{1}m_{2}}\left[
\Sigma_{o,l_{1},m_{1}}^{\lessgtr},A_{o,l_{2},m_{2}}\right]
\end{array}
\\%
\begin{array}
[c]{c}%
+\left\{  \Sigma_{k,l,o}^{\lessgtr},A_{k,o,m}\right\}  +\left\{
\Sigma_{k.l,m}^{\lessgtr},A_{k,o,o}\right\}  +i\epsilon_{mm_{1}m_{2}}\left[
\Sigma_{k.l,m_{1}}^{\lessgtr},A_{k,o,m_{2}}\right] \\
+i\epsilon_{ll_{1}l_{2}}\left[  \Sigma_{k,l_{1},m}^{\lessgtr},A_{k,l_{2}%
,o}\right]  +\frac{1}{2}i\epsilon_{ll_{1}l_{2}}i\epsilon_{mm_{1}m_{2}}\left\{
\Sigma_{k,l_{1},m_{1}}^{\lessgtr},A_{k,l_{2},m_{2}}\right\}
\end{array}
\\%
\begin{array}
[c]{c}%
+\left[  \Sigma_{k,o,o}^{\lessgtr},A_{k,l,m}\right]  +\left[  \Sigma
_{k,o,m}^{\lessgtr},A_{k,l,o}\right]  +i\epsilon_{mm_{1}m_{2}}\left\{
\Sigma_{k,o,m_{1}}^{\lessgtr},A_{k,l,m_{2}}\right\} \\
+i\epsilon_{ll_{1}l_{2}}\left\{  \Sigma_{k,l_{1},o}^{\lessgtr},A_{k,l_{2}%
,m}\right\}  +\frac{1}{2}i\epsilon_{ll_{1}l_{2}}i\epsilon_{mm_{1}m_{2}}\left[
\Sigma_{k,l_{1},m_{1}}^{\lessgtr},A_{k,l_{2},m_{2}}\right]
\end{array}
\end{array}
\nonumber\\
&  \label{QT7-4}%
\end{align}

We have for the entangled Dirac spin, valley spin, and pseudospin,%

\begin{align}
&  2^{N_{s}}i\hbar\left(  \frac{\partial}{\partial t_{1}}+\frac{\partial
}{\partial t_{2}}\right)  G_{k,l,m}^{\gtrless}=\nonumber\\
&  =\left[  \mathcal{\tilde{H}},G^{\lessgtr}\right]  _{k,l,m}+\left[
\Sigma^{\lessgtr},\mathrm{Re}G^{r}\right]  _{k,l,m}-\frac{i}{2}\left\{
\Gamma,G^{\lessgtr}\right\}  _{k,l,m}+\frac{i}{2}\left\{  \Sigma^{\lessgtr
},A\right\}  _{k,l,m} \label{QT8}%
\end{align}
where,%
\begin{align}
&  \left[  \mathcal{\tilde{H}},G^{\lessgtr}\right]  _{k,l,m}=\nonumber\\
&  =\left[
\begin{array}
[c]{c}%
\left[  H_{k,l,m}\ ,G_{0,0,0}^{\lessgtr}\right]  +\left[  H_{k,l,0}%
\ ,G_{0,0,m}^{\lessgtr}\right]  +\left[  H_{k,0,m}\ ,G_{0,l,0}^{\lessgtr
}\right]  +\left[  H_{0,l,m}\ ,G_{k,0,0}^{\lessgtr}\right] \\
+\left[  H_{k,0,0}\ ,G_{0,l,m}^{\lessgtr}\right]  +\left[  H_{0,0,m}%
\ ,G_{k,l,0}^{\lessgtr}\right]  +\left[  H_{0,l,0}\ ,G_{k,0,m}^{\lessgtr
}\right]  +\left[  H_{o,o,o}\ ,G_{k.l.m}^{\lessgtr}\right]
\end{array}
\right] \nonumber\\
&  +\left[
\begin{array}
[c]{c}%
\epsilon_{kk_{1}k_{2}}\left\{  H_{k_{1},0,0}\ ,G_{k_{2},l,m}^{\lessgtr
}\right\}  +\epsilon_{kk_{1}k_{2}}\left\{  H_{k_{1},l,0}\ ,G_{k_{2}%
,0,m}^{\lessgtr}\right\} \\
+\epsilon_{kk_{1}k_{2}}\left\{  H_{k_{1},0,m}\ ,G_{k_{2},l,0}^{\lessgtr
}\right\}  +\epsilon_{kk_{1}k_{2}}\left\{  H_{k_{1},l,m}\ ,G_{k_{2}%
,0,0}^{\lessgtr}\right\} \\
+\epsilon_{ll_{1}l_{2}}\left\{  H_{0,l_{1},0}\ ,G_{k,l_{2},m}^{\lessgtr
}\right\}  +\epsilon_{ll_{1}l_{2}}\left\{  H_{k,l_{1},0}\ ,G_{0,l_{2}%
,m}^{\lessgtr}\right\} \\
+\epsilon_{ll_{1}l_{2}}\left\{  H_{0,l_{1},m}\ ,G_{k,l_{2},0}^{\lessgtr
}\right\}  +\epsilon_{ll_{1}l_{2}}\left\{  H_{k,l_{1},m}\ ,G_{0,l_{2}%
,0}^{\lessgtr}\right\} \\
+\epsilon_{m;m_{1},m_{2}}\left\{  H_{0,0.m_{1}}\ ,G_{k,l,m_{2}}^{\lessgtr
}\right\}  +\epsilon_{m;m_{1},m_{2}}\left\{  H_{0,l.m_{1}}\ ,G_{k,0,m_{2}%
}^{\lessgtr}\right\} \\
+\epsilon_{m;m_{1},m_{2}}\left\{  H_{k,0.m_{1}}\ ,G_{0,l,m_{2}}^{\lessgtr
}\right\}  +\epsilon_{m;m_{1},m_{2}}\left\{  H_{k,l.m_{1}}\ ,G_{0,0,m_{2}%
}^{\lessgtr}\right\}
\end{array}
\right] \nonumber\\
&  +\left[
\begin{array}
[c]{c}%
i\epsilon_{kk_{1}k_{2}}i\epsilon_{ll_{1}l_{2}}\left(  \left[  H_{k_{1}%
,l_{1},o},G_{k_{2},l_{2},m}^{\lessgtr}\right]  +\left[  H_{k_{1},l_{1}%
,m},G_{k_{2},l_{2},o}^{\lessgtr}\right]  \right) \\
+i\epsilon_{kk_{1}k_{2}}i\epsilon_{mm_{1}m_{2}}\left(  \left[  H_{k_{1}%
,o,m_{1}},G_{k_{2},l,m_{2}}^{\lessgtr}\right]  +\left[  H_{k_{1},l,m_{1}%
},G_{k_{2},o,m_{2}}^{\lessgtr}\right]  \right) \\
+i\epsilon_{ll_{1}l_{2}}i\epsilon_{mm_{1}m_{2}}\left(  \left[  H_{o,l_{1}%
,m_{1}},G_{k,l_{2},m_{2}}^{\lessgtr}\right]  +\left[  H_{k,l_{1},m_{1}%
},G_{o,l_{2},m_{2}}^{\lessgtr}\right]  \right) \\
+i\epsilon_{kk_{1}k_{2}}i\epsilon_{ll_{1}l_{2}}i\epsilon_{mm_{1}m_{2}}\left\{
H_{k_{1},l_{1},m_{1}},G_{k_{2},l_{2},m_{2}}^{\lessgtr}\right\}
\end{array}
\right]  \label{QT8-1}%
\end{align}%
\begin{align}
&  \left[  \Sigma^{\lessgtr},\mathrm{Re}G^{r}\right]  _{k,l,m}=\nonumber\\
&  =\left[
\begin{array}
[c]{c}%
\left[  \Sigma_{k,l,m}^{\lessgtr}\ ,\mathrm{Re}G_{0,0,0}^{r}\right]  +\left[
\Sigma_{k,l,0}^{\lessgtr}\ ,\mathrm{Re}G_{0,0,m}^{r}\right]  +\left[
\Sigma_{k,0,m}^{\lessgtr}\ ,\mathrm{Re}G_{0,l,0}^{r}\right]  +\left[
\Sigma_{0,l,m}^{\lessgtr}\ ,\mathrm{Re}G_{k,0,0}^{r}\right] \\
+\left[  \Sigma_{k,0,0}^{\lessgtr}\ ,\mathrm{Re}G_{0,l,m}^{r}\right]  +\left[
\Sigma_{0,0,m}^{\lessgtr}\ ,\mathrm{Re}G_{k,l,0}^{r}\right]  +\left[
\Sigma_{0,l,0}^{\lessgtr}\ ,\mathrm{Re}G_{k,0,m}^{r}\right]  +\left[
\Sigma_{o,o,o}^{\lessgtr}\ ,\mathrm{Re}G_{k.l.m}^{r}\right]
\end{array}
\right] \nonumber\\
&  +\left[
\begin{array}
[c]{c}%
\epsilon_{kk_{1}k_{2}}\left\{  \Sigma_{k_{1},0,0}^{\lessgtr}\ ,\mathrm{Re}%
G_{k_{2},l,m}^{r}\right\}  +\epsilon_{kk_{1}k_{2}}\left\{  \Sigma_{k_{1}%
,l,0}^{\lessgtr}\ ,\mathrm{Re}G_{k_{2},0,m}^{r}\right\} \\
+\epsilon_{kk_{1}k_{2}}\left\{  \Sigma_{k_{1},0,m}^{\lessgtr}\ ,\mathrm{Re}%
G_{k_{2},l,0}^{r}\right\}  +\epsilon_{kk_{1}k_{2}}\left\{  \Sigma_{k_{1}%
,l,m}^{\lessgtr}\ ,\mathrm{Re}G_{k_{2},0,0}^{r}\right\} \\
+\epsilon_{ll_{1}l_{2}}\left\{  \Sigma_{0,l_{1},0}^{\lessgtr}\ ,\mathrm{Re}%
G_{k,l_{2},m}^{r}\right\}  +\epsilon_{ll_{1}l_{2}}\left\{  \Sigma_{k,l_{1}%
,0}^{\lessgtr}\ ,\mathrm{Re}G_{0,l_{2},m}^{r}\right\} \\
+\epsilon_{ll_{1}l_{2}}\left\{  \Sigma_{0,l_{1},m}^{\lessgtr}\ ,\mathrm{Re}%
G_{k,l_{2},0}^{r}\right\}  +\epsilon_{ll_{1}l_{2}}\left\{  \Sigma_{k,l_{1}%
,m}^{\lessgtr}\ ,\mathrm{Re}G_{0,l_{2},0}^{r}\right\} \\
+\epsilon_{m;m_{1},m_{2}}\left\{  \Sigma_{0,0.m_{1}}^{\lessgtr}\ ,\mathrm{Re}%
G_{k,l,m_{2}}^{r}\right\}  +\epsilon_{m;m_{1},m_{2}}\left\{  \Sigma
_{0,l.m_{1}}^{\lessgtr}\ ,\mathrm{Re}G_{k,0,m_{2}}^{r}\right\} \\
+\epsilon_{m;m_{1},m_{2}}\left\{  \Sigma_{k,0.m_{1}}^{\lessgtr}\ ,\mathrm{Re}%
G_{0,l,m_{2}}^{r}\right\}  +\epsilon_{m;m_{1},m_{2}}\left\{  \Sigma
_{k,l.m_{1}}^{\lessgtr}\ ,\mathrm{Re}G_{0,0,m_{2}}^{r}\right\}
\end{array}
\right] \nonumber\\
&  +\left[
\begin{array}
[c]{c}%
i\epsilon_{kk_{1}k_{2}}i\epsilon_{ll_{1}l_{2}}\left(  \left[  \Sigma
_{k_{1},l_{1},o}^{\lessgtr}\ ,\mathrm{Re}G_{k_{2},l_{2},m}^{r}\right]
+\left[  \Sigma_{k_{1},l_{1},m}^{\lessgtr}\ ,\mathrm{Re}G_{k_{2},l_{2},o}%
^{r}\right]  \right) \\
+i\epsilon_{kk_{1}k_{2}}i\epsilon_{mm_{1}m_{2}}\left(  \left[  \Sigma
_{k_{1},o,m_{1}}^{\lessgtr}\ ,\mathrm{Re}G_{k_{2},l,m_{2}}^{r}\right]
+\left[  \Sigma_{k_{1},l,m_{1}}^{\lessgtr}\ ,\mathrm{Re}G_{k_{2},o,m_{2}}%
^{r}\right]  \right) \\
+i\epsilon_{ll_{1}l_{2}}i\epsilon_{mm_{1}m_{2}}\left(  \left[  \Sigma
_{o,l_{1},m_{1}}^{\lessgtr}\ ,\mathrm{Re}G_{k,l_{2},m_{2}}^{r}\right]
+\left[  \Sigma_{k,l_{1},m_{1}}^{\lessgtr}\ ,\mathrm{Re}G_{o,l_{2},m_{2}}%
^{r}\right]  \right) \\
+i\epsilon_{kk_{1}k_{2}}i\epsilon_{ll_{1}l_{2}}i\epsilon_{mm_{1}m_{2}}\left\{
\Sigma_{k_{1},l_{1},m_{1}}^{\lessgtr}\ ,\mathrm{Re}G_{k_{2},l_{2},m_{2}}%
^{r}\right\}
\end{array}
\right]  \label{QT8-2}%
\end{align}%
\begin{align}
&  \left\{  \Gamma,G^{\lessgtr}\right\}  _{k,l,m}=\nonumber\\
&  =\left[
\begin{array}
[c]{c}%
\left\{  \Gamma_{k,l,m}\ ,G_{0,0,0}^{\lessgtr}\right\}  +\left\{
\Gamma_{k,l,0}\ ,G_{0,0,m}^{\lessgtr}\right\}  +\left\{  \Gamma_{k,0,m}%
\ ,G_{0,l,0}^{\lessgtr}\right\}  +\left\{  \Gamma_{0,l,m}\ ,G_{k,0,0}%
^{\lessgtr}\right\} \\
+\left\{  \Gamma_{k,0,0}\ ,G_{0,l,m}^{\lessgtr}\right\}  +\left\{
\Gamma_{0,0,m}\ ,G_{k,l,0}^{\lessgtr}\right\}  +\left\{  \Gamma_{0,l,0}%
\ ,G_{k,0,m}^{\lessgtr}\right\}  +\left\{  \Gamma_{o,o,o}\ ,G_{k.l.m}%
^{\lessgtr}\right\}
\end{array}
\right] \nonumber\\
&  +\left[
\begin{array}
[c]{c}%
\epsilon_{kk_{1}k_{2}}\left[  \Gamma_{k_{1},0,0}\ ,G_{k_{2},l,m}^{\lessgtr
}\right]  +\epsilon_{kk_{1}k_{2}}\left[  \Gamma_{k_{1},l,0}\ ,G_{k_{2}%
,0,m}^{\lessgtr}\right] \\
+\epsilon_{kk_{1}k_{2}}\left[  \Gamma_{k_{1},0,m}\ ,G_{k_{2},l,0}^{\lessgtr
}\right]  +\epsilon_{kk_{1}k_{2}}\left[  \Gamma_{k_{1},l,m}\ ,G_{k_{2}%
,0,0}^{\lessgtr}\right] \\
+\epsilon_{ll_{1}l_{2}}\left[  \Gamma_{0,l_{1},0}\ ,G_{k,l_{2},m}^{\lessgtr
}\right]  +\epsilon_{ll_{1}l_{2}}\left[  \Gamma_{k,l_{1},0}\ ,G_{0,l_{2}%
,m}^{\lessgtr}\right] \\
+\epsilon_{ll_{1}l_{2}}\left[  \Gamma_{0,l_{1},m}\ ,G_{k,l_{2},0}^{\lessgtr
}\right]  +\epsilon_{ll_{1}l_{2}}\left[  \Gamma_{k,l_{1},m}\ ,G_{0,l_{2}%
,0}^{\lessgtr}\right] \\
+\epsilon_{m;m_{1},m_{2}}\left[  \Gamma_{0,0.m_{1}}\ ,G_{k,l,m_{2}}^{\lessgtr
}\right]  +\epsilon_{m;m_{1},m_{2}}\left[  \Gamma_{0,l.m_{1}}\ ,G_{k,0,m_{2}%
}^{\lessgtr}\right] \\
+\epsilon_{m;m_{1},m_{2}}\left[  \Gamma_{k,0.m_{1}}\ ,G_{0,l,m_{2}}^{\lessgtr
}\right]  +\epsilon_{m;m_{1},m_{2}}\left[  \Gamma_{k,l.m_{1}}\ ,G_{0,0,m_{2}%
}^{\lessgtr}\right]
\end{array}
\right] \nonumber\\
&  +\left[
\begin{array}
[c]{c}%
i\epsilon_{kk_{1}k_{2}}i\epsilon_{ll_{1}l_{2}}\left(  \left\{  \Gamma
_{k_{1},l_{1},o}\ ,G_{k_{2},l_{2},m}^{\lessgtr}\right\}  +\left\{
\Gamma_{k_{1},l_{1},m}\ ,G_{k_{2},l_{2},o}^{\lessgtr}\right\}  \right) \\
+i\epsilon_{kk_{1}k_{2}}i\epsilon_{mm_{1}m_{2}}\left(  \left\{  \Gamma
_{k_{1},o,m_{1}}\ ,G_{k_{2},l,m_{2}}^{\lessgtr}\right\}  +\left\{
\Gamma_{k_{1},l,m_{1}}\ ,G_{k_{2},o,m_{2}}^{\lessgtr}\right\}  \right) \\
+i\epsilon_{ll_{1}l_{2}}i\epsilon_{mm_{1}m_{2}}\left(  \left\{  \Gamma
_{o,l_{1},m_{1}}\ ,G_{k,l_{2},m_{2}}^{\lessgtr}\right\}  +\left\{
\Gamma_{k,l_{1},m_{1}}\ ,G_{o,l_{2},m_{2}}^{\lessgtr}\right\}  \right) \\
+i\epsilon_{kk_{1}k_{2}}i\epsilon_{ll_{1}l_{2}}i\epsilon_{mm_{1}m_{2}}\left[
\Gamma_{k_{1},l_{1},m_{1}}\ ,G_{k_{2},l_{2},m_{2}}^{\lessgtr}\right]
\end{array}
\right]  \label{QT8-3}%
\end{align}%
\begin{align}
&  \left\{  \Sigma^{\lessgtr},A\right\}  _{k,l,m}=\nonumber\\
&  =\left[
\begin{array}
[c]{c}%
\left\{  \Sigma_{k,l,m}^{\lessgtr}\ ,A_{0,0,0}\right\}  +\left\{
\Sigma_{k,l,0}^{\lessgtr}\ ,A_{0,0,m}\right\}  +\left\{  \Sigma_{k,0,m}%
^{\lessgtr}\ ,A_{0,l,0}\right\}  +\left\{  \Sigma_{0,l,m}^{\lessgtr
}\ ,A_{k,0,0}\right\} \\
+\left\{  \Sigma_{k,0,0}^{\lessgtr}\ ,A_{0,l,m}\right\}  +\left\{
\Sigma_{0,0,m}^{\lessgtr}\ ,A_{k,l,0}\right\}  +\left\{  \Sigma_{0,l,0}%
^{\lessgtr}\ ,A_{k,0,m}\right\}  +\left\{  \Sigma_{o,o,o}^{\lessgtr
}\ ,A_{k.l.m}\right\}
\end{array}
\right] \nonumber\\
&  +\left[
\begin{array}
[c]{c}%
\epsilon_{kk_{1}k_{2}}\left[  \Sigma_{k_{1},0,0}^{\lessgtr}\ ,A_{k_{2}%
,l,m}\right]  +\epsilon_{kk_{1}k_{2}}\left[  \Sigma_{k_{1},l,0}^{\lessgtr
}\ ,A_{k_{2},0,m}\right] \\
+\epsilon_{kk_{1}k_{2}}\left[  \Sigma_{k_{1},0,m}^{\lessgtr}\ ,A_{k_{2}%
,l,0}\right]  +\epsilon_{kk_{1}k_{2}}\left[  \Sigma_{k_{1},l,m}^{\lessgtr
}\ ,A_{k_{2},0,0}\right] \\
+\epsilon_{ll_{1}l_{2}}\left[  \Sigma_{0,l_{1},0}^{\lessgtr}\ ,A_{k,l_{2}%
,m}\right]  +\epsilon_{ll_{1}l_{2}}\left[  \Sigma_{k,l_{1},0}^{\lessgtr
}\ ,A_{0,l_{2},m}\right] \\
+\epsilon_{ll_{1}l_{2}}\left[  \Sigma_{0,l_{1},m}^{\lessgtr}\ ,A_{k,l_{2}%
,0}\right]  +\epsilon_{ll_{1}l_{2}}\left[  \Sigma_{k,l_{1},m}^{\lessgtr
}\ ,A_{0,l_{2},0}\right] \\
+\epsilon_{m;m_{1},m_{2}}\left[  \Sigma_{0,0.m_{1}}^{\lessgtr}\ ,A_{k,l,m_{2}%
}\right]  +\epsilon_{m;m_{1},m_{2}}\left[  \Sigma_{0,l.m_{1}}^{\lessgtr
}\ ,A_{k,0,m_{2}}\right] \\
+\epsilon_{m;m_{1},m_{2}}\left[  \Sigma_{k,0.m_{1}}^{\lessgtr}\ ,A_{0,l,m_{2}%
}\right]  +\epsilon_{m;m_{1},m_{2}}\left[  \Sigma_{k,l.m_{1}}^{\lessgtr
}\ ,A_{0,0,m_{2}}\right]
\end{array}
\right] \nonumber\\
&  +\left[
\begin{array}
[c]{c}%
i\epsilon_{kk_{1}k_{2}}i\epsilon_{ll_{1}l_{2}}\left(  \left\{  \Sigma
_{k_{1},l_{1},o}^{\lessgtr}\ ,A_{k_{2},l_{2},m}\right\}  +\left\{
\Sigma_{k_{1},l_{1},m}^{\lessgtr}\ ,A_{k_{2},l_{2},o}\right\}  \right) \\
+i\epsilon_{kk_{1}k_{2}}i\epsilon_{mm_{1}m_{2}}\left(  \left\{  \Sigma
_{k_{1},l_{1},o}^{\lessgtr}\ ,A_{k_{2},l_{2},m}\right\}  +\left\{
\Sigma_{k_{1},l_{1},m}^{\lessgtr}\ ,A_{k_{2},l_{2},o}\right\}  \right) \\
+i\epsilon_{ll_{1}l_{2}}i\epsilon_{mm_{1}m_{2}}\left(  \left\{  \Sigma
_{k_{1},l_{1},o}^{\lessgtr}\ ,A_{k_{2},l_{2},m}\right\}  +\left\{
\Sigma_{k_{1},l_{1},m}^{\lessgtr}\ ,A_{k_{2},l_{2},o}\right\}  \right) \\
+i\epsilon_{kk_{1}k_{2}}i\epsilon_{ll_{1}l_{2}}i\epsilon_{mm_{1}m_{2}}\left[
\Sigma_{k_{1},l_{1},m_{1}}^{\lessgtr}\ ,A_{k_{2},l_{2},m_{2}}\right]
\end{array}
\right]  \label{QT8-4}%
\end{align}

\section{\label{multiband}Comparison with Multiband Quantum Transport
Equations}

Here we make a comparison for $N_{s}=2$ with multi-band quantum spin transport
equations of Buot et al \cite{ref14}. Note that Ref. \cite{ref14} essentially
calculates Dirac spin semiconductor Bloch equations (DSSBEs) of the spin
magnetization quantum transport equations (SMQTEs) in the electron-hole
picture. We immediately see that the expression of $G_{k,o}^{<} $, i.e.,
pseudo-spin independent Pauli Dirac spin transport equations, is identical to
the expression of the transport equation of a conduction-band spin transport
of $\vec{S}_{cc}$ of Ref. \cite{ref14}, upon using the correspondence of the
symbols used in Ref. \cite{ref14} to the symbols used in this paper, i.e.,
$\Xi_{\gamma\gamma}\Longrightarrow\Sigma_{\gamma\gamma}$, $\delta_{\alpha
\beta}^{\gtrless}\Longrightarrow\Sigma_{\alpha\beta}^{\gtrless}$, and
$S^{\gtrless}\Longrightarrow G^{\gtrless}$.

By transforming the DSSBEs to the electron picture using Tables 2 - 4, and
using the following relations,%
\begin{align*}
iA\left(  1,2\right)   &  =-2i\mathrm{Im}G^{r}=-\left(  G^{>}\left(
1,2\right)  -G^{<}\left(  1,2\right)  \right)  ,\\
i\Gamma\left(  1,2\right)   &  =-2i\mathrm{Im}\Sigma^{r}=-\left(  \Sigma
^{>}\left(  1,2\right)  -\Sigma^{<}\left(  1,2\right)  \right)
\end{align*}
in combining the equations to obtain the equations for the components of the
magnetization density tensors, the resulting equations exactly reproduce the
equations for $N_{s}=2$ of this paper. The virtue of doing the calculations
from the DSSBEs is that it give us an explicit expression for the sort of
terms containing two Levi-Civita tensors. The calculation shows the
equivalence, for $m=x$-component of pseudospin with $k$ an arbitrary component
of Dirac spin, of the expression,
\begin{align}
&  \left(  \vec{\Sigma}_{y}^{<}\ \times\vec{A}_{z}-\vec{\Sigma}_{z}^{<}%
\times\vec{A}_{y}\right)  +\left(  \vec{A}_{y}\times\vec{\Sigma}_{z}^{<}%
-\vec{A}_{z}\times\vec{\Sigma}_{y}^{<}\right) \nonumber\\
&  =\left\{
\begin{array}
[c]{c}%
\epsilon_{kk_{1}k_{2}}\epsilon_{mm_{1}m_{2}}\vec{\Sigma}_{k_{1,}m_{1}}^{<}%
\vec{A}_{k_{2}m_{2}}\\
+\epsilon_{kk_{1}k_{2}}\epsilon_{mm_{1}m_{2}}\vec{A}_{k_{1,}m_{1}}\vec{\Sigma
}_{k_{2}m_{2}}^{<}%
\end{array}
\right\} \nonumber\\
&  =\epsilon_{kk_{1}k_{2}}\epsilon_{mm_{1}m_{2}}\left\{  \vec{\Sigma}%
_{k_{1,}m_{1}}^{<},\vec{A}_{k_{2}m_{2}}\right\}  \label{eps-eps}%
\end{align}
Note that the $\epsilon_{kk_{1}k_{2}}\epsilon_{mm_{1}m_{2}}\left\{
\vec{\Sigma}_{k_{1,}m_{1}}^{<},\vec{A}_{k_{2}m_{2}}\right\}  $ and similar
other terms are completely symmetric in the simultaneous exchange of $k_{1}%
$and $k_{2}$ and $m_{1}$ and $m_{2}$, respectively. The last line of Eq.
(\ref{eps-eps}) holds for arbitrary components, $k$ and $m$. Note that in this
paper, we have generalized the spinor Hamiltonian, Eq. (\ref{absorbed_in_H}),
to account for the spinor $\mathrm{Re}\Sigma^{r}$ \cite{raja,rajaetal} and/or
magnetic field and/or Dresselhaus and/or Rashba spin-orbit coupling.

\section{Conversion Map from Electron-Hole Picture to Electron Picture}

We have the mapping between electron picture and electron-hole picture using
the following table,\cite{ref10}

\begin{center}
$\overset{\text{Table 2. Mapping from Electron to Electron-Hole Picture}}{%
\begin{tabular}
[c]{|c|c|c|c|}\hline
$\mathbf{electron}\ picture$ & $\left\langle e-\ field\right\rangle $ &
$\left\langle e-h\ \ field\right\rangle $ & $\mathbf{e-h}\ \ picture$\\\hline
$-i\hslash G_{vv}^{<}\left(  12\right)  $ & $\left\langle \psi_{v}^{\dagger
}\left(  2\right)  \psi_{v}\left(  1\right)  \right\rangle $ & $\left\langle
\phi_{v}\left(  2\right)  \phi_{v}^{\dagger}\left(  1\right)  \right\rangle $
& $i\hslash G_{vv}^{h,>T}\left(  12\right)  $\\\hline
$-i\hslash G_{vc}^{<}\left(  12\right)  $ & $\left\langle \psi_{c}^{\dagger
}\left(  2\right)  \psi_{v}\left(  1\right)  \right\rangle $ & $\left\langle
\psi_{c}^{\dagger}\left(  2\right)  \phi_{v}^{\dagger}\left(  1\right)
\right\rangle $ & $-i\hslash g_{ee,vc}^{e-h,<}\left(  12\right)  $\\\hline
$-i\hslash G_{cv}^{<}\left(  12\right)  $ & $\left\langle \psi_{v}^{\dagger
}\left(  2\right)  \psi_{c}\left(  1\right)  \right\rangle $ & $\left\langle
\phi_{v}\left(  2\right)  \psi_{c}\left(  1\right)  \right\rangle $ &
$-i\hslash g_{hh,cv}^{e-h,<}\left(  12\right)  $\\\hline
$-i\hslash G_{cc}^{<}\left(  12\right)  $ & $\left\langle \psi_{c}^{\dagger
}\left(  2\right)  \psi_{c}\left(  1\right)  \right\rangle $ & $\left\langle
\psi_{c}^{\dagger}\left(  2\right)  \psi_{c}\left(  1\right)  \right\rangle $
& $-i\hslash G_{cc}^{e-h,<}\left(  12\right)  $\\\hline
$i\hslash G_{vv}^{>T}\left(  12\right)  $ & $\left\langle \psi_{v}\left(
2\right)  \psi_{v}^{\dagger}\left(  1\right)  \right\rangle $ & $\left\langle
\phi_{v}^{\dagger}\left(  2\right)  \phi_{v}\left(  1\right)  \right\rangle $
& $-i\hslash G_{vv}^{h,<}\left(  12\right)  $\\\hline
$i\hslash G_{vv}^{>}\left(  12\right)  $ & $\left\langle \psi_{v}\left(
1\right)  \psi_{v}^{\dagger}\left(  2\right)  \right\rangle $ & $\left\langle
\phi_{v}^{\dagger}\left(  1\right)  \phi_{v}\left(  2\right)  \right\rangle $
& $-i\hslash G_{vv}^{h,<T}\left(  12\right)  $\\\hline
\end{tabular}
\ }$
\end{center}

We also have the following Tables, which can also be similarly applied to the self-energies,

\begin{center}
$\overset{\text{Table 3.}}{%
\begin{tabular}
[c]{|c|c|}\hline
$\mathbf{electron}\ picture$ & $\mathbf{e-h}\ \ picture$\\\hline
$G_{vv}^{r}\left(  12\right)  $ & $-G_{vv}^{e-h,aT}\left(  12\right)
$\\\hline
$G_{vc}^{r}\left(  12\right)  $ & $g_{ee,vc}^{e-h,r}\left(  12\right)
$\\\hline
$G_{cv}^{r}\left(  12\right)  $ & $g_{hh,cv}^{e-h,r}\left(  12\right)
$\\\hline
$G_{cc}^{r}\left(  12\right)  $ & $G_{cc}^{e-h,r}\left(  12\right)  $\\\hline
\end{tabular}
\ }\overset{\text{Table 4.}}{\ \
\begin{tabular}
[c]{|c|c|}\hline
$\mathbf{electron}\ picture$ & $\mathbf{e-h}\ \ picture$\\\hline
$G_{vv}^{a}\left(  12\right)  $ & $-G_{vv}^{e-h,rT}\left(  12\right)
$\\\hline
$G_{vc}^{a}\left(  12\right)  $ & $g_{ee,vc}^{e-h,a}\left(  12\right)
$\\\hline
$G_{cv}^{a}\left(  12\right)  $ & $g_{hh,cv}^{e-h,a}\left(  12\right)
$\\\hline
$G_{cc}^{a}\left(  12\right)  $ & $G_{cc}^{e-h,a}\left(  12\right)  $\\\hline
\end{tabular}
\ }$
\end{center}

\section{\label{lww}Transport Equations in Phase Space}

The transport equations in phase space are obtained by applying the "lattice"
Weyl transformation (although continuum approximation is often employed in
this paper, this is not essential and we use the word "lattice" when referring
to solid-state problems \cite{bj,jb,ref10,trH}) of the propagator equations
for the respective excitations by using the following identities:

\subsection{"Lattice" Weyl transform of differential operators}%

\begin{equation}
i\hbar\left(  \frac{\partial}{\partial t_{1}}+\frac{\partial}{\partial t_{2}%
}\right)  Y\left(  t_{1},t_{2}\right)  \Leftrightarrow i\hbar\frac{\partial
}{\partial t}Y\left(  E,t\right)  \label{eq8.34a}%
\end{equation}

\begin{equation}
\left(  \frac{\partial^{2}}{\partial t_{1}^{2}}-\frac{\partial^{2}}{\partial
t_{2}^{2}}\right)  Y\left(  t_{1},t_{2}\right)  \Leftrightarrow\frac{2i}%
{\hbar}E\frac{\partial}{\partial t}Y\left(  E,t\right)  \label{eq8.34b}%
\end{equation}%
\begin{equation}
\left(  \nabla_{1}^{2}-\nabla_{2}^{2}\right)  Y\left(  r_{1},r_{2}\right)
\Leftrightarrow\frac{2i}{\hbar}p\cdot\nabla_{q}Y\left(  p,q\right)
\label{eq8.34c}%
\end{equation}

\subsection{Lattice Weyl transform of product of arbitrary operators}

\subsubsection{Poisson bracket expansion}

In terms of differential "Poisson bracket" operator,
\begin{equation}
AB\left(  p,q\right)  =\exp\frac{\hbar}{i}\left(  \frac{\partial^{\left(
a\right)  }}{\partial p}\cdot\frac{\partial^{\left(  b\right)  }}{\partial
q}-\frac{\partial^{\left(  a\right)  }}{\partial q}\cdot\frac{\partial
^{\left(  b\right)  }}{\partial p}\right)  \ a\left(  p,q\right)  \ b\left(
p,q\right)  \label{eq8.35}%
\end{equation}
Thus, we are lead to the lattice Weyl transform of a commutator, $\left[
A,B\right]  \left(  p,q\right)  $, and an anticommutator, $\left\{
A,B\right\}  \left(  p,q\right)  $ as%
\begin{align}
\left[  A,B\right]  \left(  p,q\right)   &  =\cos cos\Lambda\left[  a\left(
p,q\right)  \ b\left(  p,q\right)  -b\left(  p,q\right)  a\left(  p,q\right)
\right] \nonumber\\
&  -i\sin\Lambda\left[  a\left(  p,q\right)  \ b\left(  p,q\right)  +b\left(
p,q\right)  a\left(  p,q\right)  \right]  \label{eq8.38}%
\end{align}%
\begin{align}
\left\{  A,B\right\}  \left(  p,q\right)   &  =\cos cos\Lambda\left[  a\left(
p,q\right)  \ b\left(  p,q\right)  +b\left(  p,q\right)  a\left(  p,q\right)
\right] \nonumber\\
&  -i\sin\Lambda\left[  a\left(  p,q\right)  \ b\left(  p,q\right)  -b\left(
p,q\right)  a\left(  p,q\right)  \right]  \label{eq8.39}%
\end{align}
where $\Lambda=\frac{\hbar}{2}\left(  \frac{\partial^{\left(  a\right)  }%
}{\partial p}\cdot\frac{\partial^{\left(  b\right)  }}{\partial q}%
-\frac{\partial^{\left(  a\right)  }}{\partial q}\cdot\frac{\partial^{\left(
b\right)  }}{\partial p}\right)  $. Note that so far no semi-classical
approximation is envolved in the derivation.

\subsubsection{Integral representation}

In terms of integral operators, we have,%
\begin{align}
AB\left(  p,q\right)   &  =\frac{1}{\left(  \hbar^{4}\right)  ^{2}}\int
dp^{\prime}\ dq^{\prime}\ K_{A}^{+}\left(  p,q;p^{\prime}q^{\prime}\right)
\ b\left(  p^{\prime},q^{\prime}\right) \nonumber\\
&  =\frac{1}{\left(  \hbar^{4}\right)  ^{2}}\int dp^{\prime}\ dq^{\prime
}\ a\left(  p^{\prime},q^{\prime}\right)  \ K_{B}^{-}\left(  p,q;p^{\prime
}q^{\prime}\right)  \ \label{eq8.36}%
\end{align}
where integral kernels are defined by%
\begin{equation}
K_{Y}^{\pm}\left(  p,q;p^{\prime}q^{\prime}\right)  =\int dudv\exp\left\{
\frac{i}{\hbar}\left[  \left(  p-p^{\prime}\right)  \cdot v+\left(
q-q^{\prime}\right)  \cdot u\right]  \right\}  y\left(  p\pm\frac{u}{2}%
,q\mp\frac{v}{2}\right)  \text{.} \label{eq8.37}%
\end{equation}
Therefore, in terms of integral operators,%
\begin{equation}
\left[  A,B\right]  \left(  p,q\right)  =\frac{1}{\left(  \hbar^{4}\right)
^{2}}\int dp^{\prime}\ dq^{\prime}\ \left[  K_{A}^{+}\left(  p,q;p^{\prime
}q^{\prime}\right)  \ b\left(  p^{\prime},q^{\prime}\right)  -b\left(
p^{\prime},q^{\prime}\right)  \ K_{A}^{-}\left(  p,q;p^{\prime}q^{\prime
}\right)  \ \right]  \label{eq8.40}%
\end{equation}%
\begin{equation}
\left\{  A,B\right\}  \left(  p,q\right)  =\frac{1}{\left(  \hbar^{4}\right)
^{2}}\int dp^{\prime}\ dq^{\prime}\ \left[  K_{A}^{+}\left(  p,q;p^{\prime
}q^{\prime}\right)  \ b\left(  p^{\prime},q^{\prime}\right)  +b\left(
p^{\prime},q^{\prime}\right)  \ K_{A}^{-}\left(  p,q;p^{\prime}q^{\prime
}\right)  \ \right]  \label{eq8.41}%
\end{equation}

\subsubsection{Local and nonlocal integral kernels}

The above expressions simplify considerably when the lattice Weyl transform
$a\left(  p,q\right)  $ and $b\left(  p,q\right)  $ are scalar functions. We
have for the integral representations,%

\begin{align}
\left[  A,B\right]  \left(  p,q\right)   &  =\frac{1}{\left(  \hbar
^{4}\right)  ^{2}}\int dp^{\prime}\ dq^{\prime}dudv\exp\left\{  \frac{i}%
{\hbar}\left[  \left(  p-p^{\prime}\right)  \cdot v+\left(  q-q^{\prime
}\right)  \cdot u\right]  \right\} \nonumber\\
&  \times\left[
\begin{array}
[c]{c}%
a\left(  p+\frac{u}{2},q-\frac{v}{2}\right) \\
-a\left(  p-\frac{u}{2},q+\frac{v}{2}\right)
\end{array}
\right]  \ b\left(  p^{\prime},q^{\prime}\right)  \label{nonlocalK}%
\end{align}%
\begin{align}
\left\{  A,B\right\}  \left(  p,q\right)   &  =\frac{1}{\left(  \hbar
^{4}\right)  ^{2}}\int dp^{\prime}\ dq^{\prime}dudv\exp\left\{  \frac{i}%
{\hbar}\left[  \left(  p-p^{\prime}\right)  \cdot v+\left(  q-q^{\prime
}\right)  \cdot u\right]  \right\} \nonumber\\
&  \times\left[
\begin{array}
[c]{c}%
a\left(  p+\frac{u}{2},q-\frac{v}{2}\right) \\
+a\left(  p-\frac{u}{2},q+\frac{v}{2}\right)
\end{array}
\right]  \ b\left(  p^{\prime},q^{\prime}\right)  \label{localK}%
\end{align}
Thus, in its integral forms, which are better suited for carrying numerical
simulations, Eq. (\ref{nonlocalK}) exhibits a \textit{nonlocal} kernel for the
commutator, $\left[  A,B\right]  \left(  p,q\right)  $, whereas
Eq.(\ref{localK}) shows an '\textit{averaging'} type of kernel tied at point
$\left(  p,q\right)  $ or \textit{local} kernel for the anticommutator
$\left\{  A,B\right\}  \left(  p,q\right)  $ to a leading order. Indeed, it is
precisely the nonlocal character of the kernel in Eq. (\ref{nonlocalK}) that
is entirely responsible for quantum-tunneling transport in resonant tunneling
diodes \cite{bj,jb} which exhibit an entirely quantum nonlocality characteristics.

\end{document}